\definecolor{webgreen}{rgb}{0,.5,0}
\definecolor{webblue}{rgb}{0,0,.5}
\newtheorem{defn}{Definition}
\newtheorem{thm}[defn]{Theorem}
\newtheorem{prop}[defn]{Proposition}
\newtheorem{cor}[defn]{Corollary}
\newtheorem{lem}[defn]{Lemma}
\newtheorem{rem}[defn]{Remark}
\newtheorem{ex}[defn]{Example}
\numberwithin{defn}{section}
\numberwithin{equation}{section}
\newcommand{\R}{\mathbb{R}}
\newcommand{\N}{\mathbb{N}}
\newcommand{\C}{\mathbb{C}}
\newcommand{\id}{\mathrm{id}}
\newcommand{\ket}[1]{| #1 \rangle}
\newcommand{\bra}[1]{\langle #1 |}
\newcommand{\braket}[2]{\left\langle #1 \mid #2 \right\rangle}
\newcommand{\ketbra}[2]{\left|#1\right\rangle\!\!\left\langle #2\right|}
\newcommand{\tr}{\mathrm{Tr}}
\newcommand{\proj}[1]{\ensuremath{|#1\rangle \langle #1|}}
\renewcommand{\rho}{\varrho}
\newcommand{\D}{\mathrm{d}}
\newcommand{\supp}{\mathrm{supp}}
\newcommand{\spa}{\mathrm{span}}
\newcommand{\Hi}{\mathcal{H}}
\newcommand{\hi}{\Hi}
\newcommand{\one}{\mathds 1}
\newcommand{\opr}{\mathcal B}
\newcommand{\SKQES}{\textsf{QES}}
\newcommand{\ITS}{\textsf{ITS}}
\newcommand{\IND}{\textsf{IND}}
\newcommand{\ITNM}{\textsf{NM}}
\newcommand{\ABWNM}{\textsf{ABW-NM}}
\newcommand{\ABW}{\textsf{ABW}}
\newcommand{\DNS}{\textsf{DNS}}
\newcommand{\GYZ}{\textsf{GYZ}}
\newcommand{\acc}{\textsf{acc}}
\newcommand{\rej}{\textsf{rej}}
\newcommand{\expref}[2]{\texorpdfstring{\hyperref[#2]{#1~\ref{#2}}}{#1~\ref{#2}}}
\newcommand{\Enc}{\ensuremath{\mathsf{Enc}}}
\newcommand{\Dec}{\ensuremath{\mathsf{Dec}}}
\newcommand{\poly}{\operatorname{poly}}
\newcommand{\ttag}{\Pi^{\operatorname{tag}}_t}
\renewcommand*\l@author[2]{}
\renewcommand*\l@title[2]{}
\title{Quantum non-malleability and authentication}
\author{} \institute{}
\date{\today}
\author{Gorjan Alagic \and Christian Majenz}
\institute{QMATH, Department of Mathematical Sciences\\ University of Copenhagen\\~\\ \email{galagic@gmail.com} \qquad \email{majenz@math.ku.dk}}
\begin{document}

\maketitle

\begin{abstract}
In encryption, non-malleability is a highly desirable property: it ensures that adversaries cannot manipulate the plaintext by acting on the ciphertext. In~\cite{ambainis2009nonmalleable}, Ambainis et al. gave a definition of non-malleability for the encryption of quantum data. In this work, we show that this definition is too weak, as it allows adversaries to ``inject'' plaintexts of their choice into the ciphertext. We give a new definition of quantum non-malleability which resolves this problem. Our definition is expressed in terms of entropic quantities, considers stronger adversaries, and does not assume secrecy. Rather, we prove that \emph{quantum non-malleability implies secrecy}; this is in stark contrast to the classical setting, where the two properties are completely independent.  For unitary schemes, our notion of non-malleability is equivalent to encryption with a two-design (and hence also to the definition of~\cite{ambainis2009nonmalleable}).

\hspace{10pt} Our techniques also yield new results regarding the closely-related task of quantum authentication. We show that ``total authentication'' (a notion recently proposed by Garg et al.~\cite{Garg2017}) can be satisfied with two-designs, a significant improvement over the eight-design construction of~\cite{Garg2017}. We also show that, under a mild adaptation of the rejection procedure, both total authentication and our notion of non-malleability yield quantum authentication as defined by Dupuis et al.~\cite{dupuis2012actively}. 
\end{abstract}

\section{Introduction}

\subsubsection{Background.}
In its most basic form, encryption ensures secrecy in the presence of eavesdroppers. Besides secrecy, another desirable property is \emph{non-malleability}, which guarantees that an active adversary cannot modify the plaintext by manipulating the ciphertext. In the classical setting, secrecy and non-malleability are independent: there are schemes which satisfy secrecy but are malleable, and schemes which are non-malleable but transmit the plaintext in the clear. If both secrecy and non-malleability is desired, then pairwise-independent permutations provide information-theoretically perfect (one-time) security~\cite{kawachi2011characterization}. In the computational security setting, non-malleability can be achieved by MACs, and ensures chosen-ciphertext security for authenticated encryption.

In the setting of quantum information, encryption is the task of transmitting quantum states over a completely insecure quantum channel. Information-theoretic secrecy for quantum encryption is well-understood. Non-malleability, on the other hand, has only been studied in one previous work, by Ambainis, Bouda and Winter~\cite{ambainis2009nonmalleable}. Their definition (which we will call \ABW-non-malleability, or \ABWNM) requires that the scheme satisfies secrecy, and that the ``effective channel'' $\Dec \circ \Lambda \circ \Enc$ of any adversary $\Lambda$ amounts to either the identity map or replacement by some fixed state. In the case of unitary schemes, $\ABWNM$ is equivalent to encrypting with a unitary two-design. Unitary two-designs are a natural quantum analogue of pairwise-independent permutations, and can be efficiently constructed in a number of ways (see, e.g., ~\cite{Brandao2016a,dankert2009exact}.)

While quantum non-malleability has only been considered by~\cite{ambainis2009nonmalleable}, the closely-related task of quantum authentication (where decryption is allowed to reject) has received significant attention (see, e.g.,~\cite{ABE10, barnum2002authentication, Broadbent2016, dupuis2012actively,Hayden2016,Portmann2017, Garg2017}.) The widely-adopted definition of Dupuis, Nielsen and Salvail asks that the averaged effective channel of any adversary is close to a map which does not touch the plaintext~\cite{dupuis2012actively}; we refer to this notion as \DNS-authentication. Recent work by Garg, Yuen and Zhandry~\cite{Garg2017} established another notion of quantum authentication, which they call ``total authentication.'' The notion of total authentication has two major differences from previous definitions: (i.) it asks for success with high probability over the choice of keys, rather than simply on average, and (ii.) it makes no demands whatsoever in the case that decryption rejects. We refer to this notion of quantum authentication as \GYZ-authentication. In~\cite{Garg2017}, it is shown that \GYZ-authentication can be satisfied with unitary eight-designs. In~\cite{Hayden2016,Portmann2017}, a different approach is taken. Here, instead of using specialized security definitions, certain schemes are proven to provide an authenticated channel in the universal composability \cite{Unruh2004,Ben-Or2004a} and abstract cryptography \cite{Maurer2011} frameworks, respectively.

\subsubsection{This work.}

In this work, we devise a new definition of non-malleability (denoted \ITNM) for quantum encryption, improving on $\ABWNM$ in a number of ways. First, our definition is expressed in terms of entropic quantities, which allows us to bring several quantum-information-theoretic techniques to bear (such as decoupling.) Second, we consider more powerful adversaries, which can possess side information about the plaintext. Third, we remove the possibility of a ``plaintext injection'' attack, whereby an adversary against an $\ABWNM$ scheme can send a plaintext of their choice to the receiver. Finally, our definition does not demand secrecy; instead, we show that \emph{quantum secrecy is a consequence of quantum non-malleability.} This is a significant departure from the classical case, and is analogous to the fact that quantum authentication implies secrecy~\cite{barnum2002authentication}.

The primary consequence of our work is twofold: first, encryption with unitary two-designs satisfies all of the above notions of quantum non-malleability; second, when equipped with blank ``tag'' qubits, the same scheme also satisfies all of the above notions of quantum authentication. A more detailed summary of the results is as follows. For schemes which have unitary encryption maps, we prove that $\ITNM$ is equivalent to encryption with unitary two-designs, and hence also to $\ABWNM$. For non-unitary schemes, we prove a characterization theorem for $\ITNM$ schemes that shows that \ITNM~implies \ABWNM, and provide a strong separation example between $\ITNM$ and $\ABWNM$ (the aforementioned plaintext injection attack). In the case of \GYZ~authentication, we prove that two-designs (with tags) are sufficient, a significant improvement over the state-of-the-art, which requires eight-designs~\cite{Garg2017}. Moreover, the simulation of adversaries in this proof is efficient, in the sense of Broadbent and Wainewright~\cite{Broadbent2016}. Finally, we show that \GYZ authentication implies \DNS-authentication, and that equipping an arbitrary $\ITNM$ scheme with tags yields \DNS-authentication.

We remark that, after the initial version of our results was completed and submitted, an independent work of C. Portmann on quantum authentication appeared~\cite{Portmann2017}; it gives an alternative proof that \GYZ-authentication can be satisfied by the 2-design scheme. 

\subsection{Summary of contributions}

A more technical summary of our contributions follows. We remark here that all of our results concern information-theoretic security notions for symmetric-key encryption of quantum data. 

\subsubsection{Quantum non-malleability.}

Our first set of results is concerned with non-malleability. The results hold in both the perfect setting (\expref{Section}{sec:zero-error}) and the approximate setting (\expref{Section}{sec:appr}). 
\begin{enumerate}

	\item \textbf{New definition of non-malleability.} We give a new definition of quantum non-malleability (\ITNM), in terms of the information gain of an adversary's \emph{effective attack} on the plaintext. The relevant quantum registers are: plaintext $A$, ciphertext $C$, user's reference $R$, and adversary's side information $B$.
\begin{defn}[$\ITNM$, informal]
A scheme is non-malleable (\ITNM) if for any $\rho_{ABR}$ and any attack $\Lambda_{CB \to C\tilde B}$, the effective attack $\tilde \Lambda_{AB \to A\tilde B}$ satisfies
$$
	I(AR:\tilde B)_{\tilde\Lambda(\rho)} \leq I(AR:B)_\rho + h(p_{=}(\Lambda,\rho)).
$$
\end{defn}
The binary entropy term is necessary because adversaries can always simply record whether they disturbed the ciphertext (see \expref{Definition}{def:qNM}).

		\item \textbf{Results on non-malleability.} Our first result is that $\ITNM$ implies secrecy. 
	\begin{thm}[informal]
		For quantum encryption, non-malleability implies secrecy.
	\end{thm}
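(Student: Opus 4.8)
The plan is to derive secrecy from non-malleability by feeding $\ITNM$ a single, carefully chosen attack whose right-hand side collapses to zero. I take the adversary's initial side information $B$ to be trivial, so that $I(AR:B)_\rho = 0$, and I let the input $\rho_{AR}$ be the maximally entangled state between the plaintext $A$ and a purifying reference $R$ of the same dimension. Recall that information-theoretic secrecy of a quantum scheme is the statement that the key-averaged encryption map $\rho \mapsto \mathbb{E}_k \Enc_k(\rho)$ is a constant (replacement) channel; by the Choi--Jamiolkowski correspondence this is equivalent to the ciphertext $C$ decoupling from $R$ on the maximally entangled input, i.e. $I(R:C)_{\mathbb{E}_k\Enc_k(\rho)} = 0$. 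So it suffices to force this one mutual information to vanish.

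The attack I use is the deterministic ``replacement'' channel $\Lambda_{\mathrm{repl}}\colon C \to C\tilde B$ defined by $\Lambda_{\mathrm{repl}}(\xi_C) = \tau_C \otimes \xi_{\tilde B}$, which swaps the intercepted ciphertext into the side register $\tilde B$ and re-prepares the wire $C$ in a fixed state $\tau$. First I would trace through the effective attack $\tilde\Lambda = \Dec \circ \Lambda_{\mathrm{repl}} \circ \Enc$: the reference $R$ is never touched, and after the swap the register $\tilde B$ carries exactly the ciphertext, so the marginal state on $R\tilde B$ coincides with the key-averaged state on $RC$ produced by $\Enc$. Monotonicity of mutual information under discarding $A$ then gives
\[
  I(AR:\tilde B)_{\tilde\Lambda(\rho)} \;\geq\; I(R:\tilde B)_{\tilde\Lambda(\rho)} \;=\; I(R:C)_{\mathbb{E}_k\Enc_k(\rho)}.
\]
This reduction, rather than an attempt to decouple $A$ directly, is what lets me sidestep the key-correlations that $\Dec$ reintroduces into the $A$ register.

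Next I would account for the disturbance term. Since $\Lambda_{\mathrm{repl}}$ deterministically discards the true ciphertext, the adversary never leaves $C$ intact, so the ``no-disturbance'' probability is $p_{=}(\Lambda_{\mathrm{repl}},\rho)=0$ and the binary-entropy correction vanishes, $h(p_{=}) = h(0) = 0$. Substituting into the $\ITNM$ inequality yields
\[
  I(R:C)_{\mathbb{E}_k\Enc_k(\rho)} \;\leq\; I(AR:\tilde B)_{\tilde\Lambda(\rho)} \;\leq\; I(AR:B)_\rho + h(p_{=}) \;=\; 0,
\]
and since mutual information is nonnegative this forces $I(R:C) = 0$ on the maximally entangled input. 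By the equivalence recalled above, the scheme is secret; in the approximate setting the same attack applies verbatim, with the slack in the $\ITNM$ inequality converting directly into an approximate-secrecy bound.

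I expect the only genuinely delicate point to be the bookkeeping around $p_{=}$: one must confirm that the formal definition in \expref{Definition}{def:qNM} really evaluates $h(p_{=})$ to $0$ for the deterministic replacement attack, so that the right-hand side collapses completely. The remaining ingredients---that swapping the ciphertext into $\tilde B$ identifies $I(R:\tilde B)$ with the ciphertext--reference correlation, that monotonicity removes the $A$ register, and that a vanishing ciphertext--reference correlation on the maximally entangled input is equivalent to secrecy---are routine once the registers are tracked carefully.
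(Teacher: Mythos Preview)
Your worry about the ``only genuinely delicate point'' is exactly where the argument breaks. The replacement attack $\Lambda_{\mathrm{repl}}(\xi_C)=\tau_C\otimes\xi_{\tilde B}$ does \emph{not} have $p_=(\Lambda_{\mathrm{repl}},\rho)=0$. Computing directly from \eqref{eq:p-equals}: applying $\Lambda_{\mathrm{repl}}$ to the $C$ leg of $\phi^+_{CC'}$ gives $\tau_C\otimes\phi^+_{\tilde B C'}$, and then
\[
p_= \;=\; \tr\bigl[(\phi^+_{CC'}\otimes\mathds 1_{\tilde B})(\tau_C\otimes\phi^+_{\tilde B C'})\bigr]
\;=\;\tr\bigl[\phi^+_{CC'}(\tau_C\otimes\tau_{C'})\bigr]\;=\;\frac{1}{|C|^2}\,.
\]
The quantity $p_=$ is the Choi-level overlap with the identity channel, not an operational ``probability of doing nothing''; every constant-output channel has overlap $1/|C|^2$, regardless of which fixed state you re-prepare. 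So your right-hand side is $h(|C|^{-2})>0$, and you only conclude $I(R:C)\le h(|C|^{-2})$, which is neither exact secrecy nor the paper's approximate bound $2\varepsilon$.

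The paper's proof uses essentially your idea but repairs exactly this point: it starts from the ciphertext-extraction map $\Theta$ (your $\Lambda_{\mathrm{repl}}$) and then \emph{projects its CJ state onto} $\Pi^-_{CC'}=\mathds 1-\phi^+_{CC'}$, renormalising to obtain an attack $\Lambda$ with $p_=(\Lambda)=0$ by construction. The price is that the $\tilde B R$ marginal is no longer literally the ciphertext--reference state $\gamma_{CR}=E_K(\rho_{AR})$, but a convex combination $\frac{|C|^2-2}{|C|^2-1}\gamma+\frac{1}{|C|^2-1}\tau\otimes\rho_R$; one then applies $\ITNM$ to get $I(R:\tilde B)=0$, and inverts this affine relation to deduce that $\gamma$ itself is product. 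If you want to keep your line of attack, this projection-and-inversion step is the missing ingredient.
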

We also show that $\ITNM$ implies \ABWNM, and give a separation scheme which is secure under \ABWNM~but insecure under \ITNM. This scheme is in fact susceptible to a powerful attack, whereby a simple adversary can replace the output of decryption with a plaintext of the adversary's choice.  On the other hand, if we restrict our attention to schemes where the encryption maps are unitary, then we are able to show the following.
	\begin{thm}[informal]\label{thm:intro-2}
		Let $\Pi$ be a scheme such that encryption $E_k$ is unitary for all keys $k$. Then $\Pi$ is $\ITNM$ if and only if $\{E_k\}_k$ is a two-design.
	\end{thm}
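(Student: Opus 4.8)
The statement is an equivalence, and the substantive direction is that a two-design suffices for $\ITNM$; the converse pins down tightness and will follow from results already in hand. I will treat the two directions separately.

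\medskip\noindent\textbf{Two-design $\Rightarrow\ITNM$.} Fix an input $\rho_{ABR}$ and an attack $\Lambda_{CB\to C\tilde B}$. Since every $E_k$ is unitary, $\Enc$ and $\Dec$ are conjugations by $E_k$, so the effective attack is the key-average $\tilde\Lambda(\rho) = \mathbb{E}_k\,(E_k^\dagger\otimes\one)\,\Lambda\big((E_k\otimes\one)\rho(E_k^\dagger\otimes\one)\big)(E_k\otimes\one)$, with $E_k$ acting on the plaintext/ciphertext register and $R$ left untouched. I would first dilate $\Lambda$ to a Stinespring isometry $V_{CB\to C\tilde B E}$ and expand it in a Hilbert--Schmidt-orthonormal operator basis $\{\sigma_i\}_i$ of the plaintext space (normalized Paulis, with $\sigma_0\propto\one$), writing $V=\sum_i\sigma_i\otimes W_i$ for operators $W_i\colon B\to\tilde B\otimes E$. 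Conjugation turns the plaintext factor into $E_k^\dagger\sigma_i E_k$, so the key-average reduces to the degree-two twirl $\eta\mapsto\mathbb{E}_k\,(E_k^\dagger\sigma_i E_k)\,\eta\,(E_k^\dagger\sigma_j^\dagger E_k)$. This is exactly where the hypothesis enters: the two-design property lets me replace this by its Haar value, which by Schur--Weyl is supported on the commutant $\mathrm{span}\{\one,\mathbb{F}\}$ of $U^{\otimes 2}$.

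The crucial structural consequence, which I expect to be the main obstacle, is that the twirl kills all off-diagonal ($i\neq j$) contributions and yields $i$-independent coefficients on the diagonal. Concretely, the effective attack diagonalizes as $\tilde\Lambda(\rho)=\sum_i\,(\Psi_i\otimes\mathcal{W}_i)(\rho)$, where $\mathcal{W}_i=\tr_E\big(W_i\,\cdot\,W_i^\dagger\big)$ acts on $B$, and the plaintext map is $\Psi_0=\tfrac1d\,\id$ in the ``undisturbed'' sector $i=0$, while $\Psi_i=\Psi$ is one and the same completely positive map for every ``disturbed'' sector $i\neq 0$. A short Weingarten computation gives $\Psi(\eta)=\tfrac1{d^2-1}\big(\tr(\eta)\,\one-\tfrac1d\,\eta\big)$, whose Choi operator is a positive multiple of a projector, so $\Psi$ is CP; it is moreover trace-nonincreasing and dominated, in the data-processing sense, by the identity-on-$A$ map. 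The two facts that make the argument work are precisely that $\Psi_i$ does not depend on $i$ for $i\neq 0$, and that each $\Psi_i\otimes\mathcal{W}_i$ is individually CP; both are outputs of the twirl rather than inputs one can assume.

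Granting this, the entropic bound follows by introducing a single classical flag $F$ that records only whether $i=0$, so that its distribution is $(p_{=},1-p_{=})$ with $p_{=}=p_{=}(\Lambda,\rho)$ the undisturbed weight. By monotonicity under discarding $F$ and the chain rule, $I(AR:\tilde B)\le I(AR:\tilde B F)=I(AR:F)+I(AR:\tilde B\mid F)$, and $I(AR:F)\le H(F)=h(p_{=})$. For the conditional term I would bound each branch by $I(AR:B)_\rho$: in the $i=0$ branch the plaintext is untouched, so $\tilde B$ is a processing of $B$ alone; in the $i\neq 0$ branch the mixture factors as a single fixed CP map $\Psi$ on $A$ applied after $\sum_{i\neq0}\mathcal{W}_i$ on $B$ (here the $i$-independence of $\Psi$ is essential), and two applications of data processing---one on $A$, one on $B$---give the same bound. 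A routine auxiliary-channel argument handles the fact that the per-branch maps are only sub-normalized. Combining the two estimates yields $I(AR:\tilde B)\le I(AR:B)+h(p_{=})$, which is $\ITNM$. I would emphasize that only second moments of $\{E_k\}$ are used, which is what lets a two-design suffice.

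\medskip\noindent\textbf{$\ITNM\Rightarrow$ two-design.} Here I would argue by contraposition, or---more economically---invoke the two facts already recorded in the introduction: that $\ITNM$ implies $\ABWNM$, and that for unitary encryption maps $\ABWNM$ is equivalent to encryption with a unitary two-design~\cite{ambainis2009nonmalleable}; chaining these gives the converse immediately. For a self-contained argument one instead exhibits, whenever $\{E_k\}$ fails to be a two-design, an ``apply-and-record'' attack together with a maximally entangled input $\rho_{AR}$ (and trivial $B$, so $I(AR:B)=0$): the deviation of the degree-two twirl from its Haar value forces the recorded register $\tilde B$ to carry correlation with $AR$ strictly exceeding $h(p_{=})$, violating the bound.
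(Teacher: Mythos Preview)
Your proposal is correct and, at the level of ideas, matches the paper; the difference is packaging. The paper proves both directions as an immediate corollary of the general characterization \expref{Theorem}{thm:effective-char}: a scheme is \ITNM\ iff every effective attack decomposes as $\id_A\otimes\Lambda'+\tfrac{1}{|C|^2-1}(|C|^2\langle D_K(\tau)\rangle-\id)_A\otimes\Lambda''$, and for unitary schemes (where $|C|=|A|$ and $D_K(\tau)=\tau$) this is precisely the channel-twirl-design condition, hence equivalent to being a 2-design. Your Stinespring-plus-operator-basis computation is exactly an explicit derivation of that decomposition in the unitary case (your $\Psi$, after the rescaling that makes the $A$-factors CPTP, is the paper's second summand), and your flag-plus-chain-rule entropy step is exactly \expref{Lemma}{lem:DP-CPTPtensCP}. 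So for 2-design $\Rightarrow$ \ITNM\ you have unpacked the backward direction of the characterization theorem rather than invoked it; this buys a cleaner, self-contained argument in the unitary case at the cost of not covering general schemes.

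For the converse, your primary route via \ITNM $\Rightarrow$ \ABWNM\ and the \cite{ambainis2009nonmalleable} equivalence is valid, but be aware that in this paper the implication \ITNM $\Rightarrow$ \ABWNM\ (\expref{Corollary}{cor:NM-implies-ABWNM}) is itself obtained from the forward direction of \expref{Theorem}{thm:effective-char}; so you are using the same machinery as the paper, one step removed, and the converse is not more elementary than the paper's. Your sketched ``apply-and-record'' alternative is the right instinct but underspecified: the real work is not exhibiting \emph{some} correlation between $\tilde B$ and $AR$ when the twirl deviates from Haar, but guaranteeing it exceeds the $h(p_=)$ allowance---which typically requires first forcing $p_==0$ (e.g., by projecting the attack's CJ state onto $\Pi^-$, as in the paper's proof that \ITNM\ implies secrecy).
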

	By the results of~\cite{ambainis2009nonmalleable}, we conclude that \ITNM~and \ABWNM~are in fact equivalent for unitary schemes. Finally, we can also characterize $\ITNM$ schemes in the general (i.e., not necessarily unitary) case, as follows.

\begin{thm}[informal]
A scheme is $\ITNM$ if and only if, for any $\Lambda_{CB\to C\tilde B}$, there exist maps $\Lambda'_{B\to\tilde B}$, $\Lambda''_{B\to\tilde B}$ such that the effective attack $\tilde \Lambda_{AB\to A\tilde B}$ has the form
$$
\tilde \Lambda =\id_A\otimes \Lambda'+\frac{1}{|C|^2-1}\left(|C|\left\langle D_K(\one_C)\right\rangle-\id\right)_A\otimes \Lambda''\,.
$$
\end{thm}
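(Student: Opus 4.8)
The plan is to analyze the effective attack $\tilde\Lambda_{AB\to A\tilde B}$ as a superoperator on $\mathcal B(A)$ whose coefficients are maps $B\to\tilde B$, and to show that $\ITNM$ is equivalent to this superoperator living in the two-dimensional span of $\id_A$ and $S_A:=\frac{1}{|C|^2-1}(|C|\langle D_K(\one_C)\rangle-\id)_A$. Here $\langle D_K(\one_C)\rangle$ is the constant (replacement) superoperator $X\mapsto \tr(X)\,\langle D_K(\one_C)\rangle$ built from the averaged decryption of the maximally mixed ciphertext; one checks directly that $S_A$ is trace-preserving and that $\{\id_A,S_A\}$ are exactly the two eigendirections of the key-twirl of a two-design encryption ($\id_A$ the trace sector, $S_A$ the traceless sector, whence the $|C|^2-1$ normalization). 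Since $|C|\langle D_K(\one_C)\rangle=|C|^2 R_A$ for the normalized replacement channel $R_A$, the claimed form regroups as $\tilde\Lambda=\id_A\ot\Lambda_1+R_A\ot\Lambda_2$, i.e.\ an ``identity branch'' in which the plaintext passes through and a ``replacement branch'' in which $A$ is discarded and reprepared independently.

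For the direction that the displayed form implies $\ITNM$, I would verify the entropic bound directly from this regrouped expression. Reading off the weight of the identity branch as $p_=(\Lambda,\rho)$, I would argue that, conditioned on the (classical) branch variable, the new register $\tilde B$ is produced from $B$ alone, while $A$ is either untouched (identity branch) or independent of $(A,R)$ (replacement branch). Data processing then gives $I(AR:\tilde B)\le I(AR:B)$ conditioned on the branch, and accounting for the branch variable itself costs at most its binary entropy $h(p_=(\Lambda,\rho))$; a chain-rule and concavity argument assembles these into $I(AR:\tilde B)_{\tilde\Lambda(\rho)}\le I(AR:B)_\rho+h(p_=(\Lambda,\rho))$, which is $\ITNM$. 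The only subtlety is that $\Lambda_1,\Lambda_2$ need not be completely positive individually; I would handle this by passing to the concrete attack $\Lambda$ realizing $\tilde\Lambda$, whose disturbed/undisturbed measurement supplies a genuine CP branch decomposition.

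The substantial direction is the converse, that $\ITNM$ forces every effective attack into $\spa\{\id_A,S_A\}$. I would fix a Hilbert--Schmidt-orthonormal operator basis $\{\sigma_\alpha\}$ of $\mathcal B(A)$ with $\sigma_0\propto\one$ and expand $\tilde\Lambda=\sum_{\alpha,\beta}\ketbra{\sigma_\alpha}{\sigma_\beta}_A\ot\Lambda_{\alpha\beta}$ with $\Lambda_{\alpha\beta}:B\to\tilde B$. The target is to show that all coefficients outside the invariant subspace vanish and that the surviving ones realize precisely $\id_A\ot\Lambda'+S_A\ot\Lambda''$. To expose them I would feed in the maximally entangled state $\Phi_{AR}$ together with adversarially chosen probe systems $B$, arranged so that $I(AR:B)_\rho$ is controlled and the inequality $I(AR:\tilde B)\le I(AR:B)+h(p_=)$ is driven to equality; each choice of probe should zero out or pin down a family of coefficients $\Lambda_{\alpha\beta}$.

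The crux, and the step I expect to be hardest, is upgrading this \emph{single-letter entropic inequality} into an \emph{exact algebraic identity}: the bound only asserts that $\tilde B$ carries few bits about $AR$, whereas the conclusion demands that each stray coefficient $\Lambda_{\alpha\beta}$ be exactly zero. I plan to exploit tightness: the trivial ``record one classical bit'' adversary (measuring only whether the ciphertext was disturbed) already saturates $I(AR:\tilde B)=h(p_=)$, so any additional leakage into the plaintext directions would strictly violate $\ITNM$. Converting this into the vanishing of the off-diagonal coefficients should follow from the equality conditions of data processing (equivalently, strong subadditivity) applied with $A$ maximally entangled to $R$, which force the surviving superoperator to commute with the full key-twirl and hence to lie in $\spa\{\id_A,S_A\}$, simultaneously fixing the relative normalization to $\frac{1}{|C|^2-1}(|C|\langle D_K(\one_C)\rangle-\id)$. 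The already-established unitary characterization, \expref{Theorem}{thm:intro-2}, provides a consistency check, since there the two-design twirl produces exactly this decomposition.
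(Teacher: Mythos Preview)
Your easy direction (the displayed form implies $\ITNM$) is in the same spirit as the paper's argument, which applies a data-processing lemma for maps of the form $\sum_i \Lambda_A^{(i)}\otimes\Lambda_B^{(i)}$ with the $\Lambda_B^{(i)}$ CP and summing to TP; the branch register's entropy then supplies exactly $h(p_=)$. Note that the paper's $\Lambda',\Lambda''$ are themselves CP (each is $\Lambda$ sandwiched by projectors and partially traced), so the CP worry you raise already disappears in the original grouping rather than only after your $\id/R$ regrouping.

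The hard direction has a real gap. Your closing move---that equality in data processing ``force[s] the surviving superoperator to commute with the full key-twirl and hence to lie in $\spa\{\id_A,S_A\}$''---does not go through: for a general (non-unitary) scheme there is no key-twirl on the plaintext that $\tilde\Lambda$ should commute with, and the equality conditions of strong subadditivity give a Markov-chain/Petz-recovery structure, not a commutant constraint. Your instinct to probe with $\Phi_{AR}$ is right, but the argument must be organized differently. The paper works one level up, in the Choi--Jamio\l kowski picture: the assignment $\Lambda\mapsto\tilde\Lambda$ factors through a single \emph{scheme-level} linear map $\mathcal E_{CC'\to AA'}=|K|^{-1}\sum_k D_k\otimes E_k^T$, so it suffices to pin down $\mathcal E$ on all of $\opr(\hi_C^{\otimes 2})$ once. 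Decomposing along $\phi^+_{CC'}$ versus $\Pi^-_{CC'}$: correctness alone fixes $\mathcal E(\phi^+)$; on the $\Pi^-$ block one chooses attacks with trivial $B$ and $p_==0$, so $\ITNM$ gives $I(AA':\tilde B)=0$ exactly and Pinsker converts this into a trace-norm identity forcing the replacement form. The genuinely delicate step---absent from your sketch---is the off-diagonal block $\ketbra{\phi^+}{v}$ with $\braket{\phi^+}{v}=0$: here the paper computes $\mathcal E$ explicitly via a structural lemma (every $E_k$ is ``append a fixed ancilla, then apply an isometry''), reducing the question to $E_K^\dagger(X)=\tfrac{|A|}{|C|}\tau_A\tr X$ on traceless $X$ (i.e.\ secrecy), which is in turn extracted from the already-established $\Pi^-$ behavior applied to rank-deficient ciphertext states. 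Without factoring through $\mathcal E$ and without a mechanism for the off-diagonal block, your plaintext-side operator expansion does not close.
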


		\item \textbf{Authentication from non-malleability.} Our final result in the setting of non-malleability shows that, by adding a ``tag'' space to the plaintext (as seen, e.g., in the Clifford scheme~\cite{ABE10}), we can turn a non-malleable scheme into an authentication scheme as defined in~\cite{dupuis2012actively}. More precisely, given an encryption scheme $\Pi = \{E_k\}$, we define $\ttag$ to be a new scheme whose encryption is $\rho \mapsto E_k  \bigl( \rho_A \otimes \ketbra{0}{0}^{\otimes t}_B\bigr) E_k^\dagger$, and whose decryption rejects unless $B$ measures to $\ket{0^t}$.
	\begin{thm}[informal]
		Let $\Pi = \{E_k\}$ be an encryption scheme. If $\Pi$ is $\ITNM$, then $\ttag$ is $2^{2-t}$-DNS-authenticating.
	\end{thm}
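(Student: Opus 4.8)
The plan is to compose the $\ITNM$ characterization theorem with the tag-initialization and tag-measurement operations that define $\ttag$, and then read off the $\DNS$-ideal structure directly. First I would recall the $\DNS$-authentication condition: for every attack there must exist a completely positive ``accept'' map $\Lambda^{\mathrm{acc}}$ on the adversary's side-information register such that the real averaged effective channel is $\eps$-close, in diamond norm, to the ideal channel $\id_A\otimes\Lambda^{\mathrm{acc}}+\langle\perp\rangle_A\otimes\Lambda^{\mathrm{rej}}$, where $\langle\perp\rangle_A$ replaces the plaintext by a fixed reject symbol and $\Lambda^{\mathrm{rej}}$ is chosen so that the whole channel is trace-preserving. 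The key structural observation is that the reject branch of $\ttag$ is in \emph{perfect} $\DNS$ form by construction: whenever the $t$-qubit tag register $T$ (the register written $B$ in the informal statement, which I rename to avoid clashing with the adversary's system) fails to measure $\ket{0^t}$, decryption outputs $\ket\perp$, so this branch is literally of the form $\langle\perp\rangle_A\otimes(\cdots)$ and contributes no error. Hence the entire problem reduces to controlling the accept branch.

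For the accept branch I would treat the underlying $\ITNM$ scheme as an encryption of the composite plaintext $AT$, where $T$ is initialized to $\ket{0^t}$ and the ciphertext has dimension $|C|=|A|\cdot 2^t$. By the characterization theorem, for any adversary $\Lambda_{CB\to C\tilde B}$ the effective attack on $AT$ takes the form
\[
\tilde\Lambda=\id_{AT}\otimes\Lambda'+\tfrac{1}{|C|^2-1}\bigl(|C|\langle D_K(\one_C)\rangle-\id\bigr)_{AT}\otimes\Lambda''.
\]
I would apply this to an input $\rho_{AB}\otimes\proj{0^t}_T$ and then project the tag onto $\ket{0^t}$ (the accept projector). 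The first term contributes exactly $\id_A\otimes\Lambda'$, which is precisely the $\DNS$-ideal accept form, with nothing touching the plaintext. Setting $\Lambda^{\mathrm{acc}}:=\Lambda'$, the deviation of the accept branch from the ideal is the tag-projection of the two remaining terms, namely $\tfrac{|C|}{|C|^2-1}\,\tau^{(0)}_A\otimes\Lambda''(\rho_B)-\tfrac{1}{|C|^2-1}(\id_A\otimes\Lambda'')(\rho_{AB})$, where $\tau^{(0)}_A:=\langle 0^t|_T\,D_K(\one_C)\,|0^t\rangle_T$ is the tag-$\ket{0^t}$ block of the replacement state.

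The crux of the estimate, and the step I expect to be the main obstacle, is bounding the trace norm of this deviation and extracting the advertised $2^{-t}$ suppression. The term scaling as $1/(|C|^2-1)$ is negligible, but the first term requires understanding $\tau^{(0)}_A$: I would show that $\tr\tau^{(0)}_A=\tr[D_K(\one_C)(\one_A\otimes\proj{0^t}_T)]$ is of order $|C|/2^t=|A|$, so that the prefactor $|C|/(|C|^2-1)$ times this trace collapses to order $2^{-t}$. For unitary schemes this is immediate, since $D_K(\one_C)=\one_{AT}$ and hence $\tau^{(0)}_A=\one_A$ exactly; in the general (non-unitary) case I would invoke the secrecy guaranteed by $\ITNM$, together with the structure of $D_K$, to show that the tag-marginal of the replacement state is close to maximally mixed and thus cannot concentrate weight on $\ket{0^t}$. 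Because these trace-norm bounds hold uniformly over all inputs $\rho_{AB}$, including those entangled with an external reference, they upgrade directly to a diamond-norm bound on the distance between the real effective channel and the chosen ideal channel. Collecting the accept-branch error with the (vanishing) reject-branch error, and accounting for the factor-two loss in passing from branch-wise trace distance to the overall diamond norm, yields the claimed bound $2^{2-t}$ and completes the proof that $\ttag$ is $2^{2-t}$-$\DNS$-authenticating.
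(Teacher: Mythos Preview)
Your overall architecture matches the paper's: apply the $\ITNM$ characterization theorem to the underlying scheme on $AT$, sandwich by the tag-initialization and tag-measurement, and bound the distance to an ideal $\id\otimes\Lambda^{\mathrm{acc}}+\langle\bot\rangle\otimes\Lambda^{\mathrm{rej}}$. The place where your proposal diverges from the paper, and where there is a genuine gap, is the step in which you control $\tau^{(0)}_A=\langle 0^t|_T\,D_K(\one_C)\,|0^t\rangle_T$ for general (non-unitary) schemes. You propose to ``invoke the secrecy guaranteed by $\ITNM$, together with the structure of $D_K$'' to force the tag-marginal of $D_K(\tau_C)$ to be nearly maximally mixed. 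That argument is not available: secrecy constrains $E_K$, not the action of $D_K$ on an arbitrary input like $\tau_C$. By the structural lemma for encryption maps, $D_k$ decomposes as a ``good'' part plus an arbitrary CPTP map $\hat D_k$ acting on the orthogonal complement of the image of $E_k$; nothing in $\ITNM$ or secrecy restricts $\hat D_k$, so $D_K(\tau_C)$ can concentrate on $\ket{0^t}_T$. The paper says this explicitly: ``We do not know, however, what $D_K(\tau_C)$ looks like.''

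The paper resolves this not by bounding $\hat\gamma:=\tr\langle\psi|_T D_K(\tau_C)|\psi\rangle_T$ for the fixed tag $\ket{0^t}$, but by a probabilistic-method argument: it Haar-averages $\hat\gamma$ over all unitary choices of the decomposition $A=TA'$ and tag state $\ket\psi_T$, obtains $\mathbb E[\hat\gamma]\le 1/|T|$, and concludes that \emph{some} decomposition and tag state achieve $\hat\gamma\le 1/|T|$. Accordingly, the formal theorem in the paper is existential (``there exists a decomposition $A=TA'$ and a state $\ket\psi_T$ such that \ldots''), not a statement about tagging with $\ket{0^t}$ in the standard basis. Your argument is fine for unitary schemes, where $D_K(\tau_C)=\tau_{AT}$ and $\hat\gamma=2^{-t}$ on the nose, but for general schemes you need the averaging step.

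A smaller point: your claim that the reject branch ``contributes no error'' is in tension with your earlier stipulation that $\Lambda^{\mathrm{rej}}$ be chosen to make $\Lambda^{\mathrm{acc}}+\Lambda^{\mathrm{rej}}$ exactly TP. The actual reject output is $\proj\bot\otimes c\,\Lambda''(\rho_B)$ for some constant $c$ that need not equal $1$, so if you set $\Lambda^{\mathrm{acc}}=\Lambda'$ and enforce TP you incur reject-branch error. The paper handles this by absorbing a $\frac{\gamma|C|^2-1}{|C|^2-1}\Lambda''$ term into $\Lambda_{\mathrm{acc}}$ and setting $\Lambda_{\mathrm{rej}}=\frac{(1-\gamma)|C|^2}{|C|^2-1}\Lambda''$, then bounding both branches together.
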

\end{enumerate}

\subsubsection{Quantum authentication.}
Our results on quantum authentication are summarized as follows. We note that, strictly speaking, our definitions of authentication deviate slightly from the original versions~\cite{dupuis2012actively, Garg2017}, in that decryption outputs a reject symbol in place of the plaintext (rather than setting an auxiliary bit to ``reject.'') This adaptation is convenient for reasons we will return to later.

\begin{enumerate}

\item \textbf{\GYZ~implies \DNS.}  First, we show that \GYZ-authentication implies \DNS-authentication. We remark that this is not trivial: on one hand, \GYZ~strengthens \DNS~by requiring high probability of success (rather than success on-average); on the other hand, in the reject case \GYZ~requires nothing while \DNS~makes rather stringent demands. Nonetheless, we show the following.
\begin{thm}[informal]
Let $\Pi$ be an encryption scheme. If $\Pi$ is $\varepsilon$-\GYZ-authenticating, then it is also $O(\sqrt\varepsilon)$-\DNS-authenticating.
\end{thm}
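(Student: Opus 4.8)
The plan is to compare the averaged effective channel $\bar{\mathcal{E}} = \mathbb{E}_k[\Dec_k \circ \Lambda \circ \Enc_k]$ of an arbitrary attack $\Lambda$ against the \DNS-ideal channel, which acts as $\id_A$ on the plaintext in the accept branch, replaces it by $\proj{\bot}$ in the reject branch, and in both cases evolves the side information $B$ by a single fixed simulator. The two genuine discrepancies between the hypotheses are exactly those flagged in the statement: \GYZ~controls each per-key channel only with high probability rather than on average, and \GYZ~imposes nothing on the reject branch. I would dispatch these separately—the accept branch by averaging the per-key \GYZ~guarantee, and the reject branch by exploiting the modified rejection procedure (which outputs $\bot$ in place of the plaintext) together with the secrecy that \GYZ~entails.

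For the accept branch, I would start from the fidelity form in which \GYZ~is naturally phrased: for most keys $k$, the accepted part of $\mathcal{E}_k$ is close to $\id_A \otimes \Lambda^{\mathrm{sim}}_B$ for a key-independent simulator $\Lambda^{\mathrm{sim}}_B$. The quadratic loss enters precisely here. Writing the guarantee as $\mathbb{E}_k[1 - F_k] \le \varepsilon$ for the relevant fidelity $F_k$, the Fuchs–van de Graaf inequality turns each $1 - F_k$ into a trace-distance bound $\sqrt{2(1-F_k)}$, and concavity of the square root (Jensen) yields $\mathbb{E}_k[\text{trace distance}] \le \sqrt{2\varepsilon}$. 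Absorbing the small fraction of bad keys—each contributing at most $2$ to the trace norm—keeps the bound at $O(\sqrt\varepsilon)$, and linearity of the average over $k$ then lets me conclude that the accept branch of $\bar{\mathcal{E}}$ is $O(\sqrt\varepsilon)$-close to the \DNS~accept form.

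For the reject branch, the modified rejection procedure immediately forces the output on $A$ to be the fixed state $\proj{\bot}$, so $\bar{\mathcal{E}}^{\mathrm{rej}}$ has the shape $\proj{\bot}_A \otimes \Xi$ for some map $\Xi \colon AB \to B$. The remaining step—which I expect to be the main obstacle—is to show that $\Xi$ is $O(\sqrt\varepsilon)$-close to $\Lambda^{\mathrm{rej}}_B \circ \tr_A$, i.e. that the reject-branch side information does not depend on the plaintext. Since \GYZ~says nothing about the reject case, this independence cannot be read off the \GYZ~ideal, so I would instead derive it from secrecy: \GYZ-authentication makes the averaged ciphertext—and hence any view the adversary can extract—essentially independent of the plaintext, so the side information produced on the reject branch cannot carry plaintext information. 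I expect the cleanest route to be a decoupling argument: purifying the plaintext by a reference $R$, the accept-branch estimate already forces the transmitted system to be approximately maximally entangled with $R$, and monogamy of entanglement then bounds the residual correlation between $R$ and the reject-branch side information, which is exactly the required statement. Combining the accept and reject estimates by the triangle inequality finally yields the claimed $O(\sqrt\varepsilon)$-\DNS~bound.
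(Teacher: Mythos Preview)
Your overall plan---split into accept and reject branches, handle accept via the \GYZ~guarantee directly and reject via the secrecy that \GYZ~implies---is the paper's plan too. Two points diverge, and one step is missing.

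First, the $\sqrt{\varepsilon}$ loss does not arise where you put it. In the paper's formulation (\expref{Definition}{def:auth}) the \GYZ~condition is already a trace-norm bound on the accept branch, averaged over keys, so no Fuchs--van de Graaf step is needed; the accept branch contributes only $\varepsilon$. The square root comes entirely from the secrecy implication of \GYZ~(Theorem~15 of~\cite{Garg2017}, yielding $\|E_K(\rho_{ABR})-E_K(\tau_A)\otimes\rho_{BR}\|_1\le 14\sqrt{\varepsilon}$), which is what drives the reject-branch argument.

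Second, the monogamy route you sketch does not go through. Knowing that the accept branch leaves $A$ nearly maximally entangled with $R$ constrains the \emph{accept}-branch side information, not the reject-branch side information: the two live in orthogonal sectors $\hi_A$ and $\C\ket{\bot}$ of the output, and monogamy says nothing across that split. The paper instead uses trace preservation of decryption to write the \emph{total} side-information channel as $\tr_A\circ\tilde{\Lambda}=\tr_C\circ\Lambda\circ E_K$. Secrecy makes this $\approx \tr_A\otimes\Lambda''$ for a fixed $\Lambda''$; subtracting the accept contribution, which is already $\approx \tr_A\otimes\Lambda'$ by \GYZ, shows that the reject map $\Lambda^{(2)}$ is close to $\tr_A\otimes(\Lambda''-\Lambda')$, hence to $\tr_A\otimes\Lambda'''$ with $\Lambda''':=\Lambda^{(2)}(\tau_A\otimes(\cdot))$ a genuine CP map. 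Your secrecy intuition was the right one; just argue via ``total $=$ accept $+$ reject'' rather than monogamy.

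Finally, you omit a step that costs real work in the paper: the pair $(\Lambda',\Lambda''')$ obtained this way is only \emph{approximately} trace-preserving, whereas \DNS~requires $\Lambda^{\acc}+\Lambda^{\rej}$ to be exactly TP. The paper fixes this by precomposing with a correction $\mathcal{M}(X)=M^{-1/2}XM^{-1/2}$, where $M=(\Lambda'+\Lambda''')^\dagger(\mathds 1)$, and bounding $\|\mathcal{M}-\id\|_\diamond$ by $O(\sqrt{\varepsilon})$.
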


\item \textbf{\GYZ~is achievable with 2-designs.}
Next, we show that \GYZ-authentication can be satisfied by a scheme which ``tags'' plaintexts as before, and encrypts with a unitary 2-design. This is a significant improvement over the analysis of~\cite{Garg2017}, which required eight-designs for the same construction.

\begin{thm}[informal]\label{thm:intro-3}
		Let $\Pi = \{ E_k \}_k$ be a $2^{-t}$-approximate 2-design scheme. Then $\ttag$ is $2^{-\Omega(t)}$-\GYZ-authenticating.
\end{thm}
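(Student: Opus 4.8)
The plan is to identify the \GYZ~security quantity with a single expectation over keys and then bound that expectation directly. First I would recall the \GYZ~experiment for the tagged scheme $\ttag$: for a uniformly random key $k$ one prepares the ciphertext $E_k\de{\rho_A\ot\proj{0^t}_B}E_k^\dagger$, hands it (together with the adversary's side register) to the attack $\Lambda$, applies $E_k^\dagger$, and measures $B$ to decide accept versus reject. Fixing a simulator (to be extracted below) and writing $\delta_k$ for the trace distance between the real and ideal outputs at a fixed key, the \GYZ~formulation — in which the key is revealed at the end — makes the overall distinguishing advantage equal to $\mathbb{E}_k[\delta_k]$, the trace distance of two states classically correlated with a uniform key. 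Thus it suffices to show $\mathbb{E}_k[\delta_k]=O(2^{-t})$: this is exactly the claimed $2^{-\Omega(t)}$ bound, and it simultaneously delivers the ``high probability over keys'' reading, since $\mathbb{E}_k[\delta_k]=O(2^{-t})$ forces $\delta_k\geq 2^{-t/2}$ on at most a $2^{-t/2}$ fraction of keys by Markov's inequality.

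The technical core is therefore the expectation bound, and this is precisely where the improvement from eight-designs~\cite{Garg2017} to two-designs appears. After a Stinespring dilation of $\Lambda$, I would expand its Kraus operators in the Pauli basis on the ciphertext register $C$ (of dimension $|A|\cdot 2^t$). The effective attack on the plaintext-plus-tag system becomes a sum of terms $E_k^\dagger P E_k\,(\cdot)\,E_k^\dagger Q^\dagger E_k$ whose key-average is a \emph{second} moment of the encryption ensemble — so the exact two-design twirl formula, which writes $\mathbb{E}_k\De{E_k^\dagger P E_k\ot E_k^\dagger Q E_k}$ as an explicit combination of identity and swap, is all that is needed; no higher moment ever enters. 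The identity term $P=Q=\one$ reproduces exactly the accept-with-identity behavior of the ideal experiment. Every other Pauli has vanishing trace, so after the twirl its surviving diagonal contribution is the completely depolarized operator $\one_C/|C|$, and projecting onto the tag condition gives $\tr\De{\de{\one_A\ot\proj{0^t}}\one_C/|C|}=|A|/|C|=2^{-t}$. Summing over non-identity Paulis thus contributes only $O(2^{-t})$: any attack that genuinely disturbs the encoded state is, after the random encryption, spread uniformly over $C$ and is caught by the tag check except with probability $2^{-t}$. This is morally the same computation that underlies the \DNS~analysis of $\ttag$; \GYZ~merely repackages it with the key revealed and drops all demands in the reject branch.

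Finally I would account for $\Pi$ being only a $2^{-t}$-approximate two-design. Replacing the exact twirl by the actual ensemble perturbs each second moment by at most the design error $2^{-t}$ in the relevant norm, which is of the same order as the tag-suppression term already obtained; hence $\mathbb{E}_k[\delta_k]=O(2^{-t})$ survives. As a bonus, the simulator is read directly off the identity-Pauli branch of the dilated attack, so it is efficient in the sense of~\cite{Broadbent2016}.

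I expect the main obstacle to lie in the simultaneous bookkeeping inside this moment computation: one must carry the tag projection $\proj{0^t}_B$ through the twirl, track the purifying registers of both the attack and the simulator, and handle the accept/reject branching, all while confirming that the single scalar $\delta_k$ genuinely upper-bounds the \GYZ~distinguishing quantity at each key. Pinning down the exact \GYZ~quantity and verifying this domination — so that the clean reduction to a second moment is legitimate rather than merely suggestive — is the step most likely to require care.
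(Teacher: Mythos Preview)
There is a genuine gap in the reduction step. You correctly observe that, because the key register is retained in the \GYZ~definition, the security quantity equals $\frac{1}{|K|}\sum_k\delta_k$ where $\delta_k$ is the per-key trace distance between the real and ideal accept-branch outputs. But your Pauli-expansion computation first averages the effective attack over the key---producing $\frac{1}{|K|}\sum_k\Phi_k\rho\Phi_k^\dagger$, with $\Phi_k=\bra{0^t}E_k^\dagger V E_k\ket{0^t}$ and $V$ the dilated adversary---and only then compares to the simulator $\Gamma_V=\frac{1}{|C|}\tr_C V$. This bounds $\bigl\|\frac{1}{|K|}\sum_k\Phi_k\rho\Phi_k^\dagger-\Gamma_V\rho\Gamma_V^\dagger\bigr\|_1$, which by convexity of the norm is a \emph{lower} bound on $\frac{1}{|K|}\sum_k\delta_k$, not an upper bound. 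In other words, what your twirl computes is the closeness of the key-averaged accept channel to the ideal; that is precisely the \DNS-type statement, and it is exactly what \GYZ~strengthens by revealing the key. Your remark that ``\GYZ~merely repackages it with the key revealed'' is where the argument breaks: revealing the key forces the order of operations to be norm-then-average, whereas the second-moment twirl you describe delivers average-then-norm.

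The paper repairs this by choosing a different per-key scalar whose expectation \emph{is} a second moment. For pure input $\ket{\rho}$ and isometric attack $V$, it bounds the squared $2$-norm $\|(\Phi_k-\Gamma_V)\ket{\rho}\|_2^2$; expanding this gives inner products such as $\bra{\rho}\Phi_k^\dagger\Phi_k\ket{\rho}$ and $\bra{\rho}\Phi_k^\dagger\Gamma_V\ket{\rho}$ that are of degree at most two in $U_k$ and in $U_k^\dagger$, so the $2$-design twirl (via the swap trick and Schur's lemma) applies directly and yields $\frac{1}{|K|}\sum_k\|(\Phi_k-\Gamma_V)\ket{\rho}\|_2^2\le 1/|T|+3\delta$. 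Only afterwards is this converted to trace distance using $\|\proj{\psi}-\proj{\phi}\|_1\le 2\|\ket{\psi}-\ket{\phi}\|_2$ together with a Markov split into good and bad keys, giving $\frac{1}{|K|}\sum_k\delta_k\le 4(1/|T|+3\delta)^{1/3}$. Note that this passage from the second moment to the averaged trace distance necessarily costs a root, so your claimed $\frac{1}{|K|}\sum_k\delta_k=O(2^{-t})$ is too strong; with a $2$-design one obtains $2^{-\Omega(t)}$ but not $O(2^{-t})$.
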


\item \textbf{$\GYZ$ authentication from non-malleability.}
Finally, we record a straightforward consequence of \expref{Theorem}{thm:intro-2} and \expref{Theorem}{thm:intro-3}: tagging a unitary non-malleable scheme results in a \GYZ-authenticating scheme.

\begin{cor}[informal]
There exists a constant $r > 0$ such that the following holds. If $\Pi$ is a unitary $\Omega(2^{-r n})$-$\ITNM$ scheme for $n$-qubit messages, and $t = \poly(n)$, then $\ttag$ is $2^{-\Omega(\poly(n))}$-GYZ-authenticating. 
\end{cor}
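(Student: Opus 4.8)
The plan is to treat this corollary as a pure composition of the quantitative (approximate-setting) versions of \expref{Theorem}{thm:intro-2} and \expref{Theorem}{thm:intro-3}, where essentially all the work is parameter bookkeeping rather than new argument. First I would invoke the approximate form of \expref{Theorem}{thm:intro-2}: since $\Pi$ is unitary and $\ITNM$ with error $\eps$ on the order of $2^{-rn}$, the equivalence converts this into the statement that $\{E_k\}_k$ is a $\delta$-approximate unitary two-design, for some $\delta$ controlled by $\eps$. The precise relation will carry dimension factors, because the equivalence proof passes through quantities normalized by the ciphertext dimension $|C| = 2^{n}$, so I expect a bound of the shape $\delta \le \poly(|C|)\cdot \eps$, i.e. $\delta \le 2^{cn}\eps$ for an absolute constant $c$ collecting those factors.

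The existential constant $r > 0$ in the statement is then chosen precisely to absorb this blowup. Taking $r$ larger than the dimension-exponent $c$ gives $\delta \le 2^{(c-r)n} = 2^{-\Omega(n)}$, so that $\{E_k\}_k$ is a genuinely good (exponentially accurate) approximate two-design. This is the only place the constant $r$ is used, and it is exactly what lets us discard the polynomial dimension losses incurred by the equivalence.

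With an exponentially-good two-design in hand, I would apply \expref{Theorem}{thm:intro-3} to the tagged scheme $\ttag$ with $t = \poly(n)$ tag qubits. Reading off its quantitative guarantee, the two-design inaccuracy $\delta$ and the tag count $t$ should contribute independently to the security error, yielding a $\GYZ$ error of order $\delta + 2^{-\Omega(t)} = 2^{-\Omega(n)} + 2^{-\Omega(\poly(n))} = 2^{-\Omega(\poly(n))}$, as claimed.

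The main obstacle, and really the only nontrivial point, is the parameter tracking through the approximate equivalence. One must verify that the conversion from $\ITNM$ error to two-design error loses at most $\poly(|C|)$, and pin down the dimension-exponent $c$ precisely enough to fix $r$. A secondary check is that \expref{Theorem}{thm:intro-3} indeed separates the role of the two-design quality from that of the tag length $t$, rather than forcing them to coincide as the informal choice $\delta = 2^{-t}$ in its statement might suggest; this decoupling is what allows both the $2^{-\Omega(n)}$ and the $2^{-\Omega(t)}$ terms to appear. Once these two points are confirmed, the corollary follows immediately.
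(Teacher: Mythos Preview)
Your approach is exactly that of the paper: the corollary is obtained by composing the approximate equivalence between unitary $\ITNM$ and 2-designs (\expref{Theorem}{thm:eps-USKQES-NM-2design}) with the fact that tagged 2-design schemes are $\GYZ$-authenticating (\expref{Theorem}{thm:2-design-auth}), and the constant $r$ is used precisely to absorb the $\poly(|C|)$ loss in the first step. Your parameter tracking is correct; the only cosmetic difference is that the authentication bound in \expref{Theorem}{thm:2-design-auth} is $4(|T|^{-1}+3\delta)^{1/3}$ rather than $\delta + 2^{-\Omega(t)}$, but this makes no difference at the level of $2^{-\Omega(\cdot)}$ asymptotics.
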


We remark that a sufficiently strong $\ITNM$ scheme for the above can be constructed via the $\epsilon$-approximate version of \expref{Theorem}{thm:intro-2} (see \expref{Theorem}{thm:eps-USKQES-NM-2design} and \expref{Remark}{rem:efficient} below.)
\end{enumerate}

The remainder of the paper is structured as follows. In \expref{Section}{sec:prelims}, we review some basic facts regarding quantum states, registers, and channels, and recall several useful facts about unitary designs. In \expref{Section}{sec:zero-error}, we consider the exact setting, beginning with perfect secrecy and then continuing to perfect non-malleability (\ITNM) and the relevant new results; we also discuss the relationship to \ABWNM~in detail. 
While the exact setting is relatively simple and conceptually clean, it is not relevant in practical settings. We thus continue in \expref{Section}{sec:appr} by developing the approximate setting, again beginning with secrecy and then continuing to approximate non-malleability. We end with the new results on quantum authentication, in \expref{Section}{sec:authentication}.

\section{Preliminaries}\label{sec:prelims}

\subsection{Quantum states, registers, and channels.}

We assume basic familiarity with the formalism of quantum states, operators, and channels. We denote quantum registers (i.e., systems and their subsystems) with capital Latin letters, e.g., $A, B, C$. The Hilbert space corresponding to system $A$ is denoted by $\hi_A$. For a register $A$, we denote the dimension of $\hi_A$ by $|A|$. We emphasize that, in this work, all Hilbert spaces will be finite-dimensional.

The space operators on $\hi_A$ is denoted $\opr(\hi_A)$. We say that a quantum state is classical if it is diagonal in the standard (i.e., computational) basis. We denote the adjoint of an operator $X \in \opr(\hi)$ by $X^\dagger$ and its transose with respect to the computational basis by $X^T$. Where necessary, we will write a quantum state $\rho \in \opr(\hi_A \otimes \hi_B \otimes \hi_C)$ as  $\rho_{ABC}$ to emphasize that the state is a multipartite state over registers $A$, $B$, and $C$. When such a state has already been defined, we will write reduced states by omitting the traced-out registers, e.g., $\rho_A := \tr_{BC} [\rho_{ABC}]$. We single out some special states which will appear frequently. Fix two systems $S, S'$ with $|S| = |S'|$. We let 
$$
\ket{\phi^+}_{SS'} = |S|^{-1/2}\sum_i\ket{ii}_{SS'}
\qquad \text{and} \qquad
\phi^+_{SS'} = \proj{\phi^+}_{SS'}
$$
denote the maximally entangled state on the bipartite system $SS'$ (expressed as a pure state on the left, and as a density operator on the right.) Furthermore, we let $\Pi^-_{SS'}=\mathds 1_{SS'}-\phi^+_{SS'}$ and $\tau^-_{SS'}=\Pi^-_{SS'}/(|S|^2-1)$. We also set $\tau_S=\mathds 1_S/|S|$ to be the maximally mixed state on $S$.

We denote the von Neumann entropy of a state $\rho_A$ by $H(A)_\rho$, and the joint entropy of $\rho_{AB}$ by $H(AB)_\rho$. We recall that the quantum mutual information of $\rho_{AB}$ is defined by
$$
I(A:B)_\rho := H(A)_\rho + H(B)_\rho - H(AB)_\rho\,.
$$
The quantum conditional mutual information of $\rho_{ABC}$ is defined by
$$
I(A:B|C)_\rho := H(AC)_\rho + H(BC)_\rho - H(ABC)_\rho - H(C)_\rho\,.
$$
These quantities are nonnegative \cite{lieb1973fundamental} and satisfy a chain rule:
$$
I(A:BC|D)_\rho=I(A:B|D)_\rho+I(A:C|BD)_\rho\,.
$$
We remark that the above also holds for trivial $D$. Together with the Stinespring dilation theorem \cite{stinespring1955positive}, nonnegativity \cite{lieb1973fundamental} and the chain rule imply the data processing inequality
$$
I(A:\tilde B|C)_{\Lambda(\rho)}\le I(A:B|C)_\rho\,,
$$
when $\Lambda$ is a CPTP (completely-positive, trace-preserving) map from $\opr(\hi_B)$ to $\opr(\hi_{\tilde B})$. An important special case is where $B=B_1B_2$ and $\Lambda=\tr_{B_2}$ discards the contents of $B_2$.

We will refer to valid transformations between quantum states as channels, or CPTP maps. We will sometimes also consider trace-non-increasing completely-positive (CP) maps. When necessary, we will emphasize the input and output spaces of a map 
$$
\Lambda : \opr(\hi_A\otimes \hi_B) \rightarrow \opr(\hi_C)
$$
by writing $\Lambda_{AB \to C}$. We denote the identity channel on, e.g.,  register $A$ by $\mathrm{id}_{A\to A}$ (or simply $\mathrm{id}_A$) and the channel from register $A$ to $A'$ with constant output $\sigma_{A'}$ by $\langle\sigma\rangle_{A\to A'}$. When composing operators on many registers, and if the context allows, we will elide tensor products with the identity operator. So, for example, with $\Lambda$ as above we may write $\tau_{CD} = \Lambda\, \rho_{ABD}$ in place of $\tau_{CD} = (\Lambda \otimes \id_D) \rho_{ABD}$.

A standard tool for analyzing operators on matrix spaces is the Choi-Jamio\l kowski (CJ) isomorphism \cite{choi1975completely,jamiolkowski1972linear}. Let $\Xi_{A\to B}: \opr(\hi_A)\to \opr(\hi_B)$ be a linear operator. Then its CJ matrix is defined as
\begin{equation}
	\left(\eta_{\Xi}\right)_{BA'}=\Lambda_{A\to B}(\phi^+_{AA'}).
\end{equation}
The linear operator mapping $\Xi$ to $\eta_{\Xi}$ is an isomorphism of vector spaces and $\eta_\Xi$ is positive semidefinite iff $\Xi$ is CP. Moreover $\Xi_{A\to B}$ is TP iff $\left(\eta_{\Xi}\right)_{A'}=\tau_A$. The inverse of the CJ isomorphism is given by the equation
	\begin{equation}
		\Xi_{A\to B}(X_A)=|A|\tr_{A'}\left[X_{A'}^T\left(\eta_{\Xi}\right)_{BA'}\right].
	\end{equation}

 An operator which will appear often is the swap: $F : \ket{i}\otimes\ket j \mapsto \ket j\otimes\ket i$. We will frequently refer to the following fact; a proof appears in~\cite{dupuis2014one}. 

\begin{lem}[Swap trick]\label{lem:swap-trick}
For any matrices $A, B$ we have $\tr [AB]=\tr [F  A\otimes B]$. 
\end{lem}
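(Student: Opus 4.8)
The plan is to verify the identity by a direct computation in the computational basis, treating both sides as bilinear functionals of the pair $(A,B)$. First I would write the swap operator explicitly as $F=\sum_{i,j}\ketbra{j}{i}\otimes\ketbra{i}{j}$, which one checks against its defining action $F\ket i\otimes\ket j=\ket j\otimes\ket i$. Substituting this into the right-hand side gives $F(A\otimes B)=\sum_{i,j}\bigl(\ketbra{j}{i}A\bigr)\otimes\bigl(\ketbra{i}{j}B\bigr)$, and since the trace over a tensor product factorizes across the two tensor legs, $\tr[F(A\otimes B)]=\sum_{i,j}\tr[\ketbra{j}{i}A]\,\tr[\ketbra{i}{j}B]=\sum_{i,j}\bra i A\ket j\,\bra j B\ket i$. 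Identifying the matrix entries $A_{ij}=\bra i A\ket j$ and $B_{ji}=\bra j B\ket i$, the inner sum over $j$ collapses via $\sum_j A_{ij}B_{ji}=(AB)_{ii}$, leaving $\sum_i(AB)_{ii}=\tr[AB]$, which is exactly the claim.

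Alternatively, one can sidestep the index bookkeeping entirely by exploiting bilinearity. Both sides of the asserted equation are linear in $A$ and in $B$ separately, so it suffices to check the identity on a spanning set of $\opr(\hi)$, for instance the rank-one operators $A=\ketbra{a}{a'}$ and $B=\ketbra{b}{b'}$. For such a choice, $F(A\otimes B)=\ketbra{b}{a'}\otimes\ketbra{a}{b'}$ has trace $\braket{a'}{b}\braket{b'}{a}$, while $AB=\braket{a'}{b}\,\ketbra{a}{b'}$ also has trace $\braket{a'}{b}\braket{b'}{a}$; the two agree, and the general statement then follows by extending linearly in each argument.

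I do not anticipate a genuine obstacle here: the lemma is a standard fact (and, as noted, a proof appears in~\cite{dupuis2014one}). The only point requiring care is the index placement in the expansion $F=\sum_{i,j}\ketbra{j}{i}\otimes\ketbra{i}{j}$ — getting the bra/ket labels correct is precisely what makes the factorized traces recombine into $\tr[AB]$ rather than, say, $\tr[A]\tr[B]$. The bilinearity route is the cleaner write-up, since it reduces the whole claim to a one-line evaluation on rank-one operators and makes the role of the swap transparent.
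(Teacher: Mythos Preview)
Your proposal is correct; both the direct index computation and the bilinearity-on-rank-ones argument establish the identity cleanly. The paper itself does not give a proof of this lemma at all---it simply cites~\cite{dupuis2014one} and moves on---so there is nothing to compare against beyond noting that your write-up supplies what the paper omits.
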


We denote the trace norm by $\|\cdot\|_1$, the infinity (or operator) norm by $\|\cdot\|_\infty$, and the diamond norm\footnote{This is the dual of the completely bounded norm: $\|\Lambda_{A\to B}\|_\diamond=\max_{\rho_{AA'}}\|\Lambda_{A\to B}\otimes\id_{A'}(\rho_{AA'})\|_1$, where the max is taken over all pure quantum states $\rho_{AA'}$ and $\hi_A\cong\hi_{A'}$.} by $\|\cdot\|_\diamond$. A basic inequality we will make use of is the H{\"o}lder inequality for operators, which we take to be
\begin{equation}\label{eq:holder}
\tr[XY] \le \|XY\|_1\le\|X\|_1\|Y\|_\infty\,,
\end{equation}
for any two operators $X$ and $Y$. 

\subsection{Unitary designs.}

We now recall the definition of unitary $t$-design, and some relevant variants. We begin by considering three different types of ``twirls.''
\begin{enumerate}
\item For a finite subset $\mathrm D\subset \mathrm{U}(\hi)$ of the unitary group on some finite dimensional Hilbert space $\hi$, let
\begin{equation}
\mathcal T^{(t)}_{\mathrm D}(X)=\frac{1}{|\mathrm D|}\sum_{U\in\mathrm D}U^{\otimes t}X\left(U^\dagger\right)^{\otimes t}
\end{equation}
be the associated $t$-twirling channel. If we take the entire unitary group (rather than just a finite subset), then we get the Haar $t$-twirling channel
\begin{equation}\label{eq:haar-twirl}
\mathcal T^{(t)}_\mathsf{Haar}(X)=\int U^{\otimes t}X\left(U^\dagger\right)^{\otimes t} \D U.
\end{equation}

The case $t = 2$ is characterized in \expref{Lemma}{lem:Usquared} in \expref{Appendix}{appendix:technical}.

\item We define the $U$-$\overline{U}$ twirl with respect to finite $\mathrm D\subset \mathrm{U}(\hi)$ by
\begin{equation}\label{eq:bar-twirl}
\overline{\mathcal{T}}_{\mathrm D}(X)
=\frac{1}{|\mathrm D|}\sum_{U\in\mathrm D}\left(U\otimes\overline U \right)X\left(U\otimes \overline{U}\right)^\dagger.
\end{equation}
The analogous $U$-$\overline{U}$ Haar twirling channel is denoted by $\overline {\mathcal T}_{\mathsf{Haar}}$. 

\item The third and final notion is called a channel twirl; this is defined in terms of $U$-$\overline{U}$-twirling, as follows. Given a channel $\Lambda$, let $\eta_\Lambda$ be the CJ state of $\Lambda$. The channel twirl $\mathcal T_{\mathrm{D}}^{ch}(\Lambda)$ of $\Lambda$ is defined to be the channel whose CJ state is $\overline{\mathcal{T}}_{\mathrm D}(\eta_\Lambda)$.
\end{enumerate}

Next, we define three notions of designs, corresponding to the three types of twirl defined above.

\begin{defn}\label{def:designs}
	Let $\mathrm D\subset \mathrm{U}(\hi)$ be a finite set. We define the following.
	\begin{itemize}
		\item If $\bigl\|\mathcal T^{(t)}_{\mathrm D}-\mathcal T^{(t)}_\mathsf{Haar}\bigr\|_\diamond\le\delta$ holds, then $\mathrm{D}$ is a $\delta$-approximate $t$-design.
		\item If $\bigl\|\overline{\mathcal T}_{\mathrm D}-\overline{\mathcal T}_\mathsf{Haar}\bigr\|_\diamond\le\delta$ holds, then $\mathrm{D}$ is a $\delta$-approximate $U$-$\overline{U}$-twirl design.
		\item If $\left\|\mathcal T^{ch}_{\mathrm D}(\Lambda)-T^{ch}_\mathsf{Haar}(\Lambda)\right\|_\diamond\le\delta$ holds for all CPTP maps $\Lambda$, then $\mathrm{D}$ is a $\delta$-approximate channel-twirl design.
	\end{itemize}
\end{defn} 
For all three of the above, the case $\delta = 0$ is called an ``exact design'' (or simply ``design''.) All three notions of design are equivalent in the exact case. In the approximate case they are still connected, but there are some nontrivial costs in the approximation quality (See \cite{low2010pseudo}, Lemma 2.2.14, and \expref{Lemma}{lem:channel-twirl-uubar-twirl} in \expref{Appendix}{appendix:technical}).

It is well-known that $\varepsilon$-approximate $t$-designs on $n$ qubits can be generated by random quantum circuits of size polynomial in $n, t$ and $\log(1/\varepsilon)$~\cite{Brandao2016a}. In particular, the size of these circuits is polynomial even for exponentially-small choices of $\varepsilon$. We emphasize this observation as follows.
\begin{rem}\label{rem:efficient}
Fix a polynomial $t$ in $n$. Then, for any $\varepsilon > 0$, a random $n$-qubit quantum circuit consisting of $\poly(n, \log(1/\varepsilon))$ gates (from a universal set) satisfies every notion of $\epsilon$-approximate $t$-design in \expref{Definition}{def:designs}. 
\end{rem}

For exact designs, we point out two important constructions. First, the prototypical example of a unitary one-design on $n$ qubits is the $n$-qubit Pauli group. For exact unitary two-designs, the standard example is the Clifford group, which is the normalizer of the $n$-qubit Pauli group. Alternatively, the Clifford group is generated by circuits from the gate set $\{H, P, \text{CNOT}\}$. It is well-known that one can efficiently generate exact unitary two-designs on $n$-qubits by building appropriate circuits from this gate set, using $O(n^2)$ random bits~\cite{Aaronson2004, dankert2009exact}.


\section{The zero-error setting}\label{sec:zero-error}

We begin with the zero-error. In the case of secrecy, zero-error means that schemes cannot leak any information whatsoever. In the case of non-malleability, zero-error means that the adversary cannot increase their correlations with the secret by even an infinitesimal amount (except by trivial means; see below.)

\subsection{Perfect secrecy}

We begin with a definition of symmetric-key quantum encryption. Our formulation treats rejection during decryption in a slightly different manner from previous literature.

\begin{defn}[Encryption scheme]\label{def:SKQES}
A symmetric-key quantum encryption scheme (\SKQES) is a triple $(\tau_K,E,D)$ consisting of a classical state $\tau_K\in \opr(\hi_K)$ and a pair of channels
\begin{align*}
E &: \opr(\hi_A\otimes \hi_K) \longrightarrow \opr(\hi_C\otimes \hi_K)\\
D &: \opr(\hi_C\otimes \hi_K)  \longrightarrow \opr(\left(\hi_A\oplus\C\ket{\bot}\right)\otimes \hi_K)
\end{align*}
satisfying $[D\circ E](\cdot\otimes \proj k)= (\mathrm{id}_{A} \oplus 0_\bot) \otimes \proj k$ for all $k$. 
\end{defn}
The Hilbert spaces $\hi_A$, $\hi_C$ and $\hi_K$ are implicitly defined by the triple $(\tau_K, E, D)$. The state $\ket{\bot}$ is an error flag that allows the decryption map to report an error. For notational convenience when dealing with these schemes, we set
\begin{alignat*}{2}
E_k &= E(\cdot\otimes \proj k)
\qquad \qquad 
&E_K &= \tr_KE(\cdot\otimes \tau_K)\\
D_k &= D(\cdot\otimes \proj k)
&D_K &= \tr_KD(\cdot\otimes \tau_K)\,.
\end{alignat*}
We will often slightly abuse notation by referring to decryption maps $D_k$ as maps from $C$ to $A$; in fact, the output space of $D_k$ is really the slightly larger space $\bar A := A \oplus \C\ket{\bot}$. 

It is natural to define secrecy in the quantum world in terms of quantum mutual information. However, instead of asking for the ciphertext to be uncorrelated with the plaintext as in the classical case, we ask for the ciphertext to be uncorrelated from any reference system. 

\begin{defn}[Perfect secrecy]\label{def:ITS}
A \SKQES~$(\tau_K, E, D)$ satisfies information - theoretic secrecy (\ITS) if, for any Hilbert space $\hi_B$ and any $\rho_{AB}\in \opr(\hi_A\otimes \hi_B)$, setting $\sigma_{CBK}=E(\rho_{AB}\otimes \tau_K)$ implies $I(C:B)_\sigma=0.$
\end{defn}

We note that, for perfect \ITS, adding side information is unnecessary: the definition already implies that the ciphertext is in product with \emph{any} other system. In particular, if the adversary has some auxiliary system $E$ in their possession, then $I(B:CE)_\sigma=I(B:E)_\sigma$. Several definitions of secrecy for symmetric-key quantum encryption have appeared in the literature, but the above formulation appears to be new. It can be shown that $\ITS$ is equivalent to perfect indistinguishability of ciphertexts (\IND). The latter notion is a special case of an early indistinguishability-based definition of Ambainis et al.~\cite{ambainis2000private}. 

In many situations it makes sense to restrict ourselves to \SKQES~that have identical plaintext and ciphertext spaces; due to correctness, this is equivalent to unitarity.
\begin{defn}[Unitary scheme]
	A \SKQES~$(\tau_K,E,D)$ is called unitary if the encryption and decryption maps are controlled unitaries, i.e., if there exists 
	$V = \sum_k U^{(k)}_A\otimes\proj k_K$ such that $E(X)=VXV^\dagger$.
\end{defn}
It is straightforward to prove that, for unitary schemes, \ITS~is equivalent to the statement that the encryption maps $\{E_k\}$ form a unitary 1-design. Note that unitarity of $E_k$ and correctness imply unitarity of $D_k$.

\subsection{A new notion of non-malleability}

\subsubsection{Definition.}
We consider a scenario involving a user Alice and an adversary Mallory. The scenario begins with Mallory preparing a tripartite state $\rho_{ABR}$ over three registers:  the plaintext $A$, the reference $R$, and the side-information $B$. The registers $A$ and $R$ are given to Alice, while Mallory keeps $B$. Alice then encrypts $A$ into a ciphertext $C$ and then transmits (or stores) it in the open. Mallory now applies an attack map
$$
\Lambda : \opr(\hi_C\otimes\hi_B) \to \opr(\hi_C\otimes\hi_{\tilde B})\,.
$$
Mallory keeps the (transformed) side-information $\tilde B$ and returns $C$ to Alice. Finally, Alice decrypts $C$ back to $A$, and the scenario ends. 
\begin{figure}[h]\label{fig:qnm}
\begin{center}
\includegraphics[width=0.5\textwidth]{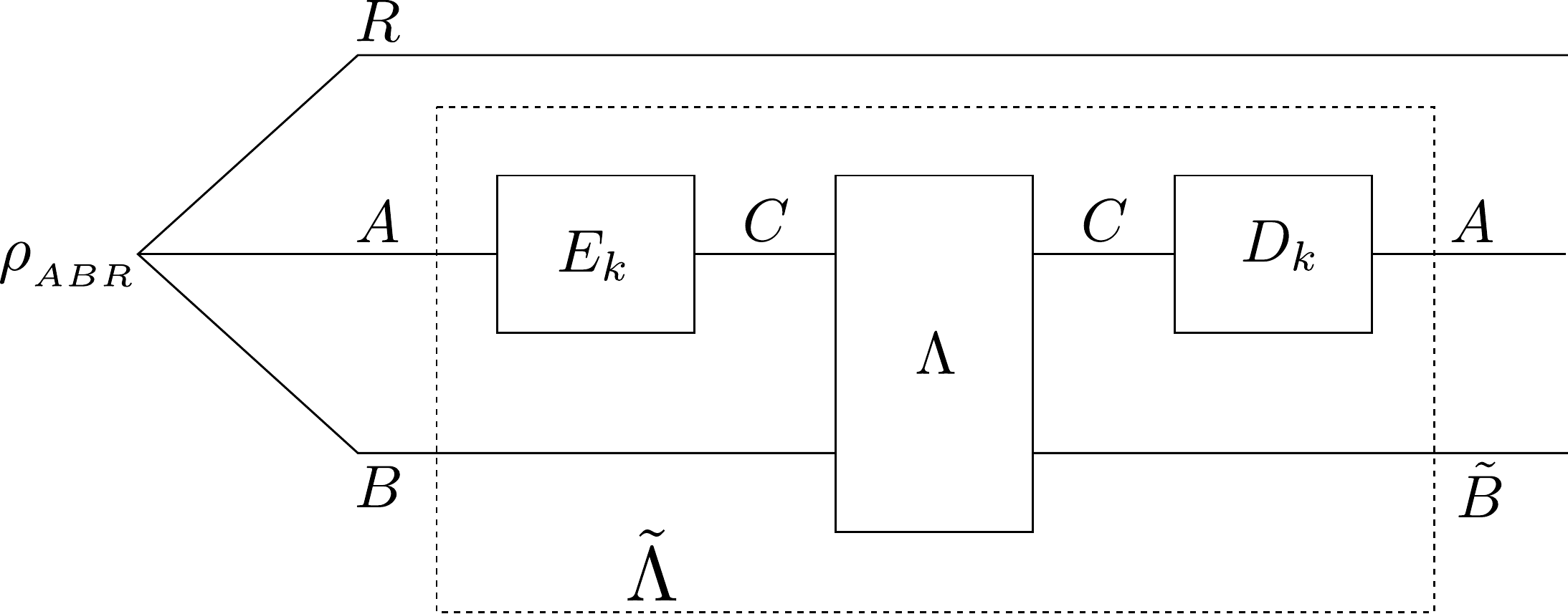}
\end{center}
\caption{The quantum non-malleability scenario.}
\end{figure}
We are now interested in measuring the extent to which Mallory was able to increase her correlations with Alice's systems $A$ and $R$. This can be understood by analyzing the mutual information $I(AR:\tilde B)_{\tilde\Lambda(\rho)}$ where $\tilde \Lambda_{AB \to A\tilde B}$ is the \emph{effective channel} corresponding to Mallory's attack:
\begin{equation}\label{eq:effective-channel}
\tilde\Lambda= \tr_K (D\circ\Lambda\circ E)((\cdot)\otimes \tau_K)\,.
\end{equation}
We point out one way in which Mallory can always increase these correlations, regardless of the structure of the encryption scheme. First, she flips a coin $b$, and records the outcome in $B$. If $b=1$, she replaces the contents of $C$ with some fixed state $\sigma_C$, and otherwise she leaves $C$ untouched. One then sees that Mallory's correlations have increased by $h(p_{=}(\Lambda,\rho))$, where $h$ denotes binary entropy and $p_{=}$ is a defined as follows.
\begin{equation}\label{eq:p-equals}
p_{=}(\Lambda, \rho) = 
\tr\left[ (\phi^+_{CC'}\otimes \mathds 1_{\tilde B}) \Lambda (\phi^+_{CC'} \otimes \rho_B)\right]\,.
\end{equation}
This quantity is the inner product between the identity map and the map $\Lambda((\,\cdot\,) \otimes \rho_B)$, expressed in terms of CJ states. Intuitively, it measures the probability with which Mallory chooses to apply the identity map; taking the binary entropy then gives us the information gain resulting from recording this choice.

We are now ready to define information-theoretic quantum non-malleability. Stated informally, a scheme is non-malleable if Mallory can only implement the attacks described above. 

\begin{defn}[Non-malleability]\label{def:qNM}
	A \SKQES~$(\tau_K,E, D)$ is non-malleable (\ITNM) if for any state $\rho_{ABR}$ and any CPTP map
$\Lambda_{CB \to C{\tilde B}}$, we have
	\begin{equation}\label{eq:ITNM-condition}
		I(AR:\tilde B)_{\tilde\Lambda(\rho)} \leq I(AR:B)_\rho + h(p_{=}(\Lambda,\rho)).
	\end{equation}
\end{defn}

One might justifiably wonder if the term $h(p_{=}(\Lambda, \rho))$ is too generous to the adversary. However, as we showed above, every scheme is vulnerable to an attack which gains this amount of information. This term also appears (somewhat disguised) in the classical setting. In fact, if a classical encryption scheme satisfies \expref{Definition}{def:qNM} against classical adversaries, then it also satisfies classical information-theoretic non-malleability as defined in \cite{kawachi2011characterization}. 
Finally, as we will show in later sections, \expref{Definition}{def:qNM} implies $\ABWNM$ (see \expref{Definition}{def:ABWNM}), and schemes satisfying \expref{Definition}{def:qNM} are sufficient for building quantum authentication under the strongest known definitions.

\subsubsection{Non-malleability implies secrecy.}

In the classical case, non-malleability is independent from secrecy: the one-time pad is secret but malleable, and non-malleability is unaffected by appending the plaintext to each ciphertext. In the quantum case, on the other hand, we can show that $\ITNM$ implies secrecy. This is analogous to the fact that ``quantum authentication implies encryption''~\cite{barnum2002authentication}. The intuition is straightforward: (i.) one can only make use of one copy of the plaintext due to no-cloning, and (ii.) if the adversary can distinguish between two computational-basis states (e.g., $\ket{0}$ and $\ket{1}$) then they can also apply an undetectable Fourier-basis operation (e.g., mapping $\ket{+}$ to $\ket{-}$). The technical statement and proof follows.

\begin{prop}\label{thm:ITNMtoITS}
	Let $(\tau_K,E, D)$ be an \ITNM~\SKQES. Then $(\tau_K,E, D)$ is \ITS.
\end{prop}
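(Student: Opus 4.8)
The plan is to prove the contrapositive: assuming the scheme is \emph{not} \ITS, I will produce a state and an attack violating the inequality of \expref{Definition}{def:qNM}. The first step is to recast perfect secrecy as exact decoupling. Because $E_K$ averages over the key, \ITS~is equivalent to the statement that the average encryption channel is the constant channel $\langle\tau_C\rangle_{A\to C}$; feeding in the maximally entangled state and writing $\sigma_{CR}=E_K(\phi^+_{AR})$, this is the same as $\sigma_{CR}=\tau_C\otimes\tau_R$, i.e.\ $\ell:=I(C:R)_\sigma=0$. Since the maximally entangled input carries the strongest possible correlation with a reference, it suffices to show that $\ell>0$ is incompatible with \ITNM. (I would justify the reduction to $\phi^+_{AR}$ using the fact that any correlation of $C$ with a side register factors through its correlation with the purifying reference.)

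The core of the argument is then an \textbf{information-stealing attack}. Take $\rho_{ABR}=\phi^+_{AR}$ with $B$ trivial, so the right-hand side of \eqref{eq:ITNM-condition} reduces to $h(p_{=}(\Lambda,\rho))$. Let $\Lambda_{C\to C\tilde B}$ swap the entire ciphertext into $\tilde B$ and return a fixed replacement $\tau_C$ to Alice. Since $\tilde B$ now literally holds the transmitted ciphertext, the reduced state on $R\tilde B$ equals $\sigma_{RC}$, and monotonicity gives $I(AR:\tilde B)_{\tilde\Lambda(\rho)}\ge I(R:\tilde B)=\ell$, \emph{independently} of whatever garbage decryption produces in $A$. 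A short Choi--Jamio\l kowski computation shows $p_{=}=|C|^{-2}$ for this attack. Hence \ITNM~would force $\ell\le h(|C|^{-2})$, which is already a contradiction whenever the leakage is not extremely small.

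The remaining, and most delicate, case is that of \emph{small} leakage, where the $h(p_=)$ penalty of a wholesale replacement can match or exceed the gain (for $|C|=2$ the crude attack only breaks even). Overcoming this is the main obstacle, and it is exactly where the quantum structure---complementarity---must be used. The idea is to replace the blunt steal with a gentler, basis-aware attack: non-secrecy means Mallory can extract information about the plaintext in some basis from $C$, so she coherently copies that information into $\tilde B$; the unavoidable back-action dephases the complementary logical basis, and because $\phi^+_{AR}$ is correlated in \emph{every} basis this disturbance reappears as extra correlation with $R$ that the single ``did-I-touch-it'' bit cannot account for. Making this quantitative amounts to an information--disturbance trade-off: one must show that for any $\ell>0$ there is a suitably tuned measurement-and-record attack whose total recorded correlation strictly exceeds $h(p_=)$, ruling out the possibility that the binary-entropy term silently absorbs the adversary's gain. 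I expect this trade-off, together with the choice of the optimal distinguishing observable supplied by non-secrecy, to be the crux of the proof.
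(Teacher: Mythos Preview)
Your strategy is sound up to the ``delicate case,'' but that case is not proved---you only gesture at an information--disturbance trade-off without carrying it out. This is a genuine gap: for small leakage $\ell\le h(|C|^{-2})$ your crude steal-and-replace attack gives nothing, and the complementarity argument you sketch would require a quantitative bound you have not supplied.

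The paper avoids this entire issue with one clean trick you are missing: rather than live with $p_==|C|^{-2}$, it \emph{projects out the identity component} of the attack at the Choi--Jamio\l kowski level. Starting from your ciphertext-extraction map $\Theta$ (CJ state $\phi^+_{C'\tilde B}\otimes\tau_C$), define $\Lambda$ by
\[
\eta^{\Lambda}_{CC'\tilde B}=\frac{d^2}{d^2-1}\,\Pi^-_{CC'}\,\eta^{\Theta}_{CC'\tilde B}\,\Pi^-_{CC'},
\]
which forces $p_=(\Lambda)=0$ exactly. A direct computation of the $C'\tilde B$ marginal shows that $\Lambda$ still extracts essentially the full ciphertext into $\tilde B$: the reduced output on $\tilde BR$ is a convex combination $\tfrac{d^2-2}{d^2-1}\gamma_{\tilde BR}+\tfrac{1}{d^2-1}\tau_{\tilde B}\otimes\rho_R$ with $\gamma=E_K(\rho_{AR})$. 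Now \ITNM\ gives $I(\tilde B:AR)=0$, hence $\sigma_{\tilde BR}$ is product, and the affine relation forces $\gamma_{\tilde BR}$ to be product as well---which is precisely \ITS. No small-leakage case, no trade-off needed.

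Two minor remarks on your setup. First, \ITS\ for general (non-unitary) schemes says $E_K$ is a constant channel $\langle\sigma^0_C\rangle$, not necessarily $\langle\tau_C\rangle$; this does not affect your argument but is worth stating correctly. Second, the paper does not reduce to $\phi^+_{AR}$ at all---it works with an arbitrary $\rho_{AR}$ directly, so your data-processing reduction (while plausible) is unnecessary once you use the $p_==0$ attack.
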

\begin{proof}
Let $B$, $\rho_{AB}$, and $\sigma_{CBK} = E(\rho_{AB} \otimes \tau_K)$ be as in the definition of \ITS~(\expref{Definition}{def:ITS}). We first rename $B$ to $R$. We then consider the non-malleability property in the following special-case scenario. The initial side-information register is empty, the final side-information register $\tilde B$ satisfies $\hi_{\tilde B} \cong \hi_C$, and the adversary map $\Lambda_{C\to C\tilde B}$ is defined as follows. Note that the ``ciphertext-extraction'' map $\Theta_{C\to C\tilde B}=\mathrm{id}_{C\to \tilde B}(\cdot)\otimes \tau_C$ has CJ state $\eta^{\Theta}_{CC'\tilde B}=\phi^+_{C'\tilde B}\otimes \tau_C$. We choose $\Lambda$ so that its CJ state satisfies
	\begin{equation}
	\eta^{\Lambda}_{CC'\tilde B}=\frac{d^2}{d^2-1}\Pi_{CC'}^- \,\eta^{\Theta}_{CC'\tilde B}\,\Pi_{CC'}^-\,.
	\end{equation}
Applying the above projection to the CJ state of $\Theta$ ensures that $\Lambda$ will have $p_=({\Lambda})=0$ (note: $p_=(\Theta) > 0$.)
		
Direct calculation of the $C' \tilde B$ marginal of the CJ state of $\Lambda$ yields
	\begin{equation}
	\eta^{\Lambda}_{C'\tilde B}=\frac{d^2-2}{d^2-1}\phi^+_{C'\tilde B}+\frac{1}{d^2-1}\tau_{C'}\otimes\tau_{\tilde B}.
	\end{equation}
This implies that the output $\sigma_{AR\tilde B}=\tilde\Lambda_{A\to A\tilde B}(\rho_{AB})$ of the effective channel $\tilde \Lambda$ will satisfy
	\begin{equation}\label{eq:eq1}
	\sigma_{\tilde BR}=\frac{d^2-2}{d^2-1}\gamma_{\tilde BR}+\frac{1}{d^2-1}\tau_{\tilde B}\otimes\rho_R,
	\end{equation}
	where $\gamma_{CR}=(E_K)_{A\to C}(\rho_{AR})$ and we used the fact that $\hi_{\tilde B} \cong \hi_C$. By non-malleability, we have
	\begin{equation}\label{eq:eq2}
	I(\tilde B:R)_{\sigma}+I(\tilde B:A|R)_{\sigma}=I(\tilde B:AR)_{\sigma}=0.
	\end{equation}
	In particular, $I(\tilde B:R)_{\sigma}=0$ and thus $\sigma_{\tilde BR}=\sigma_{\tilde B}\otimes \rho_R.$	
	It follows by Equation \eqref{eq:eq1} that
	\begin{equation}
	\gamma_{\tilde BR}=\frac{d^2-1}{d^2-2}\left(\sigma_{\tilde B}-\frac{1}{d^2-1}\tau_{\tilde B}\right)\otimes\rho_R,
	\end{equation}
	i.e., $\gamma_{\tilde BR}$ is a product state. This is precisely the definition of information-theoretic secrecy.
	\qed
\end{proof}

\subsubsection{Characterization of non-malleable schemes.}\label{sec:effective-char}

Next, we provide a characterization of non-malleable schemes. First, we show that unitary schemes are equivalent to encryption with a unitary 2-design.

\begin{thm}\label{thm:USKQES-NM-2design}
A unitary \SKQES~$(\tau_K, E, D)$ is \ITNM~if and only if $\{E_k\}_{k\in K}$ is a unitary 2-design.
\end{thm}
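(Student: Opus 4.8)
The plan is to reduce everything to a statement about twirls and then treat the two implications separately. Since the scheme is unitary, correctness forces $D_k=U_k^\dagger(\cdot)U_k$ where $E_k=U_k(\cdot)U_k^\dagger$, and $\hi_A\cong\hi_C$ with common dimension $d$. Substituting into \eqref{eq:effective-channel}, the effective channel becomes
$$
\tilde\Lambda=\frac{1}{|K|}\sum_k (U_k^\dagger)_A\,\Lambda\,(U_k)_A=\mathcal T^{ch}_{\{U_k\}}(\Lambda),
$$
i.e. exactly the channel twirl of $\Lambda$ by $\{U_k\}$. I would verify this on CJ states: conjugating the input and output copies of $C$ by $U_k$ implements the $U$-$\overline U$ twirl $\overline{\mathcal T}_{\{U_k\}}$ on the $CC'$ block of $\eta_\Lambda$, which is precisely the definition of the channel twirl. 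This single identity is the hinge for both directions.

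For the ``if'' direction I would assume $\{U_k\}$ is a $2$-design. By the equivalence of the three design notions in the exact case (\expref{Definition}{def:designs} and the cited lemma in \expref{Appendix}{appendix:technical}), $\{U_k\}$ is then also a channel-twirl design, so $\tilde\Lambda=\mathcal T^{ch}_{\mathsf{Haar}}(\Lambda)$ for every $\Lambda$. Applying Schur's lemma to the $U$-$\overline U$ representation, whose commutant is spanned by $\phi^+_{CC'}$ and $\Pi^-_{CC'}$, gives the explicit form
$$
\eta_{\tilde\Lambda}=\phi^+_{CC'}\otimes P_{\tilde B B'}+\tau^-_{CC'}\otimes R_{\tilde B B'},\qquad P,R\ge 0,
$$
equivalently $\tilde\Lambda=\id_A\otimes\Lambda'+\tfrac{1}{d^2-1}\bigl(d^2\langle\tau\rangle-\id\bigr)_A\otimes\Lambda''$, where $\Lambda',\Lambda''$ are the CP maps with CJ states $P,R$. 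It then remains to verify \eqref{eq:ITNM-condition} for this form. The intended mechanism is to adjoin a classical flag $F$ recording whether the attack ``acted as the identity'' (the $\phi^+$ outcome) or ``disturbed'' the ciphertext (the $\Pi^-$ outcome), so that $I(AR:\tilde B)\le I(AR:\tilde B F)=I(AR:F)+I(AR:\tilde B\mid F)$; here $I(AR:F)\le h(p_{=})$, and on each branch data processing on the $B$-register, together with the fact that the disturb branch decouples $A$ (the $1$-design property gives $D_K(\sigma)=\tau_A$), should yield $I(AR:\tilde B\mid F)\le I(AR:B)_\rho$.

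The main obstacle lives precisely in this last step. The naive splitting of $\tilde\Lambda$ into an ``identity branch'' and a ``replace-by-$\tau$ branch'' is \emph{not} a genuine convex decomposition into channels, because $\Lambda'-\tfrac{1}{d^2-1}\Lambda''$ (equivalently $P-\tfrac{1}{d^2-1}R$) need not be completely positive; this already fails for $p_{=}$-trivial attacks such as $\eta_\Lambda=\tau^-_{CC'}\otimes\omega$, for which $P=0$ and hence $p_{=}=0$. Thus the flag cannot be read off from a probabilistic mixture: it must be produced coherently (e.g. from a Stinespring dilation of $\tilde\Lambda$ in which the syndrome measurement $\{\phi^+,\Pi^-\}$ is recorded), and the branch bounds must be argued directly from the CJ decomposition. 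Making the slack come out as exactly $h(p_{=})$ and no more is the delicate part, and is where I expect the real work to be.

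For the ``only if'' direction I would argue the contrapositive, using that \ITNM\ already implies \ITS\ (\expref{Proposition}{thm:ITNMtoITS}) and hence that $\{U_k\}$ is at least a $1$-design; the task is to upgrade this to a $2$-design. If $\{U_k\}$ were not a $2$-design then, by the same design equivalences, it would fail to be a channel-twirl design, so some attack $\Lambda_0$ would satisfy $\mathcal T^{ch}_{\{U_k\}}(\Lambda_0)\ne\mathcal T^{ch}_{\mathsf{Haar}}(\Lambda_0)$; concretely, the CJ state of $\tilde\Lambda$ would retain a component of $\eta_{\Lambda_0}$ on the $CC'$ block lying outside $\spa\{\phi^+_{CC'},\tau^-_{CC'}\}$. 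Feeding this attack a maximally entangled plaintext-reference $\rho_{AR}=\phi^+_{AR}$ (with a suitably chosen side register $B$) lets $AR$ probe that surviving component, so the effective channel transfers to $\tilde B$ strictly more information about $AR$ than the single identity-versus-disturb bit allows, giving $I(AR:\tilde B)_{\tilde\Lambda(\rho)}>I(AR:B)_\rho+h(p_{=}(\Lambda_0,\rho))$ and contradicting \ITNM. Turning the deviation $\bigl\|\overline{\mathcal T}_{\{U_k\}}-\overline{\mathcal T}_{\mathsf{Haar}}\bigr\|$ into an explicit quantitative violation is the technical content here, but it is governed by the same Schur-lemma structure as the ``if'' direction.
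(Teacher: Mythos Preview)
Your ``main obstacle'' in the \emph{if} direction is artificial: you have chosen the wrong decomposition. You try to split $\tilde\Lambda$ as $\id_A\otimes(\Lambda'-\tfrac{1}{d^2-1}\Lambda'')+\langle\tau\rangle_A\otimes\tfrac{d^2}{d^2-1}\Lambda''$, and then correctly observe that the first $B$-factor need not be CP. But the decomposition you already wrote down,
\[
\tilde\Lambda=\id_A\otimes\Lambda'+\tfrac{1}{d^2-1}\bigl(d^2\langle\tau\rangle-\id\bigr)_A\otimes\Lambda'',
\]
is \emph{already} of the form $\sum_i (\text{CPTP on }A)\otimes(\text{CP on }B)$: the second $A$-map has CJ state $(d^2\tau_A\otimes\tau_{A'}-\phi^+_{AA'})/(d^2-1)=\tau^-_{AA'}$, which is a bona fide density operator, hence the map is CPTP. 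Moreover $\Lambda'+\Lambda''$ is TP because $\phi^++\Pi^-=\mathds 1$. So the flag argument goes through directly (this is precisely the content of \expref{Lemma}{lem:DP-CPTPtensCP}), and since $\tr\Lambda'(\rho_B)=p_{=}(\Lambda,\rho)$ the slack is exactly $h(p_{=})$ with no further work. There is nothing delicate here once you stop trying to force the second branch to be the pure depolarizer $\langle\tau\rangle$.

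For the \emph{only if} direction, the paper does not argue by contrapositive. It proves a general characterization theorem (\expref{Theorem}{thm:effective-char}) valid for arbitrary (non-unitary) schemes: \ITNM\ holds if and only if every effective map has the displayed form. The forward implication is established by applying the \ITNM\ inequality to carefully chosen attacks with \emph{trivial} side information $B$ and reading off the structure of the map $\mathcal E=|K|^{-1}\sum_k D_k\otimes E_k^T$ on the CJ picture (see \expref{Lemma}{lem:eps-like2des}). Once that is in hand, \expref{Theorem}{thm:USKQES-NM-2design} is a one-line corollary: for unitary schemes $|A|=|C|$ and $D_K(\tau_C)=\tau_A$, so the characterization specializes exactly to the channel-twirl-design condition. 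Your contrapositive sketch could presumably be made to work in the unitary case, but it would essentially reinvent the relevant piece of the characterization theorem; the paper's route has the advantage of also yielding the general (non-unitary) result and the implication \ITNM$\Rightarrow$\ABWNM\ for free.
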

This fact is particularly intuitive when the 2-design is the Clifford group, a well-known exact 2-design. In that case, a Pauli operator acting on only one ciphertext qubit will be ``propagated'' (by the encryption circuit) to a completely random Pauli on all plaintext qubits. The plaintext is then maximally mixed, and the adversary gains no information. The Clifford group thus yields a perfectly non-malleable (and perfectly secret) encryption scheme using $O(n^2)$ bits of key~\cite{Aaronson2004}.

It will be convenient to prove \expref{Theorem}{thm:USKQES-NM-2design} as a consequence of our general characterization theorem, which is as follows. 

\begin{thm}\label{thm:effective-char}
	Let $(\tau, E, D)$ be a \SKQES. Then $(\tau, E, D)$ is $\ITNM$ if and only if, for any attack $\Lambda_{CB\to C\tilde B}$, the effective map $\tilde \Lambda_{AB\to A\tilde B}$ has the form
		\begin{equation}\label{eq:effective-char}
		\tilde \Lambda =\id_A\otimes \Lambda'_{B\to\tilde B}+\frac{1}{|C|^2-1}\left(|C|^2\left\langle D_K(\tau)\right\rangle-\id\right)_A\otimes \Lambda''_{B\to\tilde B}
		\end{equation}
		where $\Lambda' =\tr_{CC'}[\phi^+_{CC'}\Lambda(\phi^+_{CC'}\otimes (\cdot))]$ and $\Lambda'' =\tr_{CC'}[\Pi^-_{CC'}\Lambda(\phi^+_{CC'}\otimes (\cdot))].$

\end{thm}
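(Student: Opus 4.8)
The plan is to understand the effective channel $\tilde\Lambda$ explicitly as a function of the attack $\Lambda$ via the Choi--Jamio\l kowski isomorphism, and then read off both implications from this description. Feeding the maximally entangled state into $\tilde\Lambda$ and expanding $\tilde\Lambda(\phi^+_{AA'}\otimes(\cdot)_B)=\tr_K[D(\Lambda(E(\phi^+_{AA'}\otimes\tau_K)\otimes(\cdot)_B))]$, I would use correctness $D_k\circ E_k=\id_A$ together with the swap trick (Lemma \ref{lem:swap-trick}) to express the key-average as a twirl acting on the $C$-register of $\Lambda$'s Choi. For a general scheme this twirl retains much of $\Lambda$'s structure; the identity-channel direction $\phi^+_{CC'}$ and its orthogonal complement $\Pi^-_{CC'}$ single out the two maps $\Lambda'$ and $\Lambda''$, while the averaged decryption of the maximally mixed ciphertext produces the fixed state $D_K(\tau)$ (with $\tau=\tau_C$). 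The content of the theorem is then that non-malleability is exactly the condition that the effective channel is supported only on these two components, i.e.\ takes the form \eqref{eq:effective-char}.

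For the ``if'' direction I would verify the non-malleability inequality directly from the stated form. The key observation is that $\{\Lambda',\Lambda''\}$ is a quantum instrument on $B$: both are completely positive, and since $\phi^+_{CC'}+\Pi^-_{CC'}=\mathds 1_{CC'}$ and $\Lambda$ is trace preserving we have $\tr[\Lambda'(\xi)]+\tr[\Lambda''(\xi)]=\tr[\xi]$. I would record the instrument outcome in a classical flag $F$ and apply data processing to $I(AR:\tilde B)$. The flag contributes at most $h(p_=(\Lambda,\rho))$; conditioned on the ``undisturbed'' outcome the plaintext is left intact, so that branch contributes at most $I(AR:B)_\rho$ by data processing on the channel $\Lambda'/p_=$ acting on $B$; and conditioned on the ``disturbed'' outcome the plaintext is replaced by the fixed state $D_K(\tau)$, which decouples $A$ and leaves only $I(R:\tilde B)\le I(AR:B)_\rho$. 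The delicate point is that the plaintext action on the disturbed branch is $\tfrac{1}{|C|^2-1}(|C|^2\langle D_K(\tau)\rangle-\id)_A$, which is trace preserving but \emph{not} completely positive; I would control the $-\tfrac{1}{|C|^2-1}\id_A$ correction either by an exact rearrangement of the flagged state or by showing that it leaves the relevant $AR$-$\tilde B$ correlations unchanged.

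For the ``only if'' direction I would argue contrapositively: assuming $\tilde\Lambda$ deviates from \eqref{eq:effective-char} for some attack, I would exhibit an input violating \eqref{eq:ITNM-condition}. The natural probe takes $A$ maximally entangled with the reference $R$ and an attack with trivial input side-information whose output register $\tilde B$ records the deviation of the ciphertext (for instance, storing the relevant component of $\Lambda$'s Choi on $CC'$). With $I(AR:B)_\rho=0$, non-malleability then demands $I(AR:\tilde B)_{\tilde\Lambda(\rho)}\le h(p_=(\Lambda,\rho))$; since the input is maximally entangled, $\tilde\Lambda(\phi^+_{AR})$ is precisely the Choi state of the effective channel, so this single inequality constrains the entire Choi state of $\tilde\Lambda$. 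I expect this to be the main obstacle: one must show that \emph{any} component of $\tilde\Lambda$ outside the span of $\id_A$ and $\langle D_K(\tau)\rangle_A$ can be detected by a suitable choice of $\tilde B$, yielding strictly more correlation than $h(p_=)$ permits, and hence must vanish. Here the secrecy consequence (Proposition \ref{thm:ITNMtoITS}), which already forces $E_K$ to be a fixed-output channel, should serve as the anchor that pins the replacement state to $D_K(\tau)$ and excludes the remaining terms.
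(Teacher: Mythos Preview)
Your ``if'' direction is essentially the paper's: the flagged-instrument argument is precisely \expref{Lemma}{lem:DP-CPTPtensCP}, and you have correctly isolated the only subtlety, namely that the $A$-map on the disturbed branch is trace-preserving but not completely positive. The paper does not spell out how this is absorbed either; in practice one applies the lemma with the two CPTP $A$-maps $\id_A$ and $\langle D_K(\tau)\rangle_A$ and notes that replacing the latter by $\tfrac{1}{|C|^2-1}(|C|^2\langle D_K(\tau)\rangle-\id)_A$ only changes the output state by a term that does not increase $I(AR:\tilde B)$, so your instinct to handle the $-\tfrac{1}{|C|^2-1}\id_A$ correction by an exact rearrangement is the right one.

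The ``only if'' direction, however, has a real gap. Your plan to ``express the key-average as a twirl'' via the swap trick only works for \emph{unitary} schemes; for general $(\tau,E,D)$ the relevant object is the map $\mathcal E=|K|^{-1}\sum_k D_k\otimes E_k^T$ acting on $\opr(\hi_C^{\otimes 2})$, and it is not a twirl. The paper does not argue contrapositively. Instead it decomposes $\opr(\hi_C^{\otimes 2})$ into four pieces according to $\phi^+$ versus $\Pi^-$ and computes $\mathcal E$ on each. The diagonal pieces are easy (correctness gives $\mathcal E(\phi^+)=\tfrac{|A|}{|C|}\phi^+$, and for $p_=(\Lambda)=0$ the $\ITNM$ bound with trivial $B$ forces the $\Pi^-$ block to the stated constant). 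The hard part is the off-diagonal blocks $\ketbra{\phi^+}{v}$ with $\braket{\phi^+}{v}=0$: these are not CJ states of any channel, so no entropic inequality applies to them directly, and your ``detect the deviation in $\tilde B$'' strategy cannot reach them. The paper's key ingredient here is the structural \expref{Lemma}{lem:SKQES-char} (every $E_k$ is ``append a fixed ancilla state, then apply an isometry''), which lets one compute $\mathcal E(\ketbra{\phi^+}{v})$ explicitly as $\tfrac{|A|}{|C|}\phi^+_{AA'}\,E_K^\dagger(X)$ for traceless $X$ (\expref{Lemma}{lem:E-of-offd}). Only then does secrecy enter: from the already-established behaviour on the $\Pi^-$ block one deduces $E_K=\langle\tau_C\rangle$, hence $E_K^\dagger$ annihilates traceless matrices and the off-diagonal blocks vanish. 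Your appeal to \expref{Proposition}{thm:ITNMtoITS} is therefore on the right track, but it cannot replace the structural lemma: without knowing the Stinespring form of $E_k$ and $D_k$ you have no handle on the cross terms at all.
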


We remark that the forward direction holds even if $(\tau, E, D)$ only fulfills the $\ITNM$ condition (Equation \eqref{eq:ITNM-condition}) against adversaries with empty side-information $B$. The proof of \expref{Theorem}{thm:effective-char} (with this strengthening) is sketched below. The full proof is somewhat technical and can be found in \expref{Appendix}{app:proofs}. More precisely, we prove the stronger \expref{Theorem}{thm:eps-effective-char-app}, which implies the above by setting $\varepsilon=0$.

\textit{Proof sketch.}
The first implication, i.e. $\ITNM$ implies Equation \eqref{eq:effective-char}, is best proven in the Choi-Jamoi\l kowski picture. Here, any $\SKQES$ defines a map
\begin{equation}
\mathcal E_{CC'\to AA'}=\frac{1}{|K|}\sum_k D_k\otimes E_k^T,
\end{equation}
where the transpose $E_k^T$ is the map whose Kraus operators are the transposes of the Kraus operators of $E_k$ (in the standard basis). Our goal is to prove that this map essentially acts like the $U\bar U$-twirl. We decompose the space $\hi_C^{\otimes 2}$ as
\begin{equation}
\hi_C^{\otimes 2}=\C\ket{\phi+}\oplus\supp\Pi^-
\end{equation}
which induces a decomposition of 
\begin{align}
\opr(\hi_C^{\otimes 2})
&=\C\proj{\phi^+}\oplus\left\{\ketbra{\phi^+}{v}\Big|\braket{\phi^+}{v}=0\right\}\nonumber\\
&\oplus\left\{\ketbra{v}{\phi^+}\Big|\braket{\phi^+}{v}=0\right\}\oplus\left\{X\in B\Big|\bra{\phi^+}X=X\ket{\phi^+}=0\right\}.
\end{align}
On the first and last direct summands, the correct behavior of $\mathcal E$ is easy to show: the first one corresponds to the identity, and the last one to the non-identity channels $\Lambda$ with $p_=(\Lambda)=0$. For the remaining two spaces, we employ \expref{Lemma}{lem:SKQES-char} which shows that the encryption map of any valid encryption scheme has the form of appending an ancillary mixed state and then applying an isometry. Evaluating $\mathcal E(\ketbra{\phi+}{v})$ for $\braket{\phi^+}{v}=0$ reduces to evaluating the adjoint of the average encryption map, $E^\dagger_K$, on traceless matrices. It is, however, easy to verify that $$\tr_A\mathcal E_{CC'\to AA'}(\sigma_C\otimes(\cdot)_{C'})=(E_K^T)_{C'\to A'}$$ for any $\sigma_C$. This can be used to prove $E_K=\langle\tau_C\rangle$ by observing that $\bra{\phi^+}_{CC'}\sigma_C\otimes\rho_{C'}\ket{\phi^+}_{CC'}=\tr(\sigma_C\rho_{C})$, so for rank-deficient $\rho$ we can calculate $\mathcal E_{CC'\to AA'}(\sigma_C\otimes(\cdot)_{C'})$ using what we have already proven.

The other direction is proven by a simple application of  \expref{Lemma}{lem:DP-CPTPtensCP}.
\qed
\hspace{.4cm}

The fact that $\ITNM$ is equivalent to 2-designs (for unitary schemes) is a straightforward consequence of the above.

\begin{proof} (of \expref{Theorem}{thm:USKQES-NM-2design})
First, assume $(\tau_K, E, D)$ is a unitary $\ITNM$ $\SKQES$ with $E_k=U_k(\cdot)U_k^\dagger$. Then it has $|C|=|A|$, and $D_K(\tau_C)=\tau_A$, so the conclusion of \expref{Theorem}{thm:effective-char} in this case (i.e., Equation \eqref{eq:effective-char}) is exactly the condition for $\{U_k\}$ to be an exact channel twirl design and therefore an exact 2-design. If $(\tau_K, E, D)$, on the other hand, is a unitary $\SKQES$ and $\{U_k\}$ is a 2-design, then Equation \eqref{eq:effective-char} holds and the scheme is therefore $\ITNM$ according to \expref{Theorem}{thm:effective-char}.
\end{proof}

\subsubsection{Relationship to ABW non-malleability.}\label{sec:ABW-exact}
Ambainis, Bouda and Winter give a different definition of non-malleability, expressed in terms of the effective maps that an adversary can apply to the plaintext by acting on the ciphertext produced from encrypting with a random key~\cite{ambainis2009nonmalleable}. According to their definition, a scheme is non-malleable if the adversary can only apply maps from a very restricted class \emph{when averaging over the key, and without giving side information to the active adversary}. Let us recall their definition here. 

First, given a \SKQES~$(\tau_K, E,D)$, we define the set $S := \{ D_K(\sigma_C) \,|\, \sigma_C \in \opr(\hi_C)\}$ consisting of all valid average decryptions. We then define the class $C^S_A$ of all ``replacement channels''. This is the set of CPTP maps belonging to the space
\begin{equation}
	\spa_{\R}\{\mathrm{id}_A, (X\mapsto \tr(X)\sigma_A) : \sigma_A\in S\}\,.
\end{equation}
We then make the following definition, which first appeared in~\cite{ambainis2009nonmalleable}.
\begin{defn}[ABW non-malleability]\label{def:ABWNM}
	A \SKQES~$(\tau_K, E,D)$ is ABW-non-malleable (\ABWNM) if it is \ITS, and for all channels $\Lambda_{C\to C}$, we have
	 \begin{equation}\label{eq:abw}
	\tr_K \left[D_{CK\to AK} \circ \Lambda_{C\to C} \circ E_{AK\to CK}(\,\cdot\,\otimes\tau_K)\right] \,\in\, C_A^S.
	\end{equation}
\end{defn}

As indicated in~\cite{ambainis2009nonmalleable}, an approximate version of Equation \eqref{eq:abw} is obtained by considering the diamond-norm distance between the effective channel and the set $C_A^S$; this implies the possibility of an auxiliary reference system, which is denoted $R$ in $\ITNM$. We emphasize that this reference system is not under the control of the adversary. In particular, \ABWNM~does not allow for adversaries which maintain \emph{and actively use} side information about the plaintext system. 

Another notable distinction is that~\cite{ambainis2009nonmalleable} includes a secrecy assumption in the definition of an encryption scheme; under this assumption, it is shown that a unitary \SKQES~is \ABWNM~if and only if the encryption unitaries form a 2-design. By our \expref{Theorem}{thm:USKQES-NM-2design}, we see that \ITNM~and \ABWNM~are equivalent in the case of unitary schemes. So, in that case, $\ABWNM$ actually ensures a much stronger security notion than originally considered by the authors of~\cite{ambainis2009nonmalleable}.

In the general case, $\ITNM$ is strictly stronger than $\ABWNM$. First, by comparing the conditions of \expref{Definition}{def:ABWNM} to Equation \eqref{eq:effective-char}, we immediately get the following corollary of \expref{Theorem}{thm:effective-char}.
\begin{cor}\label{cor:NM-implies-ABWNM}
	If a $\SKQES$ satisfies $\ITNM$, then it also satisfies \ABWNM.
\end{cor}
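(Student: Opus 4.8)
The plan is to derive the two requirements of \ABWNM~(\expref{Definition}{def:ABWNM}) separately, each from a result already established for \ITNM~schemes. Recall that \ABWNM~demands (i.) that the scheme be \ITS, and (ii.) that for every channel $\Lambda_{C\to C}$ the key-averaged effective map lie in the replacement class $C_A^S$. Requirement (i.) is immediate: by \expref{Proposition}{thm:ITNMtoITS}, every \ITNM~scheme is already \ITS. It therefore remains only to establish (ii.), and for this I would invoke the characterization in \expref{Theorem}{thm:effective-char}, specialized to the case of trivial side information.

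First I would observe that the \ABWNM~effective map in Equation \eqref{eq:abw} is exactly the effective channel $\tilde\Lambda$ of \expref{Definition}{def:qNM} in the special case where the registers $B$ and $\tilde B$ are trivial and $\Lambda$ acts on $C$ alone. Setting $B$ and $\tilde B$ to be one-dimensional in \expref{Theorem}{thm:effective-char} collapses the two maps $\Lambda' =\tr_{CC'}[\phi^+_{CC'}\Lambda(\phi^+_{CC'}\otimes(\cdot))]$ and $\Lambda'' =\tr_{CC'}[\Pi^-_{CC'}\Lambda(\phi^+_{CC'}\otimes(\cdot))]$ to nonnegative scalars, since they are then CP maps between one-dimensional spaces. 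Equation \eqref{eq:effective-char} then reads $\tilde\Lambda = a\,\id_A + b\,(|C|^2\langle D_K(\tau)\rangle - \id)_A$ for real $a,b\ge 0$, which I would rearrange into the affine combination $\tilde\Lambda = (a-b)\,\id_A + b|C|^2\langle D_K(\tau)\rangle_A$ of the identity channel and the replacement channel with output state $D_K(\tau_C)$.

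The final step is to recognize this as membership in $C_A^S$. By the definition of the set $S$ we have $D_K(\tau_C)\in S$, so the expression above is a real-linear combination of $\id_A$ and $\langle\sigma\rangle_A$ with $\sigma\in S$, i.e., an element of $\spa_{\R}\{\id_A,\langle\sigma\rangle:\sigma\in S\}$. Since $\tilde\Lambda$ is manifestly CPTP — it is the key-average of the composition $D_k\circ\Lambda\circ E_k$ of channels — it lies in the subset of CPTP maps of this span, which is precisely $C_A^S$. Combined with \ITS, this yields \ABWNM.

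I expect no serious obstacle here: the content is entirely contained in \expref{Theorem}{thm:effective-char}, and the corollary is a matter of bookkeeping. The only points requiring a moment's care are that trivial side information genuinely reduces $\Lambda'$ and $\Lambda''$ to scalars, and that the resulting combination has exactly the identity-plus-replacement form matching the \ABW~span; both are straightforward. I would note in particular that the strengthened form of \expref{Theorem}{thm:effective-char} — which needs the \ITNM~condition only against adversaries with empty $B$ — already suffices, since the \ABWNM~adversary is given no side information to begin with.
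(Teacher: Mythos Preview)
Your proposal is correct and follows essentially the same approach as the paper: the paper simply states that the corollary follows ``by comparing the conditions of \expref{Definition}{def:ABWNM} to Equation \eqref{eq:effective-char},'' and your argument spells out exactly that comparison (specializing \expref{Theorem}{thm:effective-char} to trivial $B,\tilde B$ and invoking \expref{Proposition}{thm:ITNMtoITS} for the secrecy part). Your additional remark that the empty-$B$ version of the characterization already suffices is a valid sharpening, though not needed here since full \ITNM~is assumed.
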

Second, we give a separation example which shows that \ABWNM~is highly insecure; in fact, it allows the adversary to ``inject'' a plaintext of their choice into the ciphertext. This is insecure even under the classical definition of information-theoretic non-malleability of \cite{kawachi2011characterization}. We now describe the scheme and this attack.
\begin{ex}\label{ex:injection}
Suppose $(\tau_K, E, D)$ is a \SKQES~that is both $\ITNM$ and $\ABWNM$. Define a modified scheme $(\tau_K, E', D')$, with enlarged ciphertext space $\hi_{C'} = \hi_{C}\oplus\hi_{\hat A}$ (where $\hi_{\hat A}\cong\hi_A$) and encryption and decryption defined by
\begin{align*}
E'(X) &= E(X)_{C}\oplus 0_{\hat A}\\
D'(X) &= D_{CK\to AK}(\Pi_{C}X\Pi_{C})+ \mathrm{id}_{\hat AK\to AK}(\Pi_{\hat A}X\Pi_{\hat A})\,.
\end{align*}
Then $(\tau_K, E', D')$ is \ABWNM~but not \ITNM.
\end{ex}
While encryption ignores $\hi_{\hat A}$, decryption measures if we are in $C$ or $\hat A$ and then decrypts (in the first case) or just outputs the contents (in the second case.) This is a dramatic violation of $\ITNM$: set $\hi_{\tilde B}\cong\hi_{A}$, trivial $B$ and $R$, and 
	\begin{equation}
	\Lambda_{C'\to C' \tilde B}(X)=\tr(X)0_{C}\oplus \proj{\phi^+}_{\hat A\tilde B}\,;
	\end{equation}
it follows that, for all $\rho$,
	\begin{equation}
		I(AR:\tilde B)_{\tilde\Lambda(\rho)}=2\log|A|\gg h(|C'|^{-2}) = h(p_=(\Lambda, \rho))\,.
	\end{equation}

Now let us show that $(\tau, E', D')$ is still $\ABWNM$. Let $\Lambda_{C'\to C'}$ be an attack, i.e., an arbitrary CPTP map. Then the effective plaintext map is
	\begin{equation}
		\tilde{\Lambda}_{A\to A}=D\circ \Lambda^C_{C\to C}\circ E+\Lambda^{\hat A}_{C\to A}\circ E,
	\end{equation}
	where $\Lambda^C(X_C)=\Pi_C\Lambda(X_C\oplus 0_{\hat A})\Pi_C$ and $\Lambda^{\hat A}(X_C)=\mathrm{id}_{\hat{A}\to A}(\Pi_{\hat A}\Lambda(X_C\oplus 0_{\hat A})\Pi_{\hat A})$. Since $(\tau, E, D)$ is $\ITS$ (\expref{Theorem}{thm:ITNMtoITS}), there exists a fixed state $\rho^0_C$ such that $E_K(\rho_A)=\rho^0_C$ for all $\rho_A$. Since $(\tau, E, D)$ is $\ABWNM$, we also know that
	$$
	\tr_K\circ D\circ \Lambda^C_{C\to C}\circ E=\tilde{\Lambda}_1 \in C_A^S\,,
	$$
with $S=\{ D_K(\sigma_C)\,|\,\sigma_C \in \opr(\hi_C)\}$. We therefore get
	\begin{equation}
		\tilde{\Lambda}_{A\to A}=\tilde{\Lambda}_1+\langle \Lambda^{\hat A}(\rho^0_C)\rangle\in C_A^{S'},
	\end{equation}
with $S'=\{ D'_K(\sigma_{C'})\,|\,\sigma_{C'} \in \opr(\hi_{C'})\}.$ This is true because $S'$ contains all constant maps, as $D'_K(0_{C}\oplus\rho_{\hat A})=\rho_A$.

\section{The approximate setting}\label{sec:appr}

We now consider the case of approximate non-malleability. Approximate schemes are relevant for several reasons. First, an approximate scheme with negligible error can be more efficient than an exact one: the most efficient construction of an exact 2-design requires a quantum circuit of $O(n\log n\log\log n)$ gates \cite{cleve2016near}, where approximate 2-designs can be achieved with linear-length circuits \cite{dankert2009exact}. Second, in practice, absolutely perfect implementation of all quantum gates is too much to expect---even with error-correction. Third, when passing to authentication one must allow for errors, as it is always possible for the adversary to escape detection (with low probability) by guessing the secret key. 

For all these reasons, it is important to understand what happens when the perfect secrecy and perfect non-malleability requirements are slightly relaxed. In this section, we show that our definitions and results are stable under such relaxations, and prove several additional results for quantum authentication.
We begin with the approximate-case analogue of perfect secrecy. 

\begin{defn}[Approximate secrecy]\label{def:eps-ITS}
	Fix $\varepsilon > 0$. A \SKQES~$(\tau_K, E, D)$ is $\varepsilon$-approximately secret ($\epsilon$-\ITS) if, for any $\hi_B$ and any $\rho_{AB}$, setting $\sigma_{CBK}=E(\rho_{AB}\otimes \tau_K)$ implies $I(C:B)_\sigma \leq \varepsilon.$
\end{defn}

Analogously to the exact case, unitary schemes satisfying approximate secrecy are equivalent to approximate one-designs (see \expref{Appendix}{appendix:secrecy}). 

\subsection{Approximate non-malleability}

\subsubsection{Definition.}

We now define a natural approximate-case analogue of \ITNM, i.e., \expref{Definition}{def:qNM}. Let us briefly recall the context. The malleability scenario is described by systems $A$, $C$, $B$ and $R$ (respectively, plaintext, ciphertext, side-information, and reference), an initial tripartite state $\rho_{ABR}$, and an attack channel $\Lambda_{CB\to C\tilde B}$. Given this data, we have the effective channel $\tilde \Lambda_{AB \to A\tilde B}$  defined in Equation \eqref{eq:effective-channel} and the ``unavoidable attack'' probability $p_=(\Lambda, \rho)$ defined in Equation \eqref{eq:p-equals}. The new definition now simply relaxes the requirement on the increase of the adversary's mutual information.

\begin{defn}[Approximate non-malleability]\label{def:eps-qNM}
A \SKQES~$(\tau_K,E, D)$ is $\varepsilon$-non-malleable ($\varepsilon$-\ITNM) if for any state $\rho_{ABR}$ and any CPTP map $\Lambda_{CB \to C\tilde B}$, we have 
	\begin{equation}\label{eq:eps-ITNM-condition}
	I(AR:\tilde B)_{\tilde\Lambda(\rho)}
	\leq I(AR:B)_\rho + h(p_{=}(\Lambda,\rho))+\varepsilon.
	\end{equation}
\end{defn}

We record the approximate version of \expref{Proposition}{thm:ITNMtoITS}, i.e., non-malleability implies secrecy. The proof is a straightforward adaptation of the exact case.
\begin{prop}\label{thm:eps-ITNMtoITS}
	Let $(\tau_K,E, D)$ be an $\varepsilon$-\ITNM~\SKQES. Then $(\tau_K,E, D)$ is $2\varepsilon$-\ITS.
\end{prop}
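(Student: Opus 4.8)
The plan is to mirror the proof of \expref{Proposition}{thm:ITNMtoITS} line for line, up to the very last step, and then replace the exact ``rigidity'' argument by a quantitative stability estimate. Concretely, I start from $\hi_B$, $\rho_{AB}$ and $\sigma_{CBK}=E(\rho_{AB}\otimes\tau_K)$ as in \expref{Definition}{def:eps-ITS}, rename $B$ to $R$, and instantiate the approximate non-malleability scenario with empty initial side-information, final register $\hi_{\tilde B}\cong\hi_C$, and the same CJ-specified attack $\Lambda_{C\to C\tilde B}$: taking the ciphertext-extraction map $\Theta$ with $\eta^\Theta_{CC'\tilde B}=\phi^+_{C'\tilde B}\otimes\tau_C$, one sets $\eta^\Lambda_{CC'\tilde B}=\frac{|C|^2}{|C|^2-1}\Pi^-_{CC'}\,\eta^\Theta_{CC'\tilde B}\,\Pi^-_{CC'}$. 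Exactly as in the exact case, this attack has $p_=(\Lambda,\rho)=0$ for every $\rho$ and produces the same affine relation $\sigma_{\tilde BR}=\frac{|C|^2-2}{|C|^2-1}\gamma_{\tilde BR}+\frac{1}{|C|^2-1}\tau_{\tilde B}\otimes\rho_R$, where $\gamma_{CR}=(E_K)_{A\to C}(\rho_{AR})$ is the ciphertext--reference state whose correlations govern secrecy.

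The only change in the application of the security definition is that \expref{Definition}{def:eps-qNM} now yields an inequality rather than an equality. Since the initial side-information is trivial we have $I(AR:B)_\rho=0$, and since $p_=(\Lambda,\rho)=0$ we have $h(p_=(\Lambda,\rho))=0$; hence $\varepsilon$-\ITNM gives directly $I(AR:\tilde B)_\sigma\le\varepsilon$. Discarding the $A$ register (data processing) then gives $I(R:\tilde B)_\sigma\le\varepsilon$, the approximate analogue of the exact proof's conclusion $I(\tilde B:R)_\sigma=0$.

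It remains to convert near-independence of $\sigma_{\tilde BR}$ into near-independence of $\gamma_{\tilde BR}$, i.e.\ to bound $I(C:R)_\gamma=I(\tilde B:R)_\gamma$ by $2\varepsilon$, and this is the main obstacle. It is genuinely the place where the exact argument must be strengthened: in the exact proof one simply solves the affine relation to conclude that $\gamma_{\tilde BR}$ inherits the product form of $\sigma_{\tilde BR}$, but an approximate version of this implication does \emph{not} follow from convexity alone (mixing with the product state $\tau_{\tilde B}\otimes\rho_R$ can only decrease mutual information, so $I(\sigma)\le\varepsilon$ never bounds $I(\gamma)$ by convexity). The route I would take is to exploit that the mixing weight $\frac{1}{|C|^2-1}$ is small. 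Introducing the classical flag $F$ recording which branch of the mixture is taken, one forms $\omega_{\tilde BRF}$, uses the chain rule $I(\tilde B:RF)_\omega=I(\tilde B:R)_\omega+I(\tilde B:F|R)_\omega=I(\tilde B:F)_\omega+I(\tilde B:R|F)_\omega$ together with $I(\tilde B:R|F)_\omega=\frac{|C|^2-2}{|C|^2-1}\,I(C:R)_\gamma$, and controls the two flag terms by $h\!\left(\tfrac{1}{|C|^2-1}\right)$. This transfers the $\varepsilon$-bound from $\sigma$ to $\gamma$ at the cost of the factor recorded in the statement.

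Equivalently, one may run the transfer through continuity of mutual information: from the affine relation one gets $\tfrac12\|\sigma_{\tilde BR}-\gamma_{\tilde BR}\|_1\le\frac{1}{|C|^2-1}$, and a Fannes--Audenaert/Alicki--Fannes--Winter estimate converts $I(R:\tilde B)_\sigma\le\varepsilon$ into the desired bound on $I(C:R)_\gamma$. Either way, the factor $2$ in the statement is precisely the quantitative cost of this stability step, and identifying the cleanest such estimate — rather than the (now straightforward) setup and application of \expref{Definition}{def:eps-qNM} — is where the real work lies.
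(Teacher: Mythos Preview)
Your setup is faithful to the paper's indicated route: the paper says only that the proof is ``a straightforward adaptation of the exact case,'' and you correctly reproduce that adaptation up to the last step --- the attack $\Lambda$ with $p_=(\Lambda,\rho)=0$, the affine relation $\sigma_{\tilde BR}=p\,\gamma_{\tilde BR}+(1-p)\,\tau_{\tilde B}\otimes\rho_R$ with $p=\tfrac{|C|^2-2}{|C|^2-1}$, and the bound $I(\tilde B:R)_\sigma\le\varepsilon$ from \expref{Definition}{def:eps-qNM}. You also correctly single out the transfer from $\sigma$ to $\gamma$ as the only place requiring new work.

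The gap is your assertion that either the flag argument or the continuity argument yields ``precisely'' the factor~$2$. Neither does. Working through the flag construction you describe gives
\[
p\,I(\tilde B:R)_\gamma \;=\; I(\tilde B:R)_\sigma + I(\tilde B:F\mid R)_\omega - I(\tilde B:F)_\omega \;\le\; \varepsilon + h\!\left(\tfrac{1}{|C|^2-1}\right),
\]
so that $I(C:R)_\gamma\le \tfrac{1}{p}\bigl(\varepsilon+h(\tfrac{1}{|C|^2-1})\bigr)$. The $h$-term is an \emph{additive constant independent of $\varepsilon$}; for $|C|=2$ it is about $0.92$ bits, so the resulting bound is nowhere near $2\varepsilon$ for small $\varepsilon$. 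The Alicki--Fannes route behaves the same way: from $\tfrac12\|\sigma_{\tilde BR}-\gamma_{\tilde BR}\|_1\le\tfrac{1}{|C|^2-1}$ you pick up an additive $O\!\bigl(\tfrac{\log|C|}{|C|^2-1}\bigr)+h\!\bigl(\tfrac{2}{|C|^2-1}\bigr)$, again a dimension-dependent constant rather than a multiplicative penalty on $\varepsilon$. So while your outline matches the paper's stated approach and produces \emph{some} approximate-secrecy bound, it does not produce $2\varepsilon$; the sentence ``the factor $2$ \dots\ is precisely the quantitative cost of this stability step'' is not supported by either method you propose. The paper gives no further detail, so how (or whether) the clean constant $2\varepsilon$ is obtained along this route is not visible from the text.
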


\subsubsection{Non-malleability with approximate designs.}

Continuing as before, we now generalize the characterization theorems of non-malleability (\expref{Theorem}{thm:effective-char} and \expref{Theorem}{thm:USKQES-NM-2design}) to the approximate case.

\begin{thm}\label{thm:eps-effective-char}
	Let $(\tau, E, D)$ be a \SKQES~with ciphertext dimension $|C|=2^{m}$ and $r>0$ a sufficiently large constant. Then the following holds:
	\begin{enumerate}
	\item If $(\tau, E, D)$ is $2^{-r m}$-$\ITNM$, then for any attack $\Lambda_{CB\to C\tilde B}$, the effective map $\tilde \Lambda_{AB\to A\tilde B}$ is $2^{-\Omega(m)}$-close (in diamond norm) to
	\begin{equation*}\label{eq:eps-effective-char}
	\tilde \Lambda^{\mathrm{exact}}_{AB\to A\tilde B}=\id_A\otimes \Lambda'_{B\to\tilde B}+\frac{1}{|C|^2-1}\left(|C|^2\left\langle D_K(\tau)\right\rangle-\id\right)_A\otimes \Lambda''_{B\to\tilde B},
	\end{equation*}
	with $\Lambda'$, $\Lambda''$ as in \expref{Theorem}{thm:effective-char}.
	
	\item Suppose that $\log|R| = O(2^m)$, where $R$ is the reference register in \expref{Definition}{def:eps-qNM}. Then there exists a constant $r$, such that if every attack $\Lambda_{CB\to C\tilde B}$ results in an effective map that is $2^{-r m}$-close to $\tilde \Lambda^{\mathrm{exact}}$, then the scheme is $2^{-\Omega(m)}$-\ITNM.
\end{enumerate}
\end{thm}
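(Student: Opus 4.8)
The plan is to establish both implications as quantitatively stable refinements of the exact characterization in \expref{Theorem}{thm:effective-char}. In each direction I would retain the structure of the exact argument but replace every exact identity by an estimate, using two standard continuity tools: Pinsker's inequality (to turn a small mutual-information slack into a trace-distance bound) in the forward direction, and the Alicki--Fannes--Winter continuity of the von Neumann entropy and mutual information (to turn a trace-distance bound back into a mutual-information bound) in the converse. The substantive work is the bookkeeping of the dimension factors these conversions introduce; it is exactly this bookkeeping that forces the constant $r$ to be large, and that motivates the hypothesis $\log|R| = O(2^m)$ in the second part.

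For Part 1, I would work in the Choi--Jamio\l kowski picture used in the proof sketch of \expref{Theorem}{thm:effective-char}, where the scheme is encoded by the map $\mathcal E_{CC'\to AA'}$ and the target statement is that $\mathcal E$ lies close to the $U$-$\overline U$-twirl. As noted in the remark following that theorem, the forward direction only needs the $\ITNM$ hypothesis against adversaries with empty side-information $B$; specializing it (with $\varepsilon = 2^{-rm}$) to a maximally entangled reference $\rho_{AR} = \phi^+_{AR}$ and to attacks whose CJ states single out the blocks of the decomposition $\hi_C^{\otimes 2} = \C\ket{\phi^+}\oplus\supp\Pi^-_{CC'}$, each application of the inequality upper-bounds the information the adversary extracts by the value forced in the exact proof plus $2^{-rm}$. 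By Pinsker, each such bound yields an $O(2^{-rm/2})$ trace-distance estimate between the corresponding block of the effective CJ state $\eta^{\tilde\Lambda}$ and its canonical prediction $\eta^{\tilde\Lambda^{\mathrm{exact}}}$. Summing the four blocks bounds $\|\eta^{\tilde\Lambda} - \eta^{\tilde\Lambda^{\mathrm{exact}}}\|_1$, and converting this CJ-state distance to a diamond-norm distance between channels costs a factor polynomial in $|C| = 2^m$. The net diamond distance is then at most $2^{O(m)}\cdot 2^{-rm/2} = 2^{-\Omega(m)}$, once $r$ is taken large enough to absorb both the Pinsker square root and the CJ-to-diamond dimension factor.

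For Part 2, I would run the converse of \expref{Theorem}{thm:effective-char} and perturb. Any canonical-form map $\tilde\Lambda^{\mathrm{exact}}$ satisfies the $\ITNM$ inequality \emph{exactly}, by the same application of \expref{Lemma}{lem:DP-CPTPtensCP} used in the exact converse; crucially, the term $h(p_{=}(\Lambda,\rho))$ depends only on the ciphertext attack $\Lambda$ and not on the effective map, so it is common to $\tilde\Lambda$ and $\tilde\Lambda^{\mathrm{exact}}$. Fix any state $\rho_{ABR}$ and attack $\Lambda_{CB\to C\tilde B}$. The hypothesis gives $\|\tilde\Lambda - \tilde\Lambda^{\mathrm{exact}}\|_\diamond \le 2^{-rm}$, hence $\|\tilde\Lambda(\rho) - \tilde\Lambda^{\mathrm{exact}}(\rho)\|_1 \le 2^{-rm}$; applying the exact bound to $\tilde\Lambda^{\mathrm{exact}}(\rho)$ and then continuity of mutual information to pass back to $\tilde\Lambda(\rho)$ yields
\begin{equation*}
I(AR:\tilde B)_{\tilde\Lambda(\rho)} \le I(AR:B)_\rho + h(p_{=}(\Lambda,\rho)) + \delta,
\end{equation*}
where $\delta$ is the continuity correction. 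Since the adversary's output register $\tilde B$ may be large, the relevant dimension in this correction is that of the $AR$ side, so $\delta = O\bigl(2^{-rm}\log|R|\bigr)$ up to lower-order terms. Under the hypothesis $\log|R| = O(2^m)$ this is $O(2^{-(r-1)m}) = 2^{-\Omega(m)}$ for $r > 1$, giving $\varepsilon$-$\ITNM$ with $\varepsilon = 2^{-\Omega(m)}$.

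The main obstacle in both parts is the same in spirit: the entropic-to-operator conversions are lossy in dimension, and one must check that the $2^{-rm}$ input error survives multiplication by powers of $|C| = 2^m$ (and, in Part 1, a square root) and still leaves a genuine $2^{-\Omega(m)}$ bound. In Part 1 the delicate point is arranging that the chosen probe attacks jointly pin down \emph{every} block of the effective CJ state, not merely its plaintext marginal; in Part 2 it is the dependence of the mutual-information continuity bound on $\log|R|$, which is precisely what the hypothesis $\log|R| = O(2^m)$ is there to tame.
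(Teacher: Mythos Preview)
Your plan matches the paper's proof in both parts. Part~2 is exactly the argument the paper gives: apply the exact converse (via \expref{Lemma}{lem:DP-CPTPtensCP}) to $\tilde\Lambda^{\mathrm{exact}}$, then use Alicki--Fannes continuity of the mutual information to transfer the inequality to $\tilde\Lambda$; the continuity cost is $O\bigl(\varepsilon\log(|A||R|)\bigr)$, which is precisely why the bound on $\log|R|$ is needed.

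For Part~1 your outline is correct in spirit but glosses over the one genuinely new step. The off-diagonal blocks $\ketbra{\phi^+}{v}$ (and their adjoints) with $\braket{\phi^+}{v}=0$ cannot be ``singled out'' by the CJ state of any CP attack, since CJ states of CP maps are positive semidefinite and these blocks are not. The paper does not handle them by probe attacks. Instead, after controlling $\mathcal E$ on $\supp\Pi^-$ via $\ITNM$ plus Pinsker exactly as you describe, it uses that bound to show $E_K^\dagger$ is close to the depolarizing map, and then invokes the algebraic identity (\expref{Lemma}{lem:E-of-offd})
\[
\mathcal E\bigl(\phi^+_{CC'}(X_C\otimes\id_{C'})\bigr)=\tfrac{|A|}{|C|}\,\phi^+_{AA'}\bigl(E_K^\dagger(X)\otimes\id_{A'}\bigr),
\]
which reduces the off-diagonal estimate to the estimate on $E_K^\dagger$. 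That identity in turn rests on the structural characterization (\expref{Lemma}{lem:SKQES-char}) that every $E_k$ in a correct scheme has the form ``append a fixed ancilla state, then apply an isometry''. So the ``delicate point'' you flag is resolved not by a clever choice of probe attacks but by this extra structural input; you should be prepared to supply it.
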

This theorem is proven with explicit constants in \expref{Appendix}{app:proofs} as \expref{Theorem}{thm:eps-effective-char-app}. The condition on $R$ required for the second implication is necessary, as the relevant mutual information can at worst grow proportional to the logarithm of the dimension according to the Alicki-Fannes inequality (\expref{Lemma}{lem:fannes}). This is not a very strong requirement, as it should be relatively easy for the honest parties to put a bound on their total memory.

Next, we record the corollary which states that, for unitary schemes, approximate non-malleability is equivalent to encryption with an approximate 2-design. The proof proceeds as in the exact case, now starting from \expref{Theorem}{thm:eps-effective-char}.
\begin{thm}\label{thm:eps-USKQES-NM-2design}
Let $\Pi = (\tau_K, E, D)$ be a unitary $\SKQES$ for $n$-qubit messages and $f:\N\to\N$ a function that grows at most exponential. Then there exists a constant $r>0$ such that
\begin{enumerate}
\item If $\{E_k\}$ is a $\Omega(2^{-rn})$-approximate 2-design and  $\log|R|\le f(n)$, then $\Pi$ is $2^{-\Omega(n)}$-$\ITNM$.
\item If $\Pi$ is $\Omega(2^{-rn})$-$\ITNM$, then $\{E_k\}_{k\in K}$ is a $2^{-\Omega(n)}$-approximate 2-design.
\end{enumerate}
\end{thm}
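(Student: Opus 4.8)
\textit{Proof proposal.}
The plan is to mirror the exact-case argument of \expref{Theorem}{thm:USKQES-NM-2design}, replacing the exact characterization \expref{Theorem}{thm:effective-char} by its approximate counterpart \expref{Theorem}{thm:eps-effective-char}. The key structural fact I would establish first is that, for a unitary scheme with $E_k=U_k(\cdot)U_k^\dagger$ and $D_k=U_k^\dagger(\cdot)U_k$, the effective channel of any attack $\Lambda_{CB\to C\tilde B}$ coincides with the channel twirl of $\Lambda$ by $\{U_k\}$: averaging $D_k\circ\Lambda\circ E_k$ over $k$ conjugates the ciphertext action by $U_k$, which at the level of CJ states is exactly the $U$-$\overline U$ twirl $\overline{\mathcal T}_{\mathrm D}$, so that $\tilde\Lambda=\mathcal T^{ch}_{\mathrm D}(\Lambda)$. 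Moreover, since $|C|=|A|=2^n$ and $D_K(\tau_C)=\tau_A$ for unitary schemes, the reference map $\tilde\Lambda^{\mathrm{exact}}$ of \expref{Theorem}{thm:eps-effective-char} is precisely the Haar channel twirl $\mathcal T^{ch}_{\mathsf{Haar}}(\Lambda)$. Consequently, the quantity $\bigl\|\tilde\Lambda-\tilde\Lambda^{\mathrm{exact}}\bigr\|_\diamond$ measures exactly how far $\{U_k\}$ is from being an exact channel-twirl design.

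For the first implication, I would assume $\{E_k\}$ is an $\Omega(2^{-rn})$-approximate 2-design. Using the cost-incurring equivalences among the three design notions of \expref{Definition}{def:designs}, this makes $\{U_k\}$ a $2^{-\Omega(n)}$-approximate channel-twirl design, so every attack produces an effective map within $2^{-\Omega(n)}$ of $\tilde\Lambda^{\mathrm{exact}}$ in diamond norm. Because $f$ grows at most exponentially, the hypothesis $\log|R|\le f(n)$ yields $\log|R|=O(2^n)$, so part (2) of \expref{Theorem}{thm:eps-effective-char} applies and certifies that $\Pi$ is $2^{-\Omega(n)}$-$\ITNM$, after enlarging $r$ to absorb the conversion constants.

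For the second implication, I would assume $\Pi$ is $\Omega(2^{-rn})$-$\ITNM$ and invoke part (1) of \expref{Theorem}{thm:eps-effective-char}: every attack yields an effective map within $2^{-\Omega(n)}$ of $\tilde\Lambda^{\mathrm{exact}}$. By the identification above, this says precisely that $\{U_k\}$ is a $2^{-\Omega(n)}$-approximate channel-twirl design; translating back through \expref{Definition}{def:designs}, again losing only a constant factor in the exponent, gives that $\{E_k\}$ is a $2^{-\Omega(n)}$-approximate 2-design.

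I expect the main obstacle to be bookkeeping the approximation-quality losses when converting between the channel-twirl, $U$-$\overline U$-twirl, and $t$-design notions, since these are equivalent only up to nontrivial costs. Choosing the constant $r$ large enough, and tracking the hidden constants in the various $\Omega(\cdot)$ and $O(\cdot)$ terms so that the chain design $\to$ channel-twirl $\to$ effective-map $\to$ $\ITNM$ (and its reverse) closes, is the delicate part; the identification of the effective channel with the channel twirl is, by contrast, essentially immediate from unitarity.
\qed
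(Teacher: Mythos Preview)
Your proposal is correct and follows essentially the same approach as the paper: the paper simply states that ``the proof proceeds as in the exact case, now starting from \expref{Theorem}{thm:eps-effective-char},'' and you have spelled out exactly that---identifying the effective channel with the channel twirl $\mathcal T^{ch}_{\mathrm D}(\Lambda)$, recognizing $\tilde\Lambda^{\mathrm{exact}}$ as the Haar channel twirl when $|C|=|A|$ and $D_K(\tau_C)=\tau_A$, and then invoking the two directions of \expref{Theorem}{thm:eps-effective-char} together with the design-notion conversions (\expref{Lemma}{lem:channel-twirl-uubar-twirl} and \cite{low2010pseudo}). Your remark that the only real work is bookkeeping the polynomial-in-$|C|$ losses through the chain of conversions is exactly right, and is absorbed by choosing $r$ large enough.
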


\subsubsection{Relationship to approximate ABW.}

Recall that, in \expref{Section}{sec:ABW-exact}, we discussed the relationship between our notion of exact non-malleability and that of Ambainis et al.~\cite{ambainis2009nonmalleable} (i.e., \ABWNM.) As we now briefly outline, our conclusions carry over to the approximate case without any significant changes. 

As described in Equation (3'') of~\cite{ambainis2009nonmalleable}, one first relaxes the notion of \ABWNM~appropriately by requiring that the containment \eqref{eq:abw} in \expref{Definition}{def:ABWNM} holds up to $\varepsilon$ error in the diamond-norm distance. In the unitary case, both definitions are equivalent to approximate 2-designs (by the results of~\cite{ambainis2009nonmalleable}, and our \expref{Theorem}{thm:eps-USKQES-NM-2design}). In the case of general schemes, the plaintext injection attack described in \expref{Example}{ex:injection} again shows that approximate $\ABWNM$ is insufficient, and that approximate $\ITNM$ is strictly stronger.

\subsection{Authentication}\label{sec:authentication}

We now consider the well-studied task of information-theoretic quantum authentication, and explain its connections to non-malleability.

\subsubsection{Definitions.}

Our definitions of authentication will be faithful to the original versions in~\cite{dupuis2012actively, Garg2017}, with one slight modification. When decryption rejects, our encryption schemes (\expref{Definition}{def:SKQES}) output $\bot$ in the plaintext space, rather than setting an auxiliary qubit to a ``reject'' state. These definitions are equivalent in the sense that one can always set an extra qubit to ``reject'' conditioned on the plaintext being $\bot$ (or vice-versa). Nonetheless, as we will see below, this mild change has some interesting consequences.

We begin with the definition of Dupuis, Nielsen and Salvail~\cite{dupuis2012actively}, which demands that the effective average channel of the attacker ignores the plaintext.

\begin{defn}[DNS Authentication~\cite{dupuis2012actively}]\label{def:DNS-auth}
A \SKQES~ $(\tau_K, E, D)$ is called $\varepsilon$-DNS-authenticating if, for any CPTP-map $\Lambda_{CB\to CB'}$, there exists CP-maps 
$\Lambda^\acc_{B\to \tilde B}$ and $\Lambda^\rej_{B\to \tilde B}$ such that $\Lambda^\acc + \Lambda^\rej$ is\,\footnote{Note that there is a typographic error in \cite{dupuis2012actively} and \cite{Broadbent2016} at this point of the definition. In those papers, the two effective maps are asked to sum to the identity, which is impossible for many obvious choices of $\Lambda$.} TP, and for all $\rho_{AB}$ we have
	\begin{equation}\label{eq:DNS-auth}
	\bigl\| \tr_K D(\Lambda(E(\rho_{AB}\otimes \tau_K))) - (\Lambda^\acc(\rho_{AB}) + \proj{\bot}\otimes \Lambda^\rej(\rho_{B}))\bigr\|_1\le \varepsilon\,.
	\end{equation}
\end{defn}

An alternative definition was recently given by Garg, Yuen and Zhandry~\cite{Garg2017}. It asks that, \emph{conditioned on acceptance}, with high probability the effective channel is close to a channel which ignores the plaintext.

\begin{defn}[GYZ Authentication~\cite{Garg2017}]\label{def:auth}
	A \SKQES~ $(\tau_K, E, D)$ is called $\varepsilon$-GYZ-authenticating if, for any CPTP-map $\Lambda_{CB\to CB'}$, there exists a CP-map $\Lambda^\acc_{B\to \tilde B}$ such that for all $\rho_{AB}$
	\begin{equation}\label{eq:auth}
	\bigl\|\Pi_\acc\,D(\Lambda(E(\rho_{AB}\otimes \tau_K)))\,\Pi_\acc - \Lambda^\acc(\rho_{AB})\otimes \tau_K\bigr\|_1\le \varepsilon\,.
	\end{equation}
	Here $\Pi_\acc$ is the acceptance projector, i.e. projection onto $\hi_A$ in $\hi_A\oplus\C\ket\bot$.
\end{defn}
A peculiar aspect of the original definition in~\cite{Garg2017} is that it does not specify the outcome in case of rejection, and is thus stated in terms of trace non-increasing maps. Of course, all realistic quantum maps must be CPTP; this means that the designer of the encryption scheme must still declare what to do with the contents of the plaintext register after decryption. Our notion of decryption makes one such choice (i.e., output $\bot$) which seems natural.

\subsubsection{GYZ authentication implies DNS authentication.}
A priori, the relationship between Definition 2.2 in~\cite{dupuis2012actively} and Definition 8 in~\cite{Garg2017} is not completely clear. On one hand, the latter is stronger in the sense that it requires success with high probability (rather than simply on average.) On the other hand, the former makes the additional demand that the ciphertext is untouched even if we reject. As we will now show, with our slight modification, we can prove that GYZ-authentication implies DNS-authentication.
\begin{thm}
	Let  $(\tau, E,D)$ be $\varepsilon$-totally authenticating for sufficiently small $\varepsilon$. Then it is $O(\sqrt{\varepsilon})$-DNS authenticating.
\end{thm}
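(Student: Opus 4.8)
The plan is to work with the effective channel $\mathcal E_{AB\to\bar A\tilde B}(\rho)=\tr_K D(\Lambda(E(\rho\otimes\tau_K)))$ together with its key-retaining version $\hat{\mathcal E}(\rho)=D(\Lambda(E(\rho\otimes\tau_K)))$, and to match the two terms of the \DNS~target separately. The first fact I would exploit is that our rejection convention (output $\ket\bot$ in the plaintext register, rather than flipping an auxiliary flag) forces $\hat{\mathcal E}$ to be block diagonal with respect to the orthogonal decomposition $\hi_{\bar A}=\hi_A\oplus\C\ket\bot$: writing $\Pi_\acc$ for the projector onto $\hi_A$ and $\Pi_\rej=\proj\bot$, we get $\mathcal E=\mathcal E^\acc+\mathcal E^\rej$ with no accept--reject coherences. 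This is exactly what makes the comparison with the block-diagonal \DNS~target $(\id_A\otimes\Lambda^\acc)(\rho_{AB})+\proj\bot\otimes\Lambda^\rej(\rho_B)$ tractable; without it, intermediate acceptance probabilities would produce off-diagonal blocks of trace norm up to $\approx\tfrac12$ that no block-diagonal target can absorb. I would record this block-diagonality at the outset, as it is the technical role played by the ``mild adaptation of the rejection procedure.''

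The accept term is then essentially immediate. \GYZ-authentication (\expref{Definition}{def:auth}) supplies a CP map $\Lambda^\acc_{B\to\tilde B}$ with $\|\Pi_\acc\hat{\mathcal E}(\rho)\Pi_\acc-(\id_A\otimes\Lambda^\acc)(\rho)\otimes\tau_K\|_1\le\varepsilon$; tracing out $K$ (a trace-norm contraction, with $\tr_K\tau_K=1$) yields $\|\mathcal E^\acc(\rho)-(\id_A\otimes\Lambda^\acc)(\rho)\|_1\le\varepsilon$, which is the accept half of \eqref{eq:DNS-auth}. Taking traces shows $\Lambda^\acc$ is trace non-increasing up to $\varepsilon$, so I can take $\Lambda^\rej$ to be (a small correction of) the reject-conditional side-information map and append a $B\to\tilde B$ completion term to make $\Lambda^\acc+\Lambda^\rej$ exactly trace preserving; this costs only $O(\varepsilon)$.

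The real content, and the step I expect to be the main obstacle, is the reject term: I must exhibit a single CP map $\Lambda^\rej_{B\to\tilde B}$ with $\mathcal E^\rej(\rho)\approx\proj\bot\otimes\Lambda^\rej(\rho_B)$, i.e. show that the reject-conditional state of the adversary's register depends on $\rho_B$ alone, forgetting both the plaintext content and any plaintext--side-information correlation. This does \emph{not} follow from the accept guarantee in isolation: a channel that preserves the plaintext on acceptance but copies it into $\tilde B$ precisely on rejection obeys the accept condition yet flagrantly violates \DNS. The resolution is to use the full \GYZ~guarantee, crucially the decoupling of the key ($\otimes\tau_K$). Purifying $\hat{\mathcal E}$ and feeding in $\phi^+_{AR}$ (with $R$ a reference and a general $B$ carried alongside), \GYZ~says the accept branch is close to $\phi^+_{AR}\otimes\tau_K\otimes(\text{fixed }\tilde B\text{-state})$, so on acceptance the branch coherently retains an intact copy of the plaintext together with a uniform, decoupled key. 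By a no-cloning/decoupling argument---the reference $R$ is maximally entangled with the accept branch while its global marginal is fixed at $\tau_R$---the complementary reject branch cannot also be correlated with $R$, whence the reject-conditional $\tilde B$ is decoupled from the plaintext and is a fixed function of $\rho_B$. Converting \GYZ's trace-distance bound into this product/decoupled structure of the \emph{conditional} states goes through a gentle-measurement / Fuchs--van de Graaf estimate, and this square-root step is precisely where the $O(\sqrt\varepsilon)$ loss enters; the regime of very small acceptance probability (where dividing by $p_\acc$ is dangerous) I would treat separately, noting that there the accept block is already negligible.

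Finally I would assemble the bound by the triangle inequality: the accept blocks differ by $\le\varepsilon$, the reject blocks by $O(\sqrt\varepsilon)$, there are no cross terms, and the TP-completion contributes $O(\varepsilon)$, giving $O(\sqrt\varepsilon)$-\DNS~authentication overall. The decoupling of the reject branch from the plaintext is the crux; everything else is bookkeeping enabled by the block-diagonal structure.
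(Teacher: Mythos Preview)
Your overall architecture---block-diagonality from the $\ket\bot$ convention, immediate accept term from \GYZ, then isolate the reject term---matches the paper. The gap is in your reject-branch step. The inference ``$R$ is maximally entangled with $A$ on the accept branch and has global marginal $\tau_R$, hence $R$ is decoupled from $\tilde B$ on the reject branch'' is not valid: those two premises pin down only the $R$-\emph{marginal} on the reject branch (forcing it to be $\tau_R$), not the $R$--$\tilde B$ correlations there. Accept and reject are outcomes of a measurement, not complementary subsystems, so monogamy of entanglement does not transfer across branches. Indeed your own counterexample (copy the plaintext into $\tilde B$ exactly when rejecting) survives this argument unscathed, since in that scenario the reject-branch $R$-marginal is still $\tau_R$. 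You correctly flag that the $\otimes\tau_K$ clause of \GYZ\ must be the missing ingredient, but your sketch never actually deploys it; the purification/no-cloning paragraph uses only the $\phi^+_{AR}$ part.

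The paper closes the gap by a different, simpler mechanism. It invokes as a black box that \GYZ-authentication implies approximate secrecy, namely $\|E_K(\rho_{ABR})-E_K(\tau_A)\otimes\rho_{BR}\|_1\le 14\sqrt\varepsilon$ (Theorem~15 of~\cite{Garg2017}); this is where the $\sqrt\varepsilon$ enters, not a gentle-measurement estimate on your purified picture. Secrecy gives that the \emph{total} side-information output $\tr_{\bar A}\circ\tilde\Lambda=\tr_C\circ\Lambda\circ E_K$ is $O(\sqrt\varepsilon)$-close to $\tr_A\otimes\Lambda''$ for a fixed $\Lambda''_{B\to\tilde B}$. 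Since the accept contribution is already $\approx\tr_A\otimes\Lambda'$ by \GYZ, subtraction forces the reject part $\Lambda^{(2)}(\cdot)=\bra\bot\tilde\Lambda(\cdot)\ket\bot$ to be $\approx\tr_A\otimes(\Lambda''-\Lambda')$, and one then sets $\Lambda'''(\cdot)=\Lambda^{(2)}(\tau_A\otimes\cdot)$ and checks $\|\Lambda^{(2)}-\tr_A\otimes\Lambda'''\|_\diamond=O(\sqrt\varepsilon)$ via a short triangle-inequality chain. The remaining work is exactly your TP-completion step, done by sandwiching with $M^{-1/2}$ where $M=(\Lambda'+\Lambda''')^\dagger(\mathds 1)$, at cost another $O(\sqrt\varepsilon)$. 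So the crux is ``\GYZ\ $\Rightarrow$ secrecy $\Rightarrow$ the full $\tilde B$-output ignores $A$ $\Rightarrow$ so does the reject piece, by subtraction,'' rather than a direct entanglement-monogamy argument on the branches.
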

\begin{proof}
	Let $\Lambda_{CB\to C\tilde B}$ be a CPTP map and $\varepsilon\le 62^{-2}$. By Definition \ref{def:auth} there exists a CP map $\Lambda'_{B\to \tilde B}$ such that for all states $\rho_{AB}$,
	\begin{equation}\label{eq:authdef}
	\left\|\Pi_a D(\Lambda(E(\rho_{AB}\otimes \tau_K)))\Pi_a-\Lambda'(\rho_{AB}\otimes \tau_K)))\right\|_1\le \varepsilon\,.
	\end{equation}
	Assume for simplicity that $D=M_\bot\circ D$, where $M_\bot$ measures the rejection symbol versus the rest. (otherwise we can define a new decryption map that way.) Define the CP maps
	\begin{align*}
	\Lambda^{(1)}_{AB\to \tilde B} 
	&=\tr_A\Pi_a\tilde{\Lambda}(\cdot)\\
	\Lambda^{(2)}_{AB\to \tilde B}
	&=\bra{\bot}_A\tilde\Lambda(\cdot)\ket{\bot}_A\\
	\Lambda''_{B\to \tilde B}
	&=\tr_C\Lambda(E_K(\tau_A)\otimes(\cdot)).
	\end{align*}
By Theorem 15 in \cite{Garg2017} we have 
	\begin{equation}
	\left|E_K(\rho_{ABR})-E_K(\tau_A)\otimes\rho_{BR}\right\|_1\le 14\sqrt{\varepsilon},
	\end{equation}
	which implies that
	\begin{equation}\label{eq:decomp-1}
	\left\|\tr_A\otimes\Lambda''-\tr_C\circ\Lambda\circ E_K\right\|_\diamond\le\hat\varepsilon := 14\sqrt{\varepsilon}.
	\end{equation}
	Note that
	\begin{align}\label{eq:decomp-2-1}
	\tr_C\!\circ\!\Lambda\!\circ\! E_K 
	&=\tr_{CK}\!\circ\!\Lambda\!\circ\! E((\cdot)\otimes\tau_K) \nonumber\\
	&=\tr_{AK}\!\circ\! D\!\circ\!\Lambda\!\circ\! E((\cdot)\otimes\tau_K)=\tr_A\!\circ\!\tilde{\Lambda}.
	\end{align}
	On the other hand, we also have that, by Equation \eqref{eq:authdef},
	\begin{align}\label{eq:decomp-2-2}
	\bigl\|\tr_A\circ\tilde{\Lambda}-\tr_A\otimes\Lambda'-\Lambda^{(2)}\bigr\|\le
	\bigl\|\tr_A\left(\Pi_a\tilde{\Lambda}(\cdot)\right)-\Lambda'\bigr\|_\diamond\le\varepsilon
	\end{align}
Combining Equations \eqref{eq:decomp-1}, \eqref{eq:decomp-2-1} and \eqref{eq:decomp-2-2}, we get
	\begin{equation}\label{eq:diamondbound}
	\bigl\|\Lambda^{(2)}-\tr_A\otimes(\Lambda''-\Lambda')\bigr\|_\diamond\le \varepsilon+\hat\varepsilon.
	\end{equation}
	Now observe that
	\begin{equation}\label{eq:doesntdependonA}
	\left[\tr_A\otimes(\Lambda'-\Lambda'')_{B\to\tilde B}\right]\circ\Xi_{A\to A}=\tr_A\otimes(\Lambda'-\Lambda'')_{B\to\tilde B}
	\end{equation}
	For all CPTP maps $\Xi_{A\to A}$. We define $\Lambda'''_{B\to \tilde B}=\Lambda^{(2)}(\tau_A\otimes(\cdot))$ and calculate
	\begin{align*}
	\bigl\|\Lambda^{(2)}-\tr_A\otimes\Lambda'''\bigr\|_\diamond
	&\le \bigl\|\Lambda^{(2)}-\tr_A\otimes(\Lambda''-\Lambda')\bigr\|_\diamond\\
	&~~~+\bigl\|\tr_A\otimes(\Lambda''-\Lambda')-\tr_A\otimes\Lambda'''\bigr\|_\diamond\,,
	\end{align*}
by the triangle inequality for the diamond norm. Continuing with the calculation, 
	\begin{align}\label{eq:central-bound}
	\bigl\|\Lambda^{(2)}-\tr_A\otimes\Lambda'''\bigr\|_\diamond
	&\le \varepsilon+\hat{\varepsilon}+\bigl\|\tr_A\otimes(\Lambda''-\Lambda')-\tr_A\otimes\Lambda'''\bigr\|_\diamond\nonumber\\
	&= \varepsilon+\hat{\varepsilon}+\bigl\|\tr_A\otimes (\Lambda''-\Lambda')-\Lambda^{(2)}\circ \langle\tau_A\rangle_{A\to A}\bigr\|_\diamond\nonumber\\
	&= \varepsilon+\hat{\varepsilon}+\bigl\|\bigl[\tr_A\otimes (\Lambda''-\Lambda')-\Lambda^{(2)}\bigr]\circ \langle\tau_A\rangle_{A\to A}\bigr\|_\diamond\nonumber\\
	&\le 2(\varepsilon+\hat\varepsilon)=28\sqrt{\varepsilon}+2\varepsilon.
	\end{align}
	The first inequality above is Equation \eqref{eq:diamondbound}. The first equality is just a rewriting of the definition of $\Lambda'''$, and the second equality is Equation \eqref{eq:doesntdependonA}. Finally, the last inequality is due to Equation \eqref{eq:diamondbound} and the fact that the diamond norm is submultiplicative.
	
	We have almost proven security according to \expref{Definition}{def:DNS-auth}, as we have shown $\tilde\Lambda$ to be close in diamond norm to $\id_A\otimes\Lambda'+\big\langle\proj{\bot}\big\rangle\otimes\Lambda'''$. However, $\Lambda'+\Lambda'''$ is only approximately TP; more precisely, we have that for all $\rho_{ABR}$,
	\begin{align}\label{eq:tracebound}
	|\tr(\Lambda'+\Lambda''')(\rho_{ABR})-1|
	&=|\tr(\Lambda'+\Lambda'''-\Lambda)(\rho_{ABR})|\nonumber\\
	&\le|\tr(\Lambda'-\Lambda^{(1)})(\rho_{ABR})|+|\tr(\Lambda'''-\Lambda^{(2)})(\rho_{ABR})|\nonumber\\
	&\le 28\sqrt{\varepsilon}+3\varepsilon.
	\end{align}
	We therefore have to modify $\Lambda' + \Lambda''$ so that it becomes TP, while keeping the structure required for DNS authentication.
	Let $M_B=(\Lambda'+\Lambda''')^\dagger(\mathds 1_{\tilde B})$, and $\lambda_{\min}$ and $\lambda_{\max}$ its minimal and maximal eigenvalue. Then Equation \eqref{eq:tracebound} is equivalent to $\lambda_{\min}\ge 1-\eta$ and $\Lambda_{\max}\le 1+\eta$, where we have set $\eta := 28\sqrt{\varepsilon}+3\varepsilon$. Now define the corresponding CP-map, i.e., $\mathcal{M}(X)=M^{-1/2}XM^{-1/2}$. Note that $M$ is invertible for $\eta<1$ which follows from $\varepsilon\le 62^{-2}$. We bound
	\begin{align}\label{eq:Mbound}
	\left\|\mathcal{M}-\id\right\|_\diamond&=\sup_{\rho_{BE}}\bigl\|M^{-1/2}_B\rho_{BE} M^{-1/2}_B-\rho_{BE}\bigr\|_1\nonumber\\
	&\le \sup_{\rho_{BE}}\bigl\{\bigl\|M^{-1/2}_B\rho_{BE}\bigl(M^{-1/2}_B-\mathds 1_B\bigr)\bigr\|_1+\bigl\|\bigl(M^{-1/2}_B-\mathds 1_B\bigr)\rho_{AB}\bigr\|_1\bigr\}\nonumber\\
	&\le \bigl(\bigl\|M^{-1/2}\bigr\|_\infty+1\bigr)\bigl\|M^{-1/2}-\mathds 1\bigr\|_\infty\nonumber\\
	&=(1+\lambda_{\min}^{-1/2})\max(1-\lambda_{\max}^{-1/2}, \lambda_{\min}^{-1/2}-1)\nonumber\\
	&\le (1+(1-\eta)^{-1/2})\max\bigl[1-(1+\eta)^{-1/2},(1-\eta)^{-1/2}-1\bigr]\nonumber\\
	&= (1+(1-\eta)^{-1/2})((1-\eta)^{-1/2}-1)
	=\frac{\eta}{1-\eta}\le 2\eta
	\end{align}
	The first inequality is the triangle inequality of the trace norm. The second inequality follows by three applications of H\" older's inequality with $p=1$ and $q=\infty$. The last inequality follows from the assumption $\varepsilon\le 62^{-2}$. The second to last equality holds because $\sqrt{1+x}\le 1+\frac{x}{2}$ and $(1+x/2)^{-1}\ge 1-x/2$ for $x\in[-1,1]$ imply
	\begin{align}
	(1-\eta)^{-1/2}-1\ge&(1-\eta/2)^{-1}-1\ge\eta/2
	\end{align}
	and 
	\begin{align}
	1-(1+\eta)^{-1/2}\le&1-(1+\eta/2)^{-1}\le\eta/2.
	\end{align}
	Altogether we have
	\begin{align}
	&\bigl\|\bigl(\id_A\otimes\Lambda'+\big\langle\proj{\bot}\big\rangle\otimes\Lambda'''\bigr)\circ\mathcal M-\tilde{\Lambda}\bigr\|\nonumber\\
	&\le\bigl\|\bigl(\id_A\otimes\Lambda'+\big\langle\proj{\bot}\big\rangle\otimes\Lambda'''-\tilde\Lambda\bigr)\circ\mathcal M\bigr\|_\diamond
	+\bigl\|\tilde\Lambda\circ\bigl(\mathcal M-\id\bigr)\bigr\|_\diamond\nonumber\\
	&\le\bigl\|\id_A\otimes\Lambda'+\big\langle\proj{\bot}\big\rangle\otimes\Lambda'''-\tilde\Lambda\bigr\|_\diamond\|\mathcal M\|_\diamond
	+\bigl\|\mathcal M-\id\bigr\|_\diamond\nonumber\\
	&=\bigl(\bigl\|\id_A\otimes\Lambda'-\Pi_a\tilde\Lambda\Pi_a\bigr\|_\diamond
	+\bigl\|\big\langle\proj{\bot}\big\rangle\otimes\Lambda'''-\proj{\bot}\otimes\Lambda^{(2)}\bigr\|_\diamond\bigr)\|M^{-1}\|_\infty\nonumber\\
	&~~~+\left\|\mathcal M-\id\right\|_\diamond\nonumber\\
	&\le 2(\varepsilon+28\sqrt{\varepsilon}+2\varepsilon)+2\eta=4\eta
	\end{align}
	The first and second inequality are the triangle inequality and the submultiplicativity of the diamond norm. The third inequality is due to Equations \eqref{eq:authdef}, \eqref{eq:central-bound} and \eqref{eq:Mbound}, as well as $\varepsilon\le 62^{-2}$. For the first equality, note that it is easy to check that $\|\mathcal M\|_\diamond=\|M^{-1/2}\|_\infty^2=\lambda_{\min}^{-1}$.
	
\qed
\end{proof}

\subsubsection{Achieving GYZ authentication with two-designs.}

In \cite{Garg2017}, the authors provide a scheme for their notion of authentication based on unitary eight-designs. We now show that, in fact, an approximate 2-design suffices. This is interesting, as it implies that the well-known Clifford scheme (see e.g \cite{dupuis2010secure,Broadbent2016}) satisfies the strong security of \expref{Definition}{def:auth}. All of the previous results on authentication which use the Clifford scheme thus automatically carry over to this stronger setting. We remark that our proof is inspired by the reasoning based on Schur's lemma used in results on decoupling \cite{berta2011quantum,dupuis2014one,majenz2016catalytic,berta2016deconstruction}.

\begin{thm}\label{thm:2-design-auth}
	Let $\mathrm D=\left\{U_k\right\}_k$ be a $\delta$-approximate unitary 2-design on $\hi_C$. Let $\hi_C=\hi_{A}\otimes\hi_T$ and define
	\begin{align*}
	E_k(X_A) &= U_k\left(X_A\otimes \proj{0}_T\right)\left(U_k\right)^\dagger\\
	D_k(Y_C) &= \bra 0_T\left(U_k\right)^\dagger Y U_k\ket 0_T+\tr((\mathds 1_T-\proj 0_T)\left(U_k\right)^\dagger Y U_k)\proj{\bot}\,.
	\end{align*}
	Then the \SKQES~$(\tau_K, E, D)$ is $4(1/|T| + 3\delta)^{1/3}$-GYZ-authenticating.
\end{thm}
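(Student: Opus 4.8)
The plan is to reduce the \GYZ~condition (\expref{Definition}{def:auth}) to a statement about the average over keys, exploit the block-diagonal structure in the key register $K$, and then isolate the ``harmless'' part of the attack from a remainder that is provably rejected. Writing $W_k=U_k(\id_A\otimes\ket 0_T)$ for the encryption isometry, the accept branch of the per-key effective map is the CP map $\mathcal A_k(\cdot)=W_k^\dagger\,\Lambda\bigl(W_k(\cdot)W_k^\dagger\bigr)\,W_k$, since $\Pi_\acc D_k\Pi_\acc(Y)=\bra 0_T U_k^\dagger Y U_k\ket 0_T$. Because both $E(\cdot\otimes\tau_K)$ and the key-controlled $D$ are block-diagonal in $K$, the left-hand side of \eqref{eq:auth} collapses to $\tfrac1{|K|}\sum_k\bigl\|\mathcal A_k(\rho_{AB})-(\id_A\otimes\Lambda^\acc)(\rho_{AB})\bigr\|_1$ for any candidate $\Lambda^\acc_{B\to\tilde B}$; the target keeps the plaintext untouched and only processes the side information. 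So it suffices to produce a single $k$-independent CP map $\Lambda^\acc$ that $\mathcal A_k$ approximates, on average over $k$.

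To find $\Lambda^\acc$ I would expand the Kraus operators of $\Lambda$ in a Pauli basis on $C$, writing $K_a=\sum_P P_C\otimes K_{a,P}$, and split off the $P=\one_C$ component. The crucial algebraic fact is that $W_k^\dagger(\one_C\otimes K_{a,\one})W_k=\id_A\otimes K_{a,\one}$ for \emph{every} $k$, because $W_k$ is an isometry; hence the identity-on-$C$ part of the attack contributes the fixed map $\Lambda^\acc:=\sum_a K_{a,\one}(\cdot)K_{a,\one}^\dagger$, independently of the key. Everything else is built from the traceless-on-$C$ parts $\tilde K_a$, and the remainder splits into a genuinely completely-positive piece $\mathcal C_k(\cdot)=\sum_a\tilde M_a^{(k)}(\cdot)\,\tilde M_a^{(k)\dagger}$ with $\tilde M_a^{(k)}=W_k^\dagger\tilde K_a W_k$, plus a non-positive ``cross'' piece coupling the identity Kraus to the $\tilde K_a$.

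The reason a $2$-design suffices is that everything I need is a \emph{second-moment} quantity. Schur's lemma computes the Haar channel-twirl $\overline{\mathcal T}_\mathsf{Haar}$ as a depolarizing channel, so $\mathbb E_k\mathcal A_k$ equals $\id_A\otimes\Lambda^\acc$ plus a correction of trace weight $O(1/|T|)$, and the $\delta$-approximate design property (\expref{Definition}{def:designs}) replaces the Haar average by the average over $\mathrm D$ at the cost of additive $O(\delta)$ terms. Most importantly, the \emph{average accept weight of the remainder}, $\mu:=\tfrac1{|K|}\sum_k\tr[\mathcal C_k(\rho)]$, is a second moment in $U_k$: using $\sum_a\|\tilde K_a\|_2^2\le|C|$ together with the $2$-design twirl one finds $\mu=O(1/|T|+\delta)$. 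Since $\mathcal C_k$ is CP, $\|\mathcal C_k(\rho)\|_1=\tr[\mathcal C_k(\rho)]$ is \emph{linear}, so its average trace norm is controlled directly by $\mu$ without ever touching a fourth moment. The cross piece is handled by an operator Cauchy--Schwarz (a Gram-matrix bound) that dominates its trace norm for each key by $2\sqrt{p_{=}(\Lambda,\rho)\,\tr[\mathcal C_k(\rho)]}\le 2\sqrt{\tr[\mathcal C_k(\rho)]}$.

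The main obstacle is precisely this cross term: it is not sign-definite, so its average trace norm cannot be read off from a first moment, and a naive variance bound would require a fourth moment (hence an $8$-design, as in~\cite{Garg2017}). I would finish with a threshold argument: call a key \emph{bad} if $\tr[\mathcal C_k(\rho)]>\gamma$; by Markov at most a fraction $\mu/\gamma$ of keys are bad, on good keys the total per-key deviation is $O(\sqrt\gamma+\gamma)$ by the cross bound, and bad keys contribute at most $2$ each. Optimizing the threshold balances $\sqrt\gamma$ against $\mu/\gamma$ at $\gamma\sim\mu^{2/3}$, producing the cube-root dependence $O(\mu^{1/3})$ and, after collecting the three $\delta$-contributions and the universal constants, the claimed bound $4(1/|T|+3\delta)^{1/3}$. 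The remaining care is to run all estimates against a maximally entangled input (equivalently in diamond norm) so that the same $\Lambda^\acc$ works for every $\rho_{AB}$, including states entangled with the adversary's side information.
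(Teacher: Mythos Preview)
Your approach is essentially the paper's: the simulator $\Lambda^\acc=\sum_a K_{a,\one}(\cdot)K_{a,\one}^\dagger$ is exactly the ``oblivious adversary'' $\Gamma_V=\tfrac1{|C|}\tr_C V$ used there, the second moment you call $\mu$ is precisely $\mathbb E_k\|(\Gamma_V-\Phi_k)\ket\rho\|_2^2$, and the Markov step with threshold $\gamma\sim\mu^{2/3}$ is the same optimization that produces the cube root. The one streamlining the paper adds is to reduce at the outset to pure inputs and \emph{isometric} attacks (via purification and Stinespring, which costs nothing since trace distance is monotone under partial trace); with a single Kraus operator the per-key deviation $\|\Phi_k\proj\rho\Phi_k^\dagger-\Gamma_V\proj\rho\Gamma_V^\dagger\|_1$ is bounded directly by $2\|(\Phi_k-\Gamma_V)\ket\rho\|_2$ via an elementary vector inequality, so your separate CP-piece plus cross-term Cauchy--Schwarz bookkeeping is not needed.
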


\begin{rem}
	The following proof uses the same simulator as the proof for the 8-design scheme in \cite{Garg2017}, called "oblivious adversary" there. The construction exhibited there is efficient given that the real adversary is efficient.	
	\end{rem}

\begin{proof}
	To improve readability, we will occasionally switch between adding subscripts to operators (indicating which spaces they act on) and omitting these subscripts.	
	
	We begin by remarking that it is sufficient to prove the GYZ condition (specifically, \expref{Equation}{eq:auth}) for pure input states and isometric adversary channels. Indeed, for a general state $\rho_{AB}$ and a general map $\Lambda_{CB\to C\tilde B}$, we may let $\rho_{ABR}$ and $V_{CB\to C\tilde BE}$ be the purification and Stinespring dilation, respectively. We then simply observe that the trace distance decreases under partial trace (see e.g. \cite{nielsen2010quantum}).
	
	Let $\rho_{AB}$ be a pure input state and 
	$$
	\Lambda_{CB\to C\tilde B}(X_{CB}) = V_{CB\to C\tilde B}X_{CB}V_{CB\to C\tilde B}^\dagger
	$$
	an isometry. We define the corresponding ``ideal'' channel $\Gamma_V$, and the corresponding ``real, accept'' channel $\Phi_k$, as follows:
	\begin{align}
	\left(\Gamma_V\right)_{B\to\tilde B}&=\frac{1}{|C|}\tr_CV\text{ and}\nonumber\\
	\left(\Phi_k\right)_{AB\to A\tilde B}&=\bra 0_T(U_k)^\dagger_C V_{CB\to C\tilde B} U_k\ket 0_T.
	\end{align}
	Note that for any matrix $M$ with $\|M\|_\infty\le 1$, the map $\Lambda_M(X)=M^\dagger XM$ is completely positive and trace non-increasing. We have
	\begin{equation}
	\left\|\Gamma_V\right\|_\infty\le \frac{1}{|C|}\sum_i\left\|\bra i V \ket i\right\|_\infty\le 1.
	\end{equation}

	We begin by bounding the expectation of $\left\|(\left(\Gamma_V\right)_{B\to\tilde B}-\left(\Phi_k\right)_{AB\to A\tilde B})\ket{\rho}_{AB}\right\|_2^2$, as follows. To simplify notation, we set $\sigma_{ABT} := \proj{\rho}_{AB}\otimes\proj 0_T$ to be the tagged state corresponding to plaintext (and side information) $\rho_{AB}$.
	\begin{align}\label{eq:authbound1}
	\frac{1}{|K|}&\sum_k\left\|(\Gamma_V-\Phi_k)\ket{\rho}\right\|_2^2
	=\frac{1}{|K|}\sum_k\bra{\rho}(\Gamma_V-\Phi_k)^\dagger(\Gamma_V-\Phi_k)\ket{\rho}\nonumber\\
	&=\frac{1}{|K|}\sum_k\tr\left[\sigma_{ABT} (U_k)^\dagger V^\dagger U_k\proj 0(U_k)^\dagger V U_k\right]\nonumber\\
	&~~~~~- 2\frac{1}{|K|}\sum_k\tr\left[\sigma_{ABT} (U_k)^\dagger V^\dagger U_k \Gamma_V\right]
	+ \bra{\rho}\left(\Gamma_V\right)^\dagger \Gamma_V\ket{\rho}\,.
	\end{align}
	First we bound the second term, using the fact that $\Gamma_V$ only acts on $B$.
	\begin{align}\label{eq:authbound4}
	\frac{1}{|K|}&\sum_k\tr\left[\sigma_{ABT} (U_k)^\dagger V^\dagger U_k \Gamma_V\right]
	= \frac{1}{|K|}\sum_k\tr\left[U_k\sigma_{ABT}(U_k)^\dagger V^\dagger  \Gamma_V\right]\nonumber\\
	&= \int\tr\left[\left(U \sigma_{ABT} U^\dagger+\Delta\right) V^\dagger  \Gamma_V\right]
	\ge \int\tr\left[U \sigma_{ABT} U^\dagger V^\dagger  \Gamma_V\right]-\delta\nonumber\\
	&= \int\tr\left[ \sigma_{ABT} U^\dagger V^\dagger  U\Gamma_V\right]-\delta
	= \bra{\rho}\left(\Gamma_V\right)^\dagger \Gamma_V\ket{\rho}-\delta\,.
	\end{align}
	In the above, the operator $\Delta$ is the ``error'' operator in the $\delta$-approximate 2-design. The second equality above follows from $\|\Delta\|_1 \leq \delta$ and the fact that a 2-design is also a 1-design; the inequality follows by H{\"o}lder's inequality, and the last step follows from Schur's lemma. 
	
	The first term of the RHS of Equation \eqref{eq:authbound1} can be simplified as follows. We will begin by applying the swap trick (\expref{Lemma}{lem:swap-trick}) $\tr [XY]=\tr [F  X\otimes Y]$ in the second line below. The swap trick is applied to register $CC'$, with the operators $X$ and $Y$ defined as indicated below.
	\begin{align}\label{eq:authbound2}
	&\frac{1}{|K|}\sum_k \tr\Bigl[\,\underbrace{\sigma_{ABT}(U_k)^\dagger_{C} V^\dagger_{C\tilde B\to CB} (U_k)_C\proj 0_T}_{X}\,\underbrace{(U_k)^\dagger_C V_{CB\to C\tilde B} (U_k)_C}_{Y}\,\Bigr]\nonumber\\
	&= \frac{1}{|K|}\sum_k \tr\left[\left(\sigma_{ABT}\otimes\proj 0_{T'}\right)\left(U_k^{\otimes 2}\right)_{CC'}V^\dagger_{C\tilde B\to CB}V_{C'B\to C'\tilde B}\left(U_k^{\otimes 2}\right)_{CC'}^\dagger F_{CC'}\right]\nonumber\\
	&= \frac{1}{|K|}\sum_k \tr\left[\left(U_k^{\otimes 2}\right)_{CC'}^\dagger\left(\sigma_{ABT}\otimes\proj 0_{T'}\right)\left(U_k^{\otimes 2}\right)_{CC'}V^\dagger_{C\tilde B\to CB}V_{C'B\to C'\tilde B} F_{CC'}\right]\nonumber\\
	&\le \int\tr\left[\left(U^{\otimes 2}\right)_{CC'}^\dagger\left(\sigma_{ABT}\otimes\proj 0_{T'}\right)U^{\otimes 2}_{CC'}V^\dagger_{C\tilde B\to CB}V_{C'B\to C'\tilde B} F_{CC'}\right]+\delta\nonumber\\
	&= \int \tr\left[\left(\sigma_{ABT}\otimes\proj 0_{T'}\right)U^{\otimes 2}_{CC'}V^\dagger_{C\tilde B\to CB}V_{C'B\to C'\tilde B}\left(U^{\otimes 2}\right)_{CC'}^\dagger F_{CC'}\right]+\delta.
	\end{align}
	The inequality above follows the same way as in \expref{Equation}{eq:authbound4}. Let $d=|C|$. We calculate the integral above using \expref{Lemma}{lem:Usquared}, as follows.
	\begin{equation}\label{eq:authbound3}
	\int U^{\otimes 2}V^\dagger_{C\tilde B\to CB}V_{C'B\to C'\tilde B}\left(U^{\otimes 2}\right)^\dagger\D U = \mathds 1_{CC'}\otimes R^{\mathds 1}_B+F_{CC'}\otimes R^F_B,
	\end{equation}
	where we have set
	\begin{align}
	R^{\mathds 1}_B=&\frac{1}{d(d^2-1)}\left(d^3\Gamma_V^\dagger \Gamma_V -d\mathds 1\right)\nonumber
	=\frac{1}{(d^2-1)}\left(d^2\Gamma_V^\dagger \Gamma_V -\mathds 1\right)\nonumber\\
	R^{F}_B=&\frac{1}{d(d^2-1)}\left(d^2\mathds 1-d^2\Gamma_V^\dagger \Gamma_V\right)\nonumber
	=\frac{d}{(d^2-1)}\left(\mathds 1-\Gamma_V^\dagger \Gamma_V\right).
	\end{align}
	plugging \eqref{eq:authbound3} into \eqref{eq:authbound2} and using \expref{Lemma}{lem:swap-trick} again, we get
	\begin{align}
	\int \tr&\left[\left(\sigma_{ABT}\otimes\proj 0_{T'}\right)U^{\otimes 2}_{CC'}V^\dagger_{C\tilde B\to CB}V_{C'B\to C'\tilde B}\left(U^{\otimes 2}\right)_{CC'}^\dagger F_{CC'}\right]\nonumber\\
	&= \tr\left[\left(\sigma_{ABT}\otimes\proj 0_{T'}\right)\left(\mathds 1_{CC'}\otimes R^{\mathds 1}_{B^2\to \tilde B^2}+F_{CC'}\otimes R^F_{B^2\to \tilde B^2}\right)F_{CC'}\right]\nonumber\\
	&= \tr\left[\proj{\rho}_{B}\left( R^{\mathds 1}_{B}+|A| R^F_{B}\right)\right]\nonumber\\
	&= \tr\left[\proj{\rho}_{B}\left(\frac{d(d-|A|)}{d^2-1}\left(\Gamma_V^\dagger \Gamma_V\right)_B+\frac{d|A|-1}{d^2-1}\mathds 1_B\right)\right]\,.
	\end{align}
	Now recall that $d=|A||T|$. Using the fact that $(a-1)/(b-1)\le a/b$ for $b \ge a$, we can give a bound as follows.
	\begin{align}
	\tr&\left[\proj{\rho}\left(\frac{d(d-|A|)}{d^2-1}\left(\Gamma_V^\dagger \Gamma_V\right)+\frac{d|A|-1}{d^2-1}\mathds 1\right)\right]\nonumber\\
	&= \frac{d|A|(|T|-1)}{d^2-1}\bra{\rho}\left(\Gamma_V^\dagger \Gamma_V\right)\ket{\rho}+\frac{d|A|-1}{d^2-1}\nonumber\\
	&\le \bra{\rho}\left(\Gamma_V^\dagger \Gamma_V\right)\ket{\rho}+\frac{1}{|T|}\,.
	\end{align}
	
	Putting everything together, we arrive at
	\begin{align}
	\frac{1}{|K|}\sum_k\left\|(\Gamma_V-\Phi_k)\ket{\rho}\right\|_2^2\le\frac{1}{|T|}+3\delta.
	\end{align}
	By Markov's inequality this implies
	\begin{equation}
	\mathbb{P}\left[\bigl\|(\Gamma_V-\Phi_k)\ket{\rho}\bigr\|_2^2>\alpha\left(\frac{1}{|T|}+3\delta\right)\right]\le\frac{1}{\alpha}
	\end{equation}
	which is equivalent to
	\begin{equation}
	\mathbb{P}\left[\bigl\|(\Gamma_V-\Phi_k)\ket{\rho}\bigr\|_2>\alpha^{1/2}\left(\frac{1}{|T|}+3\delta\right)^{1/2}\right]\le\frac{1}{\alpha},
	\end{equation}
	where the probability is taken over the uniform distribution on $\mathrm{D}$. Choosing $\alpha=(1/|T|+3\delta)^{-1/3}$ this yields
	\begin{equation}
	\mathbb{P}\left[\left\|(\Gamma_V-\Phi_k)\ket{\rho}\right\|_2>\left(\frac{1}{|T|}+3\delta\right)^{1/3}\right]\le \left(\frac{1}{|T|}+3\delta\right)^{1/3}.
	\end{equation}
	
	Let $S\subset D$ be such that $|S|/|\mathrm D|\ge 1-(1/|T|+3\delta)^{1/3}$ and 
	$\left\|(\Gamma_V-\Phi_k)\ket{\rho}\right\|_2\le(1/|T|+3\delta)^{1/3}$ for all $U_k\in S$. Using the easy-to-verify inequality $\|\proj{\psi}-\proj{\phi}\|_1\le 2\|\ket{\psi}-\ket{\phi}\|_2$, given as Lemma \ref{lem:1-norm-2-norm} in the supplementary material, we can bound
	\begin{align}
	\frac{1}{|K|}&\sum_{U_k\in\mathcal D}\left\|\Phi_k\proj{\rho}\left(\Phi_k\right)^\dagger-\Gamma_V\proj{\rho}\Gamma_V^\dagger\right\|_1\nonumber\\
	&\le\frac{1}{|S|}\sum_{U_k\in\mathcal S}\left\|\Phi_k\proj{\rho}\left(\Phi_k\right)^\dagger-\Gamma_V\proj{\rho}\Gamma_V^\dagger\right\|_1+2\left(\frac{1}{|T|}+3\delta\right)^{1/3}\nonumber\\
	&\le\frac{2}{|S|}\sum_{U_k\in\mathcal S}\left\|(\Gamma_V-\Phi_k)\ket{\rho}\right\|_2+2|T|^{-1/3}\nonumber\\
	&\le4\left(\frac{1}{|T|}+3\delta\right)^{1/3}.
	\end{align}
	This completes the proof for pure states and isometric adversary channels. As noted above, the general case follows.
	\qed
\end{proof}

As an example, one may set $|T|=2^{s}$  (i.e. $s$ tag qubits) and take an approximate unitary 2-design of accuracy $2^{-s}$. The resulting scheme would then be $\Omega(2^{-s/3})$-GYZ-authenticating.

A straightforward corollary of the above result is that, in the case of unitary schemes, adding tags to non-malleable schemes results in GYZ authentication. We leave open the question of whether this is the case for general (not necessarily unitary) schemes.

\begin{cor}\label{thm:cor-NM-GYZ}
	Let $(\tau, E, D)$ be a $2^{-rn}$-non-malleable unitary \SKQES~with plaintext space $A$. Define a new scheme $(\tau, E', D')$ with plaintext space $A'$ where $A = TA'$ and
	\begin{align*}
	E'(X)
	&= E(X\otimes\proj 0_T)\\
	D'(Y)
	&= \bra{0}_TD(Y)\ket 0_T+\tr\left[(\mathds 1_T-\proj 0_T)D(Y)\right]\proj \bot\,.
	\end{align*}
	Then there exists a constant $r>0$ such that $(\tau, E', D')$ is $2^{-\Omega(n)}$-GYZ-authenticating if $|T|=2^{\Omega(n)}$.
\end{cor}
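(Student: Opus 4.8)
The plan is to obtain the result by simply composing two theorems we have already proved: the approximate two-design characterization of unitary \ITNM~schemes (\expref{Theorem}{thm:eps-USKQES-NM-2design}) and the authentication bound for the tagged two-design scheme (\expref{Theorem}{thm:2-design-auth}). The first thing I would do is check that the construction $(\tau, E', D')$ in the statement is \emph{literally} the scheme analyzed in \expref{Theorem}{thm:2-design-auth}, instantiated with the encryption unitaries $\{U_k\}$ of the given scheme, where $E_k(X) = U_k X U_k^\dagger$ acts on the original plaintext space $\hi_A$ (which, for a unitary scheme, is identified with the ciphertext space), and this space is split as $\hi_A = \hi_T \otimes \hi_{A'}$. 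Indeed, unitarity together with correctness forces $D_k(Y) = U_k^\dagger Y U_k$, so substituting this into the definition of $D'$ reproduces exactly the decryption map $D_k$ of \expref{Theorem}{thm:2-design-auth}. Hence it suffices to show that $\{U_k\}_k$ is a sufficiently good approximate unitary $2$-design on $\hi_A$ and then read off the resulting authentication parameter.

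For the design property I would invoke the second implication of \expref{Theorem}{thm:eps-USKQES-NM-2design}: taking $r$ to be the constant supplied there, the hypothesis that $(\tau, E, D)$ is $\Omega(2^{-rn})$-\ITNM~yields that $\{U_k\}_k$ is a $\delta$-approximate $2$-design with $\delta = 2^{-\Omega(n)}$. It is worth noting that this direction of \expref{Theorem}{thm:eps-USKQES-NM-2design} carries no hypothesis on the reference register $R$, so no additional assumptions are needed at this step.

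It then remains to feed $\delta = 2^{-\Omega(n)}$ into \expref{Theorem}{thm:2-design-auth}, which certifies that the tagged scheme is $4(1/|T| + 3\delta)^{1/3}$-\GYZ-authenticating. Using the hypothesis $|T| = 2^{\Omega(n)}$ we have $1/|T| = 2^{-\Omega(n)}$ as well, so both terms inside the parenthesis are exponentially small in $n$; the cube root merely rescales the exponent by the constant $1/3$, and the leading factor $4$ is harmless. Therefore the authentication error is $2^{-\Omega(n)}$, as claimed.

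I expect no genuinely hard step here — the mathematical content is entirely carried by the two cited theorems, and this is flagged as a ``straightforward consequence'' in the introduction. The only care required is bookkeeping: (i) confirming that the two decryption maps literally coincide rather than merely resembling one another, and (ii) tracking the $\Omega/O$-constants through both reductions so the final exponent stays linear in $n$, in particular checking that choosing $r$ large enough (as demanded by \expref{Theorem}{thm:eps-USKQES-NM-2design}) is consistent with the existential quantifier on $r$ in the statement. The place where this clean chaining breaks down is precisely the non-unitary case left open in the surrounding text, since \expref{Theorem}{thm:eps-USKQES-NM-2design} is specific to unitary encryption.
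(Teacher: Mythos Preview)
Your proposal is correct and follows exactly the paper's own argument: apply \expref{Theorem}{thm:eps-USKQES-NM-2design} to extract an approximate $2$-design from the \ITNM~assumption, then plug that into \expref{Theorem}{thm:2-design-auth}. Your additional bookkeeping checks (that $D'$ literally matches the decryption of \expref{Theorem}{thm:2-design-auth}, that the relevant direction of \expref{Theorem}{thm:eps-USKQES-NM-2design} needs no bound on $R$, and that the exponent survives the cube root) are all correct and make the composition explicit.
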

The proof is a direct application of \expref{Theorem}{thm:eps-USKQES-NM-2design} (approximate non-malleability is equivalent to approximate 2-design) and \expref{Theorem}{thm:2-design-auth} (approximate 2-designs suffice for GYZ authentication.) We emphasize that, by \expref{Remark}{rem:efficient}, exponential accuracy requirements can be met with polynomial-size circuits.

\subsubsection{DNS authentication from non-malleability.}

We end with a theorem concerning the case of general (i.e., not necessarily unitary) schemes. We show that adding tags to a non-malleable scheme results in a \DNS-authenticating scheme. In this proof we will denote the output system of the decryption map by $\overline A$ to emphasize that it is $A$ enlarged by the reject symbol.

\begin{thm}
	Let $r$ be a sufficiently large constant, and let $(\tau, E, D)$ be an $2^{-rn}$-\ITNM~\SKQES~with $n$ qubit plaintext space $A$, and choose an integer $d$ dividing $|A|$. Then there exists a decomposition $A=TA'$ and a state $\ket{\psi}_T$ such that $|T| = d$ and the scheme $(\tau, E', D')$ defined by 
	\begin{align*}
	E^t(X)
	&= E(X\otimes\proj \psi_T)\\
	D^t(Y)
	&= \bra{\psi}_TD(Y)\ket \psi_T+\tr\left[(\mathds 1_T-\proj \psi_T)D(Y)\right]\proj \bot\,.
	\end{align*}
is $(4/|T|)+2^{-\Omega(n)}$-DNS-authenticating. 
\end{thm}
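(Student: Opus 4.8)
The plan is to reduce the claim to the characterization of $\ITNM$ schemes (\expref{Theorem}{thm:eps-effective-char}) and then track how the tagging construction reshapes the effective channel into the \DNS~accept/reject form. Fix an attack $\Lambda_{CB\to C\tilde B}$ on the tagged scheme. Since appending the fixed tag $\proj\psi_T$ and performing the tag measurement are both diamond-norm contractions, the tagged effective channel $\tilde\Lambda^t_{A'B\to A'\tilde B}$ equals $[\text{tag-measure}]\circ\tilde\Lambda\circ[\,\cdot\,\ot\proj\psi_T]$, where $\tilde\Lambda_{AB\to A\tilde B}$ is the effective channel of $\Lambda$ for the underlying scheme on $A=TA'$. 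The $2^{-rn}$-\ITNM~hypothesis gives $\bigl\|\tilde\Lambda-\tilde\Lambda^{\mathrm{exact}}\bigr\|_\diamond\le 2^{-\Omega(n)}$, with $\tilde\Lambda^{\mathrm{exact}}=\id_A\ot\Lambda'+\tfrac{1}{|C|^2-1}(|C|^2\langle D_K(\tau)\rangle-\id)_A\ot\Lambda''$, so it suffices to analyze the tagged version of $\tilde\Lambda^{\mathrm{exact}}$ and fold the $2^{-\Omega(n)}$ into the final error.

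The second step is to choose the decomposition and tag so as to kill the ``replacement'' term. Write $\omega_A:=D_K(\tau_C)$ and $\omega_T:=\tr_{A'}\omega_A$. I fix any factorization $A=TA'$ with $|T|=d$ and take $\ket\psi_T$ to be a minimal-eigenvalue eigenvector of $\omega_T$; since $\tr\omega_T\le 1$ this yields $\bra\psi\omega_T\ket\psi\le 1/d=1/|T|$. Evaluating $(\tilde\Lambda^{\mathrm{exact}})^t$ on an input $\rho_{A'B}$ (tagged lift $\rho_{A'B}\ot\proj\psi_T$), the term $\id_A\ot\Lambda'$ keeps the tag intact, so after the accept projection it contributes exactly $(\id_{A'}\ot\Lambda')(\rho_{A'B})$ — the desired plaintext-preserving accept map. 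The $-\id$ piece of the second term contributes $-\tfrac{1}{|C|^2-1}(\id_{A'}\ot\Lambda'')(\rho_{A'B})$, again plaintext-preserving but weighted by $1/(|C|^2-1)\le 2^{-\Omega(n)}$, which I fold into the error so that I may set $\Lambda^\acc:=\Lambda'$ (genuinely CP). The replacement part $\tfrac{|C|^2}{|C|^2-1}\langle\omega_A\rangle\ot\Lambda''$ splits under the tag measurement: its wrong-tag weight is routed to $\proj\bot$ and already has the \DNS~reject form $\proj\bot\ot(\cdot)_B$ (it depends on $\rho$ only through $\rho_B$), while its accept-projected piece equals $\tfrac{|C|^2}{|C|^2-1}(\bra\psi\omega_A\ket\psi)\ot\Lambda''(\rho_B)$.

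This accept-projected ``injection'' piece is the only part that breaks the plaintext-preserving structure, and it is exactly what the tag controls: using $\tr\Lambda''(\rho_B)=1-p_=\le 1$, its trace norm is bounded by $\tfrac{|C|^2}{|C|^2-1}\bra\psi\omega_T\ket\psi\le 2/|T|$ uniformly in $\rho$, so I discard it at a cost of $2/|T|$ in diamond norm. Reading off $\Lambda^\rej$ from the routed wrong-tag weight then places $\tilde\Lambda^t$ within $2/|T|+2^{-\Omega(n)}$ of $\id_{A'}\ot\Lambda^\acc+\proj\bot\ot\Lambda^\rej$. Finally I repair trace preservation: $\Lambda^\acc+\Lambda^\rej$ fails to be TP only by the weight lost to the discarded term and the $1/(|C|^2-1)$ factors, i.e.\ by $O(1/|T|)+2^{-\Omega(n)}$, and I fix this exactly as in the \GYZ$\Rightarrow$\DNS~argument, composing both maps with $\mathcal M(X)=M^{-1/2}XM^{-1/2}$ for $M=(\Lambda^\acc+\Lambda^\rej)^\dagger(\one_{\tilde B})$. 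This makes the sum TP and, since $\|M-\one\|_\infty$ is of the same order, adds at most a further $2/|T|+2^{-\Omega(n)}$, giving the claimed $(4/|T|)+2^{-\Omega(n)}$.

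The main obstacle is the injection term $(\bra\psi\omega_A\ket\psi)\ot\Lambda''$: it is precisely the residue of the plaintext-injection attack separating \ITNM~from \ABWNM~(\expref{Example}{ex:injection}), and the whole purpose of the tag is to force the adversary's overlap with the fixed average-decryption state $\omega_A$ to be small. Pinning down the constant requires simultaneously (i.)~choosing $\ket\psi$ to minimize $\bra\psi\omega_T\ket\psi$, (ii.)~checking that every remaining piece of the tagged exact channel is genuinely of \DNS~accept or reject type, and (iii.)~carrying out the trace-preservation repair without reintroducing plaintext dependence. I also note that, in the non-unitary case, the pre-existing $\bot$-weight of $D_K(\tau_C)$ must be routed to the reject branch, which is consistent with the definition of $D^t$ and contributes only a $\proj\bot\ot(\cdot)_B$ term.
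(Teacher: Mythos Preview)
Your proof is correct and follows the same high-level strategy as the paper: invoke the $\ITNM$ characterization (\expref{Theorem}{thm:eps-effective-char}) to get the structured effective map, push it through the tag construction, and show the only offending piece is the ``injection'' term $\langle\bra\psi D_K(\tau)\ket\psi\rangle\otimes\Lambda''$, whose weight you control via the tag choice. Two technical differences are worth noting. First, your tag selection is more elementary than the paper's: you fix an arbitrary factorization $A=TA'$ and take $\ket\psi$ to be a minimum-eigenvalue eigenvector of $\omega_T=\tr_{A'}D_K(\tau_C)$, giving $\bra\psi\omega_T\ket\psi\le 1/|T|$ immediately; the paper instead Haar-averages $\tr\bra\psi U^\dagger D_K(\tau)U\ket\psi$ over $U_A$ and invokes existence. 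Your argument is shorter and even shows that \emph{any} factorization works. Second, the paper avoids your trace-preservation repair entirely by a cleverer bookkeeping: it sets $\Lambda_\acc=\Lambda'+\tfrac{\gamma|C|^2-1}{|C|^2-1}\Lambda''$ and $\Lambda_\rej=\tfrac{(1-\gamma)|C|^2}{|C|^2-1}\Lambda''$ with $\gamma=\max(\hat\gamma,|C|^{-2})$, so that $\Lambda_\acc+\Lambda_\rej=\Lambda'+\Lambda''$ is exactly TP and both maps are manifestly CP; the diamond-norm bound then comes out directly as $\le 4/|T|$ without the $\mathcal M$-correction machinery. Your route trades this algebraic trick for a repair step borrowed from the $\GYZ\Rightarrow\DNS$ argument, which is perfectly valid but slightly heavier.
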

\begin{proof}
	We prove the statement for $\varepsilon=0$ for simplicity, the general case follows easily by employing \expref{Theorem}{thm:eps-effective-char} instead of \expref{Theorem}{thm:effective-char}.
	
	By \expref{Theorem}{thm:effective-char}, for any attack map $\Lambda_{CB\to C\tilde B}$, the effective map is equal to
	\begin{equation}
		\tilde{\Lambda}_{AB\to \overline A\tilde B}=\id_A\otimes\Lambda'_{B\to\tilde B}+\frac{1}{|C|^2-1}\left(|C|^2 \langle D_K(\tau_C)\rangle-\id\right)_{\overline A}\otimes\Lambda''_{B\to\tilde B}
	\end{equation}
	for CP maps $\Lambda'$ and $\Lambda''$ whose sum is TP. The effective map under the tagged scheme is therefore
	\begin{align*}
		\tilde{\Lambda}^t_{A'B\to \overline A'\tilde B}
		&= \bra{\psi}_T\tilde{\Lambda}_{AB\to \overline A\tilde B}((\cdot)\otimes \psi_T)\ket{\psi}_T\\
		&~~~+\tr\bigl[(\mathds 1_T-\psi_T)\tilde{\Lambda}_{AB\to \overline A\tilde B}((\cdot)\otimes \psi_T)\bigr]\proj \bot\\
		&=\left(\id_{A'}\right)_{A'\to\overline{A}'}\otimes\Lambda'_{B\to\tilde B}\\
		&~~~+\bigl(|C|^2 \big\langle \bigl(\bra{\psi}_TD_K(\tau_C)\ket{\psi}_T\bigr)_{A'}\oplus\beta\proj{\bot}\big\rangle-\id_{A'}\bigr)_{A\to \overline A'}\otimes \frac{\Lambda''_{B\to\tilde B}}{|C|^2-1}\\
	\end{align*}
	with $\beta= \tr\left[(\mathds 1-\psi)_TD_K(\tau_C)\right]$. We would like to say that, unless the output is the reject symbol, the effective map on $A$ is the identity. We do not know, however, what $D_K(\tau_C)$ looks like.
	Therefore we apply a standard reasoning that if a quantity is small \emph{in expectation}, then there exists at least one small instance. We calculate the expectation of $\tr\bra{\psi}_TD_K(\tau_C)\ket{\psi}_T$ when the decomposition $A=TA'$ is drawn at random according to the Haar measure,
	\begin{align}
		\int \tr\bra{\psi}U_A^\dagger D_K(\tau_C)U_A\ket{\psi}_T \D U_A
		&=\tr \left[\left(\int U_A\ket{\psi}_T\otimes\mathds 1_{A'}\psi U_A^\dagger \D U_A\right)D_K(\tau_C)\right] \nonumber\\
		&=\frac{\tr\mathds 1_A}{\tr\Pi_\acc}\tr \Pi_{\acc}D_K(\tau_C)\nonumber\\
		&\le1/|T|.
	\end{align}
	Hence there exists at least one decomposition $A=TA'$ and a state $\ket{\psi}_T$ such that $\hat{\gamma}:=\tr\bra{\psi}_TD_K(\tau_C)\ket{\psi}_T\le 1/|T|$. Define $\gamma=\max(\hat{\gamma}, |C|^{-2})$. For the resulting primed scheme, let 
	$$
	\Lambda_{\rej}:=\frac{(1-\gamma)|C|^2}{|C|^2-1}\Lambda''
	\qquad \text{and} \qquad
	\Lambda_{\acc}=\Lambda'+\frac{\gamma |C|^2-1}{|C|^2-1}\Lambda''\,.
	$$
	 We calculate the diamond norm difference between the real effective map an the ideal effective map,
	\begin{align}
		&\bigl\|\tilde{\Lambda}^t-\id\otimes \Lambda_\acc-\langle\proj{\bot}\rangle\otimes\Lambda_\rej\bigr\|_\diamond\nonumber\\
		&\le \bigl\|\id\otimes\Lambda'+\frac{1}{|C|^2-1}\bigl(|C|^2 \big\langle \bigl(\bra{\psi}D_K(\tau)\ket{\psi}\bigr)\big\rangle-\id\bigr)
		\otimes\Lambda''-\id\otimes \Lambda_\acc\bigr\|_\diamond\nonumber\\
		&~~~~~+\bigl\|\langle\proj{\bot\rangle}\otimes (1-\hat\gamma)|C|^2\Lambda'' / (|C|^2-1)-\langle\proj{\bot}\rangle\otimes\Lambda_\rej\bigr\|_\diamond\nonumber\\
		&\le(1+|C|^{-2})(|T|^{-1}+2|C|^{-2})\nonumber\\
		&= |T|^{-1}(1+(|A'||T|)^{-2})(1+2|A'|^{-2})\nonumber\\
		&\le 4|T|^{-1}
	\end{align}
	as desired\,.
\qed
\end{proof}
	
	\section{Open problems}
	
We close with a few natural open questions. 

First, one might consider other formulations of quantum non-malleability. We believe that our definition (i.e., $\ITNM$) captures the correct class of schemes; given this, there may still be more natural formulation of that class. For example, one may ask whether the definition must directly reference the attack map $\Lambda$ (our definition does do so, through the quantity $p_=(\Lambda, \rho)$.) Indeed, one may ask for a definition which refers only to newly created correlations between the adversary and the decryption; of course, one must still capture resistance to the attacks which led us to use $p_=(\Lambda, \rho)$ in our formulation. On the other hand, it may also be appealing to give a formulation which intensifies the dependence on the attack map, to reflect the fact that direct correlations between the adversary's initial side information and the plaintext are completely destroyed for $p_=(\Lambda, \rho)=0$.

Second, our characterization of non-malleable schemes (i.e., \expref{Theorem}{thm:eps-effective-char-app}) comes with an approximation penalty in the converse direction. It would be interesting to see whether the dimension factors in the penalty are necessary. A tightened entropic condition could simplify this task for one of the implications.

Finally, there are also open questions concerning the connection between quantum non-malleability and authentication. For instance, one may ask whether adding ``tags'' (i.e., $0$-qubits which are measured and checked during decryption) to an arbitrary non-malleable scheme results in a \GYZ-authenticating scheme.	
\section*{Acknowledgments}

The authors would like to thank Anne Broadbent, Alexander M\" uller-Hermes, Fr\'ed\'eric Dupuis and Christopher Broadbent2016n for helpful discussions. G.A. and C.M. acknowledge financial support from the European Research Council (ERC Grant Agreement 337603), the Danish Council for Independent Research (Sapere Aude) and VILLUM FONDEN via the QMATH Centre of Excellence (Grant 10059).

\appendix

\section{Some simple proofs}

In this section, we give some simple proofs about our definition of information-theoretic secrecy for quantum encryption. We also connect our notion of quantum non-malleability to classical non-malleability.

\subsection{Secrecy}\label{appendix:secrecy}

In the case of unitary schemes, information-theoretic quantum secrecy is equivalent to the unitary one-design property.

\begin{prop}\label{prop:one-design}
	A unitary \SKQES~$(\tau_K, E,D)$ is \ITS~if and only if $\mathrm D=\{U_k\}_{k\in \mathcal{K}}$ is a unitary 1-design, where $E_k(X)=U_kXU_k^\dagger$.
\end{prop}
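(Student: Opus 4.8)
The plan is to reduce the claimed equivalence to a single statement about the average encryption channel $E_K = \tr_K E(\,\cdot\,\otimes\tau_K)$, namely that \ITS~holds if and only if $E_K = \langle\tau_C\rangle$ is the completely depolarizing channel. Writing $\tau_K = \sum_k p_k \proj k$, we have $E_K = \sum_k p_k\, U_k(\cdot)U_k^\dagger$, which is exactly the $1$-twirl associated to $\mathrm D$ with the key weights. Since the Haar $1$-twirl is $X \mapsto \tr(X)\tau_C$, and since unitarity forces $|A| = |C|$, the exact $1$-design property $\bigl\|\mathcal T^{(1)}_{\mathrm D} - \mathcal T^{(1)}_{\mathsf{Haar}}\bigr\|_\diamond = 0$ is precisely the condition $E_K = \langle\tau_C\rangle$. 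Thus it suffices to prove the two implications between \ITS~and $E_K = \langle\tau_C\rangle$.

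The backward direction is immediate. Suppose $E_K = \langle\tau_C\rangle$. For any $\hi_B$ and any $\rho_{AB}$, the ciphertext-side state is $\sigma_{CB} = (E_K\otimes\id_B)(\rho_{AB})$. Because the depolarizing channel discards its input and outputs $\tau_C$, we get $\sigma_{CB} = \tau_C\otimes\tr_A(\rho_{AB}) = \tau_C\otimes\rho_B$, a product state, and therefore $I(C:B)_\sigma = 0$. This is exactly \expref{Definition}{def:ITS}, so the scheme is \ITS.

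The forward direction is the substantive step, and the key idea is to probe the full channel with a maximally entangled reference. Assuming \ITS, I would take $\hi_B = \hi_{A'}$ with $|A'| = |A|$ and set $\rho_{AB} = \phi^+_{AA'}$. Then the resulting state $\sigma_{CA'} = (E_K\otimes\id_{A'})(\phi^+_{AA'})$ is, by definition, the Choi--Jamio\l kowski state $\eta_{E_K}$. Now \ITS~forces $I(C:A')_\sigma = 0$, which (since zero mutual information characterizes product states) means $\sigma_{CA'} = \sigma_C\otimes\sigma_{A'}$; as the $A'$-marginal of $\phi^+$ is maximally mixed and $E_K$ is trace-preserving, $\sigma_{A'} = \tau_{A'}$, so $\eta_{E_K} = \sigma_C\otimes\tau_{A'}$. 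Applying the inverse CJ isomorphism to this product form collapses the $A'$-trace to a scalar and yields $E_K(X) = \tr(X)\,\sigma_C$, i.e. $E_K = \langle\sigma_C\rangle$ is constant. Finally I pin down $\sigma_C$ by evaluating at the maximally mixed input: $\sigma_C = E_K(\tau_A) = \sum_k p_k U_k\tau_A U_k^\dagger = \tau_C$, using $U_k\tau_A U_k^\dagger = \tau_C$ for unitaries with $|A|=|C|$. Hence $E_K = \langle\tau_C\rangle$, completing the equivalence. The only place requiring care is this last direction — recognizing $\sigma_{CA'}$ as the CJ state and correctly inverting the isomorphism on a product Choi state — but this is routine once the maximally entangled probe is chosen.
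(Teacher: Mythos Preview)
Your proof is correct, and the overall reduction---showing that \ITS~is equivalent to $E_K=\langle\tau_C\rangle$ and identifying the latter with the exact $1$-design condition---matches the paper's structure. The forward implication, however, is argued by a genuinely different device. The paper proceeds by contradiction at the level of inputs: assuming two states $\rho_A,\rho_A'$ have distinct images under $E_K$, it feeds in the cq-state $\tfrac12(\rho_A\otimes\proj{0}_B+\rho_A'\otimes\proj{1}_B)$ and observes that the resulting $I(C:B)$ is strictly positive, contradicting \ITS; constancy on states then extends by linearity, and the fixed output is pinned to $\tau$ by evaluating at $\tau_A$. You instead use a single maximally entangled probe, recognize the output as the Choi--Jamio\l kowski state $\eta_{E_K}$, and invoke that $I(C:A')=0$ forces $\eta_{E_K}$ to be a product with $A'$-marginal $\tau_{A'}$, which characterizes a constant channel via the inverse CJ map. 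Your route is more compact---one input determines the entire channel at once---but relies on the CJ machinery; the paper's argument is more elementary and closer in spirit to a classical indistinguishability proof, requiring only a two-valued flag register.
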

\begin{proof}
	Let $\mathrm D$ be a unitary 1-design. Then the scheme is \ITS by Schur's Lemma (See, e.g., \cite{fulton1991representation}). Conversely, let $(\tau_K, E,D)$ be $\mathsf{ITS}$, i.e. $\mathcal T^{(1)}_{\mathcal D}(\rho_{AB})=(E_K)(\rho_{AB})=\mathcal T^{(1)}_{\mathcal D}(\rho_{A})\otimes\rho_B$. Suppose that there exist $\rho_A, \rho'_A$ such that $\mathcal T^{(1)}_{\mathcal D}(\rho_{A})\neq \mathcal T^{(1)}_{\mathcal D}(\rho'_{A})$. Then $I(A:B)_\sigma>0$ for $\sigma=E(\frac 1 2 \left(\rho_A\otimes\proj 0_B+\rho'_A\otimes \proj 1_B\right))$, which is a contradiction. This implies $T^{(1)}_{\mathcal D}(\rho_{A})=\sigma^0_A$ for all $\rho_A$. But $T^{(1)}_{\mathcal D}(\tau_{A})=\tau_A$, i.e. $\sigma^0_A=\tau_A$. The observation that the positive semidefinite matrices span the whole matrix space finishes the proof.
	\qed
\end{proof}

Next, we show that the two notions of perfect secrecy (\ITS~and \IND) are equivalent. Using Pinsker's inequality (\expref{Lemma}{lem:pinsker}) and the Alicki-Fannes inequality (\expref{Lemma}{lem:fannes}) one can also show that $\varepsilon$-\ITS~implies $(4\sqrt{2\varepsilon})$-\IND, and that $\varepsilon$-\IND~implies $(4h(\varepsilon)+6\varepsilon\log|A|)$-\ITS.
\begin{prop}
	A \SKQES~$(\tau_K,E,D)$ is \ITS~if and only if it is \IND.
\end{prop}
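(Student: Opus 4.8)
The plan is to prove both implications by reducing everything to the structure of the averaged encryption map $E_K=\tr_K E(\cdot\otimes\tau_K)$, closely following the argument of \expref{Proposition}{prop:one-design}. First I would fix the working definition of \IND: perfect indistinguishability asks that the ciphertext carry no information about the plaintext, i.e.\ that $E_K(\rho_A)=E_K(\rho'_A)$ for all density operators $\rho_A,\rho'_A\in\opr(\hi_A)$.

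For the direction \IND$\Rightarrow$\ITS, I would first note that \IND~means $E_K(\rho_A)=\sigma_C$ for some fixed state $\sigma_C$ and every density operator $\rho_A$. Since the positive semidefinite operators span $\opr(\hi_A)$ and $E_K$ is linear, this upgrades to $E_K=\langle\sigma_C\rangle$, the constant map. I would then apply this constant map to the $A$-register of an arbitrary $\rho_{AB}$, obtaining $\sigma_{CB}=\sigma_C\otimes\rho_B$, a product state; hence $I(C:B)_\sigma=0$, which is exactly \ITS.

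For the converse \ITS$\Rightarrow$\IND, I would argue by contradiction. Assuming a distinguishing pair $\rho_A,\rho'_A$ with $E_K(\rho_A)\neq E_K(\rho'_A)$, I would introduce a classical probe register $B$ and encrypt the correlated state $\rho_{AB}=\tfrac12\left(\rho_A\otimes\proj 0_B+\rho'_A\otimes\proj 1_B\right)$. The resulting ciphertext $\sigma_{CB}=\tfrac12\left(E_K(\rho_A)\otimes\proj 0_B+E_K(\rho'_A)\otimes\proj 1_B\right)$ is classical-quantum, so $I(C:B)_\sigma$ equals the Holevo quantity of the ensemble $\{(\tfrac12, E_K(\rho_A)),(\tfrac12, E_K(\rho'_A))\}$, which is strictly positive because the two conditional ciphertexts differ. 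This contradicts \ITS, forcing $E_K$ to be constant and establishing \IND.

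I expect the only genuinely delicate point to be in the converse: packaging an arbitrary hypothetical distinguishing pair into a single state with a purely classical register, and then invoking the strict positivity of mutual information for a classical-quantum state whose conditional states are not all equal (equivalently, that the Holevo quantity vanishes only for constant ensembles). This is precisely the positivity input already used in \expref{Proposition}{prop:one-design}; the remaining steps are just linearity of $E_K$ and the fact that the density operators span $\opr(\hi_A)$.
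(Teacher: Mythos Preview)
Your proposal is correct and follows essentially the same approach as the paper's proof: both directions reduce to showing that $E_K$ is a constant channel, using linearity/spanning for \IND$\Rightarrow$\ITS~and the classical-probe contradiction $\rho_{AB}=\tfrac12(\rho_A\otimes\proj 0+\rho'_A\otimes\proj 1)$ for \ITS$\Rightarrow$\IND. The only cosmetic difference is the order of the implications and that the paper phrases \IND~via $\|E_K(\rho^{(0)}_A-\rho^{(1)}_A)\|_1=0$ rather than $E_K(\rho_A)=E_K(\rho'_A)$.
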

\begin{proof}
	Let $(\tau_K,E,D)$ be $\mathsf{ITS}$. Then there exists a $\sigma_0^C$ such that for all $\rho_{AB}$ $(E_K)_{KA\to KC}(\rho_{AB})=\sigma_C^0\otimes\rho_B$. This can be seen as follows: By definition every $\rho_{AB}$ is mapped to a product state. Suppose now $E_K(\rho_A)\neq E_K(\rho'_A)$, then $I(C:B)_{\sigma_{CB}}\neq 0$ with $\sigma_{CB}=E_K(\frac 1 2(\rho_A\otimes \proj 0_B+\rho'_A\otimes \proj 1_B))$, a contradiction. From this observation the $\mathsf{IND}$-property follows immediately.
	
	Conversely, let $(\tau_K,E,D)$ be $\mathsf{IND}$, that is in particular
	\begin{equation}
	\left\|(E_K)_{A\to C}(\rho^{(0)}_{A}-\rho^{(1)}_{A})\right\|_1=0
	\end{equation}
	for all $\rho^{(i)}_A$, $i=0,1$, i.e. $(E_K)_{A\to C}=\sigma^0_C\tr(\cdot)$ for some quantum state $\sigma^0_C$, as the set of quantum states spans all of $\opr(\hi_C)$. Now the $\mathsf{ITS}$-property follows immediately.
	\qed
\end{proof}

Next, we revisit in detail the example that shows that, in the approximate setting, there exist unitary schemes which can only be broken with access to side information. This is in contrast to the exact setting, where side information is unhelpful.

\begin{ex}
	Let $\mathrm{D}=\{\hat U^{(k)}\}_k$ be an exact unitary 2-design on a Hilbert space $\hi_A$ of even dimension $|A|=d$. Let $V_A$ be the unitary matrix with
	$$
	V_{j\,\,(d-j+1)}=i\,\cdot\,\mathrm{sign}(d-2j+1)
	$$
	for all $j$, and all other entries equal to zero. Set, for each $k$, 
	$$
	U^{(k)}=\hat U^{(k)}V\left(\hat U^{(k)}\right)^T\,.
	$$
	Define a \SKQES~by $E_k(X)=U^{(k)}X\left(U^{(k)}\right)^\dagger$ and $D_k=E_k^\dagger$.
\end{ex}

It is easy to check that $E_K(X_A)=\frac{1}{d-1}(d\tau_A-X^T_A)$ is the Werner-Holevo channel \cite{werner2002counterexample}. 
For any two quantum states $\rho_A, \rho'_A$,
\begin{align}
\left\|E_K(\rho_A)-E_K(\rho'_A)\right\|_1=&\frac{1}{d-1}\left\|{\rho'}^T_A-\rho^T_A\right\|_1\nonumber\\
\le&\frac{2}{d-1}.
\end{align}

On the other hand
\begin{align}
\left\|E_K-\langle\tau\rangle\right\|_\diamond=&\frac{1}{d-1}\left\|\langle\tau\rangle-\theta\right\|_\diamond\nonumber\\
\ge&\frac{1}{d-1}\left(\left\|\theta\right\|_\diamond-\left\|\langle\tau\rangle\right\|_\diamond\right)\nonumber\\
\ge&1.
\end{align}
The last step follows because the transposition map, here denoted by $\theta$, has diamond norm $d$\footnote{this is well known, the lower bound needed here can be obtained by applying the transposition to half of a maximally entangled state}.
By the definition of the diamond norm there exists a state $\rho_{AB}$ such that
\begin{equation}
\left\|E_K(\rho_{AB}-\tau_A\otimes\rho_B)\right\|_1=\left\|\left(E_K-\langle\tau_A\rangle\right)(\rho_{AB})\right\|_1\ge 1.
\end{equation}
In other words, we have exhibited a \SKQES~that is not $\varepsilon$-\IND~for any $\varepsilon<1/2$, but adversaries without side information achieve only negligible distinguishing advantage.

\subsection{Non-malleability}\label{appendix:nm}

We begin by recalling the following definition of classical, information-theoretic non-malleability. To this end we set down the notation in the classical case by giving a definition of a classical symmetric key encryption scheme.

\begin{defn}
	Let $\mathcal X, \mathcal C, \mathcal K$ be finite alphabets (finite sets). An symmetric key encryption scheme (SKES) $(K,E,D)$ is a key random variable (RV) $K$ on $\mathcal{K}$ together with a pair of stochastic maps $E: \mathcal X\times \mathcal K\to\mathcal C$ and $D: \mathcal C\times \mathcal K\to \mathcal X\cup\{\bot\}$ such that
	\begin{equation}
	E(\cdot,k)\circ D(\cdot, k)=\mathrm{id}_{\mathcal X}\footnote{This again is a slight abuse of notation, more correctly the composition of encryption and decryption yields the canonical injection of $X$ into $X\cup\{\bot\}$.},
	\end{equation}
	where $\mathrm{id}_{\mathcal X}$ denotes the identity function of $\mathcal X$ (\emph{correctness}). We write $E_k=E(\cdot, k)$ and analoguously $D_k$.
	
\end{defn}

\begin{defn}[Classical non-malleability \cite{kawachi2011characterization}]\label{def:CNM}
	A classical SKES scheme $(K,E, D)$ is non-malleable if the following holds. For all RVs $X$ on $\mathcal X$ independent of the key, and all RVs $\tilde C$ on $\mathcal C$ independent of the key given $X, C=E(X,K)$ such that $\mathbb P[\tilde C=C]=0$,
	\begin{equation}
	I(\tilde X:\tilde C|XC)=0,
	\end{equation}
	where $\tilde X=D(\tilde C, K)$.
\end{defn}

Next we show that, if a classical scheme satisfies our notion of information-theoretic quantum non-malleability, then it is also classically non-malleable according to \expref{Definition}{def:CNM} above.

\begin{prop}\label{prop:QNM-CNM}
	Let $(\tau_K,D,E)$ be a SKES~ embedded in to the quantum formalism in the standard way, which satisfies \ITNM. Then it is information theoretically non-malleable.
\end{prop}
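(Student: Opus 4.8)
The plan is to realize the classical non-malleability scenario of \expref{Definition}{def:CNM} as a special case of the quantum scenario behind \ITNM~(\expref{Definition}{def:qNM}), and then read off the desired conditional independence from the entropic inequality \eqref{eq:ITNM-condition}. First I would embed all the classical random variables as diagonal (classical) quantum states. Concretely, take the initial state $\rho_{ABR}=\sum_x p(x)\,\proj{x}_A\otimes\proj{x}_B\otimes\proj{x}_R$, i.e.\ three perfectly correlated copies of the plaintext $X$: the genuine plaintext in $A$, a reference in $R$, and side information in $B$. The copy in $B$ is what lets the quantum adversary reproduce a classical adversary whose choice of $\tilde C$ depends on $X$. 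Since $A$, $R$, and $B$ all equal $X$, we have $I(AR:B)_\rho=H(X)$. The attack map $\Lambda_{CB\to C\tilde B}$ is then chosen to implement the classical adversary: measure the ciphertext register $C$ and the side information $B=X$ in the computational basis, sample $\tilde C$ from the classical conditional law $q(\,\cdot\mid X,C)$, write $\tilde C$ back into $C$, and copy the triple $(X,C,\tilde C)$ into $\tilde B$.

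Second, I would verify that this attack has $p_=(\Lambda,\rho)=0$. Since $\Lambda$ restricted to $C$ is a classical channel, the overlap in \eqref{eq:p-equals} is (up to normalization) $\sum_c q(c\mid c)$, where $q(\tilde c\mid c)$ is the transition matrix obtained after averaging the side information $B$ over $\rho_B$, and $q(c\mid c)$ is the probability that basis ciphertext $c$ is mapped to itself. The hypothesis $\mathbb P[\tilde C=C]=0$ forces $q(c\mid c)=0$ on the support, and extending $\Lambda$ off-support so that it always changes the ciphertext makes $q(c\mid c)=0$ for every $c$; hence $p_=(\Lambda,\rho)=0$ and the right-hand side of \eqref{eq:ITNM-condition} collapses to $I(AR:B)_\rho=H(X)$.

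Third, I would identify the output of the effective channel $\tilde\Lambda=\tr_K(D\circ\Lambda\circ E)((\cdot)\otimes\tau_K)$. Because the same traced-out key $K$ is used for encryption and decryption, the joint state on $A\tilde BR$ is classical with exactly the law of $(\tilde X,X,C,\tilde C)$ from \expref{Definition}{def:CNM}: draw $X$, draw $K$, set $C=E(X,K)$, sample $\tilde C$ from $q(\,\cdot\mid X,C)$ (independent of $K$ given $X,C$, matching the Markov condition), and set $\tilde X=D(\tilde C,K)$. Thus $A=\tilde X$, $R=X$, and $\tilde B=(X,C,\tilde C)$, and all quantum mutual informations reduce to their Shannon counterparts. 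Expanding with the chain rule gives
\[
I(AR:\tilde B)_{\tilde\Lambda(\rho)}=H(X)+I(\tilde X:C\mid X)+I(\tilde X:\tilde C\mid XC).
\]
Combining this with $p_=(\Lambda,\rho)=0$ in \eqref{eq:ITNM-condition} yields $I(\tilde X:C\mid X)+I(\tilde X:\tilde C\mid XC)\le 0$; since both conditional mutual informations are nonnegative, each vanishes, and in particular $I(\tilde X:\tilde C\mid XC)=0$, which is precisely classical non-malleability.

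I expect the main obstacle to be the careful translation of the two ``boundary'' conditions between the classical and quantum pictures: (i) showing that $\mathbb P[\tilde C=C]=0$ really gives $p_=(\Lambda,\rho)=0$ and not merely a small quantity—the subtlety is that $p_=$ weights ciphertexts uniformly via $\phi^+_{CC'}$, whereas the classical condition is stated over the true distribution, which is why one must extend the adversary off-support—and (ii) bookkeeping the register contents so that the chain-rule expansion of $I(AR:\tilde B)$ produces exactly the term $I(\tilde X:\tilde C\mid XC)$ with a leading $H(X)$ that cancels $I(AR:B)_\rho$. Everything else is routine, since on classical states von Neumann entropy coincides with Shannon entropy and the chain rule for quantum mutual information is already recorded in \expref{Section}{sec:prelims}.
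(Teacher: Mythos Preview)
Your argument is correct, and the overall strategy---embed the classical scenario as diagonal states, pick an attack with $p_=(\Lambda,\rho)=0$, then expand $I(AR:\tilde B)$ by the chain rule---matches the paper. The two proofs differ, however, in how much of the classical adversary they capture.

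The paper takes the side-information register $B$ to be \emph{trivial}, lets $\rho_{AR}$ be a perfectly correlated classical state, and uses an attack $\Lambda_{C\to C\tilde B}$ that depends only on $C$ (copying $C$ and $\tilde C$ into $\tilde B$). With trivial $B$ the \ITNM\ bound gives $I(X\tilde X:C\tilde C)=0$ directly, and the chain rule then isolates $I(\tilde X:\tilde C\mid XC)=0$. Your version instead loads a copy of $X$ into $B$ (and into $\tilde B$), incurring $I(AR:B)_\rho=H(X)$ on the right-hand side, which is then cancelled by the leading $I(\tilde X X:X)=H(X)$ term in the chain-rule expansion on the left.

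What your extra step buys is coverage of the full Markov condition in \expref{Definition}{def:CNM}: there $\tilde C$ is only required to be independent of $K$ given $(X,C)$, so it may depend on $X$ as well as on $C$. By handing the quantum adversary a copy of $X$ in $B$, your attack can simulate any such classical adversary, whereas the paper's trivial-$B$ attack only realizes $\tilde C$ that is a (randomized) function of $C$ alone. In this sense your argument is the more complete one; the price is the slightly more delicate bookkeeping you already flagged (the $H(X)$ cancellation, and the need to extend $q(\cdot\mid x,c)$ off the support of $(X,C)$ so that $p_=(\Lambda,\rho)=0$ genuinely vanishes rather than merely being small).
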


\begin{proof}
	Let $B$ be a trivial system and $\rho_{AR}$ be a maximally correlated classical state. Let furthermore $\tilde B\cong CC'$, and let $\Lambda_{C\to C\tilde B}$ be a classical map from $C$ to $C$ that makes a copy of both input and output to the $\tilde B$ register and where $\tr \left(\proj i_C\otimes \mathds 1_{\tilde B}\right) \Lambda_{C\to C\tilde B}(\proj i_C)=0$ for all standard basis vectors $\ket i$ (this is the condition $\mathbb{P}[C=\tilde C]=0$). Then $p_=(\Lambda)=0$, and therefore according to the assumption
	\begin{equation}\label{eq:condmuinfonull}
	I(AR:\tilde B)_{\tilde\Lambda(\rho)}=0,
	\end{equation}
	as $B$ was trivial. Let $X,C,\tilde C, \tilde X$ be random variables corresponding to the systems $A$ in the beginning, $C$ after encryption, $C$ after the application of $\Lambda$ and $A$ after decryption, respectively. Then Equation \eqref{eq:condmuinfonull} reads
	\begin{align}
	0=&I(X\tilde X:C \tilde C)\nonumber\\=&I(X\tilde X:C)+I(X\tilde X:\tilde C|C)\\=&I(X\tilde X:C)+I(X:\tilde C|C)+I(\tilde X:\tilde C|CX),
	\end{align}
	as the input state was maximally classically correlated. The fact that all (conditional) mutual information terms above are non-negative finishes the proof.
	\qed
\end{proof}

\section{Technical lemmas}\label{appendix:technical}
In the following we prove some Lemmas in linear algebra and matrix analysis that we need in this article.
\begin{lem}\label{lem:genmirr}
	Let $X_{A\to B}\in L(\hi_A,\hi_B)$ be a linear operator from $A$ to $B$. Then
	\begin{equation}
	X_{A\to B}\ket{\phi^+}_{AA'}=\sqrt{\frac{|B|}{|A|}}X^T_{B'\to A'}\ket{\phi^+}_{BB'}.
	\end{equation}
\end{lem}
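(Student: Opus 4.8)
The plan is to prove the identity by a direct expansion in the computational basis, since it is nothing more than the rectangular-operator generalization of the well-known ``ricochet'' (transpose) property of maximally entangled states. First I would fix computational bases $\{\ket{i}\}$ for $\hi_A\cong\hi_{A'}$ and $\{\ket{j}\}$ for $\hi_B\cong\hi_{B'}$, and record the matrix entries $X_{ji}=\bra{j}_B X\ket{i}_A$, so that $X\ket{i}_A=\sum_j X_{ji}\ket{j}_B$. Recalling the definition $\ket{\phi^+}_{AA'}=|A|^{-1/2}\sum_i\ket{ii}_{AA'}$ and the convention that $X_{A\to B}$ acts only on the first leg, the left-hand side becomes $|A|^{-1/2}\sum_{i,j}X_{ji}\ket{j}_B\ket{i}_{A'}$.

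Next I would expand the right-hand side. By the definition of the transpose with respect to the computational basis, $X^T_{B'\to A'}$ has entries $(X^T)_{ij}=\bra{i}_{A'}X^T\ket{j}_{B'}=X_{ji}$, and hence $X^T\ket{j}_{B'}=\sum_i X_{ji}\ket{i}_{A'}$. Substituting this into $\ket{\phi^+}_{BB'}=|B|^{-1/2}\sum_j\ket{jj}_{BB'}$ yields $X^T_{B'\to A'}\ket{\phi^+}_{BB'}=|B|^{-1/2}\sum_{i,j}X_{ji}\ket{j}_B\ket{i}_{A'}$. Multiplying by the stated prefactor $\sqrt{|B|/|A|}$ cancels the $|B|^{-1/2}$ against $|B|^{1/2}$ and reproduces exactly the left-hand side computed above, which completes the proof.

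There is no genuine obstacle here; the content is entirely a bookkeeping exercise. The only point requiring care is the normalization: because $\ket{\phi^+}_{AA'}$ and $\ket{\phi^+}_{BB'}$ carry the distinct factors $|A|^{-1/2}$ and $|B|^{-1/2}$, the transpose identity cannot hold on the nose, but only up to the ratio $\sqrt{|B|/|A|}$, which is precisely the factor appearing in the statement. One must also stay consistent about which tensor leg ($A'$ versus $B'$) the transposed operator acts on; faithfully tracking the two index labels $i$ and $j$ through the computation makes this transparent and prevents accidentally swapping rows and columns of $X$.
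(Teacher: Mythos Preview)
Your proof is correct and follows essentially the same approach as the paper: a direct expansion in the computational basis, identifying both sides as $|A|^{-1/2}\sum_{i,j}X_{ji}\ket{j}_B\ket{i}_{A'}$. The paper's version is slightly terser, chaining the equalities from left to right rather than computing the two sides separately, but the content is identical.
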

\begin{proof}
	\begin{align}
	X_{A\to B}\ket{\phi^+}_{AA'}=& \frac{1}{\sqrt{|A|}}\sum_{i=0}^{|A|-1}\sum_{j=0}^{|B|-1}X_{ji}\ket{j}_B\otimes\ket{i}_{A'}\nonumber\\
	=&\frac{1}{\sqrt{|A|}}\sum_{i=0}^{|A|-1}\sum_{j=0}^{|B|-1}X^T_{ij}\ket{j}_B\otimes\ket{i}_{A'}\nonumber\\
	=&\sqrt{\frac{|B|}{|A|}}X^T_{B'\to A'}\ket{\phi^+}_{BB'}.
	\end{align}
	\qed
\end{proof}

\begin{lem}\label{lem:1-norm-2-norm}
	Let $\ket{\psi},\ket{\phi}\in\hi$ be two vectors. Then
	\begin{equation}
	\|\proj{\psi}-\proj{\phi}\|_1\le 2\|\ket{\psi}-\ket{\phi}\|_2.
	\end{equation}
\end{lem}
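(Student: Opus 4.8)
The plan is to reduce the claim to the elementary fact that a rank-one operator $\ketbra{a}{b}$ has trace norm $\|a\|_2\,\|b\|_2$, after splitting the difference of the two projectors into two rank-one pieces via a standard add-and-subtract manipulation. Concretely, I would first insert the cross term $\ketbra{\psi}{\phi}$ and regroup, writing
\begin{equation*}
\proj{\psi}-\proj{\phi}
=\bigl(\ketbra{\psi}{\psi}-\ketbra{\psi}{\phi}\bigr)+\bigl(\ketbra{\psi}{\phi}-\ketbra{\phi}{\phi}\bigr)
=\ket{\psi}\bigl(\bra{\psi}-\bra{\phi}\bigr)+\bigl(\ket{\psi}-\ket{\phi}\bigr)\bra{\phi}\,,
\end{equation*}
so that the difference is exhibited as a sum of two rank-one operators.

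Next I would apply the triangle inequality for $\|\cdot\|_1$ to obtain
\begin{equation*}
\bigl\|\proj{\psi}-\proj{\phi}\bigr\|_1
\le \bigl\|\ket{\psi}(\bra{\psi}-\bra{\phi})\bigr\|_1
+\bigl\|(\ket{\psi}-\ket{\phi})\bra{\phi}\bigr\|_1\,,
\end{equation*}
and then evaluate each rank-one trace norm using $\|\ketbra{a}{b}\|_1=\|a\|_2\,\|b\|_2$; this identity follows by computing the unique nonzero singular value, since $(\ketbra{a}{b})^\dagger\ketbra{a}{b}=\|a\|_2^2\,\ketbra{b}{b}$ and hence $\bigl|\ketbra{a}{b}\bigr|=(\|a\|_2/\|b\|_2)\,\ketbra{b}{b}$, whose trace is $\|a\|_2\,\|b\|_2$. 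Applying this with $(a,b)=(\psi,\psi-\phi)$ and $(a,b)=(\psi-\phi,\phi)$ gives the bound $\|\psi\|_2\,\|\psi-\phi\|_2+\|\psi-\phi\|_2\,\|\phi\|_2$, and invoking $\|\psi\|_2,\|\phi\|_2\le 1$ collapses this to $2\,\|\psi-\phi\|_2$, as claimed.

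There is essentially no hard step here; the one point requiring care is the implicit normalization hypothesis, since the constant $2$ genuinely needs $\|\psi\|_2,\|\phi\|_2\le 1$ (taking $\ket{\psi}=2\ket{\phi}$ already defeats the inequality for unnormalized vectors). I would therefore state the lemma for unit (or norm-at-most-one) vectors. This suffices for the intended application in the proof of \expref{Theorem}{thm:2-design-auth}, where the relevant vectors are $\Phi_k\ket{\rho}$ and $\Gamma_V\ket{\rho}$ obtained by applying operators of operator norm at most $1$ (recall $\|\Gamma_V\|_\infty\le 1$, and $\Phi_k$ is a compression of an isometry) to the unit vector $\ket{\rho}$, so both have $\|\cdot\|_2\le 1$ and the hypothesis is met.
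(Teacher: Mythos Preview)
Your proof is correct and takes a genuinely different route from the paper. The paper's argument starts from the closed-form identity $\|\proj{\psi}-\proj{\phi}\|_1 = 2\sqrt{1-|\braket{\phi}{\psi}|^2}$ for \emph{unit} vectors, then bounds this by $2\sqrt{2(1-\mathrm{Re}\braket{\phi}{\psi})} = 2\|\ket{\psi}-\ket{\phi}\|_2$. Your add-and-subtract decomposition into two rank-one pieces followed by $\|\ketbra{a}{b}\|_1=\|a\|_2\|b\|_2$ is more elementary and, crucially, yields the sharper intermediate bound $(\|\psi\|_2+\|\phi\|_2)\,\|\psi-\phi\|_2$, which works for any vectors of norm at most one rather than only unit vectors. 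This extra generality is not idle: as you correctly observe, in the application inside \expref{Theorem}{thm:2-design-auth} the vectors $\Phi_k\ket{\rho}$ and $\Gamma_V\ket{\rho}$ are sub-normalized (both operators have $\|\cdot\|_\infty\le 1$), so the paper's own proof of the lemma, which invokes the pure-state trace-distance formula, does not literally cover the case where it is used. Your argument closes that gap cleanly.
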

\begin{proof}
	The trace norm distance of two pure states is given by \cite{nielsen2010quantum}
	\begin{equation}
	\|\proj{\psi}-\proj{\phi}\|_1=2\sqrt{1-|\braket{\phi}{\psi}|^2}.
	\end{equation}
	We bound
	\begin{align}
	\|\proj{\psi}-\proj{\phi}\|_1=&2\sqrt{1-|\braket{\phi}{\psi}|^2}\nonumber\\
	=&2\sqrt{(1-|\braket{\phi}{\psi}|)(1+|\braket{\phi}{\psi}|)}\nonumber\\
	\le&2\sqrt{2(1-|\braket{\phi}{\psi}|)}\nonumber\\
	\le&2\sqrt{2(1-\mathrm{Re}(\braket{\phi}{\psi}))}\nonumber\\
	=&2\|\ket{\psi}-\ket{\phi}\|_2
	\end{align}
	\qed
\end{proof}

\begin{lem}\label{lem:closeCJ2diamond}
	Let $\Lambda^{(i)}_{A\to B}$, $i=0,1$ be CPTP maps such that
	$$\left\|\eta_{\Lambda^{(0)}}-\eta_{\Lambda^{(1)}}\right\|_1\le \varepsilon.$$
	Then the two maps are also close in diamond norm,
	$$\left\|\Lambda^{(0)}_{A\to B}-\Lambda^{(1)}_{A\to B}\right\|_\diamond\le|A|\varepsilon.$$
\end{lem}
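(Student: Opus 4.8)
The plan is to pass to the difference map and reconstruct its action on an arbitrary input from its Choi--Jamio\l kowski (CJ) matrix. Set $\Delta = \Lambda^{(0)} - \Lambda^{(1)}$. Since the CJ correspondence is linear, $\eta_\Delta = \eta_{\Lambda^{(0)}} - \eta_{\Lambda^{(1)}}$, so the hypothesis reads $\|\eta_\Delta\|_1 \le \varepsilon$ and the goal becomes $\|\Delta\|_\diamond \le |A|\varepsilon$. By the definition of the diamond norm (the footnote in \expref{Section}{sec:prelims}) it suffices to bound $\bigl\|(\Delta_{A\to B}\otimes\id_{A'})(\proj{\psi}_{AA'})\bigr\|_1$ for an arbitrary pure state $\ket{\psi}_{AA'}$ with $|A'|=|A|$.

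The first step is to parametrize such a state in a way that exposes $\eta_\Delta$. Writing $\ket{\psi}_{AA'}=\sum_{ij}c_{ij}\ket{ij}_{AA'}$ and letting $M_{A'\to A'}$ be the operator with matrix elements $M_{ji}=c_{ij}$, a direct check (essentially \expref{Lemma}{lem:genmirr}) gives
\begin{equation}
\ket{\psi}_{AA'}=\sqrt{|A|}\,\bigl(\mathds 1_A\otimes M_{A'\to A'}\bigr)\ket{\phi^+}_{AA'},
\end{equation}
where normalization of $\ket{\psi}$ is equivalent to $\|M\|_2=1$ in Hilbert--Schmidt norm. Since $\Delta$ acts only on $A$, I can pull $M$ outside the channel and use the definition $(\Delta_{A\to B}\otimes\id_{A'})(\phi^+_{AA'})=(\eta_\Delta)_{BA'}$ to obtain
\begin{equation}
(\Delta\otimes\id_{A'})(\proj{\psi})=|A|\,\bigl(\mathds 1_B\otimes M\bigr)\,(\eta_\Delta)_{BA'}\,\bigl(\mathds 1_B\otimes M^\dagger\bigr).
\end{equation}

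Taking trace norms and applying H\"older's inequality \eqref{eq:holder} twice (once on each side) yields $\bigl\|(\Delta\otimes\id)(\proj{\psi})\bigr\|_1 \le |A|\,\|M\|_\infty^2\,\|\eta_\Delta\|_1$. Finally I bound the operator norm by the Hilbert--Schmidt norm, $\|M\|_\infty\le\|M\|_2=1$, and invoke $\|\eta_\Delta\|_1\le\varepsilon$; taking the supremum over $\ket{\psi}$ gives $\|\Delta\|_\diamond\le|A|\varepsilon$, which is the claim.

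The argument is essentially a one-line H\"older estimate once the reconstruction is in place, so there is no serious conceptual obstacle. The only point requiring care is bookkeeping: getting the factor of $\sqrt{|A|}$ and the transpose in the definition of $M$ correct, and confirming that a normalized $\ket{\psi}$ corresponds exactly to $\|M\|_2=1$ (so that the operator-norm bound $\|M\|_\infty\le 1$ is available). Once that normalization is settled, the dimension factor $|A|$ in the statement appears transparently from the $|A|$ prefactor in the inverse CJ reconstruction.
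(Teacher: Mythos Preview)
Your proof is correct and follows essentially the same approach as the paper: parametrize an arbitrary pure state as $\sqrt{|A|}\,(\mathds 1\otimes M)\ket{\phi^+}$, pull $M$ past the channel to expose the CJ difference, and apply H\"older. The only cosmetic difference is that the paper writes $M=\psi_{A'}^{1/2}V_{A'}$ via a polar-type decomposition and bounds $\|\psi_{A'}^{1/2}\|_\infty^2\|V_{A'}\|_\infty^2\le 1$, whereas you keep $M$ generic and use $\|M\|_\infty\le\|M\|_2=1$; these are the same estimate.
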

\begin{proof}
	The proof of this lemma is a simple application of the Hölder inequality. Let $\ket{\psi}_{AA'}=\sqrt{|A|}\psi_{A'}^{1/2}V_{A'}\ket{\phi^+}_{AA'}$ be an arbitrary pure state with $V_{A'}$ a unitary. Then we have
	\begin{align}
	\left\|(\Lambda^{(0)}_{A\to B}-\Lambda^{(1)}_{A\to B})(\proj{\psi})\right\|_1=&|A|\left\|\psi_{A'}^{1/2}V_{A'}(\eta_{\Lambda^{(0)}}-\eta_{\Lambda^{(1)}})V_{A'}^\dagger\psi_{A'}^{1/2}\right\|_1\nonumber\\
	\le&|A|\left\|\psi_{A'}^{1/2}\right\|_{\infty}^2\left\|V_{A'}\right\|_{\infty}^2\left\|\eta_{\Lambda^{(0)}}-\eta_{\Lambda^{(1)}}\right\|_1\le|A|\varepsilon.
	\end{align}
	\qed
\end{proof}

The next group of lemmas is concerned with entropic quantities.
\begin{lem}\label{lem:DP-CPTPtensCP}
	Let $\Lambda_{A\to A'}^{(i)}$ be CPTP maps and $\Lambda_{B\to B'}^{(i)}$, $i=1,...,k$ CP maps for $i=1,...,k$ such that $\sum_i\Lambda_{B\to B'}^{(i)}$ is trace preserving. Let $\Lambda^{(i)}_{AB\to A'B'}=\Lambda_{A\to A'}^{(i)}\otimes \Lambda_{B\to B'}^{(i)}$ and define the CPTP maps
	\begin{align}
	\Lambda_{AB\to A'B'C}=&\sum_{i=1}^k\Lambda_{AB\to A'B'}^{(i)}\otimes\proj i_C\text{ and}\nonumber\\
	\Lambda'_{B\to B'C}=&\sum_{i=1}^k\Lambda_{B\to B'}^{(i)}\otimes\proj i_C.
	\end{align}
	Then
	\begin{align}
	I(A':B')_{\Lambda(\rho)}\le I(A:B)_\rho+H(C|A)_{\Lambda'(\rho)}\le I(A:B)_\rho+H(C)_{\Lambda(\rho)}
	\end{align}
	for any quantum state $\rho_{AB}$.
\end{lem}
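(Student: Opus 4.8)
The plan is to reduce both inequalities to the data processing inequality and the chain rule, exploiting that $C$ is a classical register. I would first name the two relevant states: the intermediate state $\mu_{ACB'} := \Lambda'(\rho)$, obtained by applying only the $B$-side structured map (leaving $A$ untouched, so $\mu_{ACB'}=\sum_i(\id_A\otimes\Lambda^{(i)}_{B})(\rho_{AB})\otimes\proj{i}_C$), and the final state $\omega_{A'CB'} := \Lambda(\rho)$. The central observation is that $\omega$ is obtained from $\mu$ by a single CPTP map acting on $AC$ alone: the controlled channel
\begin{equation*}
\mathcal{C}_{AC\to A'C}(X) = \sum_i \Lambda^{(i)}_A\bigl(\bra{i}_C X \ket{i}_C\bigr)\otimes\proj{i}_C,
\end{equation*}
which reads the classical register $C$ and applies the corresponding $\Lambda^{(i)}_A$ to $A$. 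Since each $\Lambda^{(i)}_A$ is trace preserving, a direct trace computation shows $\mathcal{C}$ is CPTP, and one checks immediately that $\omega = (\mathcal{C}\otimes\id_{B'})(\mu)$.

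For the first inequality I would chain three facts. First, adjoining a system cannot decrease mutual information, so $I(A':B')_\omega \le I(A'C:B')_\omega$. Second, the data processing inequality applied to $\mathcal{C}$ (acting on the $AC$ side) gives $I(A'C:B')_\omega \le I(AC:B')_\mu$. Third, the chain rule yields $I(AC:B')_\mu = I(A:B')_\mu + I(C:B'|A)_\mu$. Here $I(A:B')_\mu \le I(A:B)_\rho$ by data processing, since $\mu_{AB'} = (\id_A\otimes\bar\Lambda_B)(\rho_{AB})$ with $\bar\Lambda_B := \sum_i\Lambda^{(i)}_B$ CPTP by hypothesis. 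For the conditional term I would write $I(C:B'|A)_\mu = H(C|A)_\mu - H(C|AB')_\mu$ and bound it by $H(C|A)_{\Lambda'(\rho)}$ using $H(C|AB')_\mu \ge 0$; this nonnegativity holds precisely because $C$ is classical in $\mu$, so that $H(CAB') \ge H(AB')$ by the standard bound on the entropy of a classical mixture. Combining these three estimates delivers the first inequality.

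The second inequality is then immediate: conditioning reduces entropy, so $H(C|A)_{\Lambda'(\rho)} \le H(C)_{\Lambda'(\rho)}$, and the $C$-marginals of $\Lambda'(\rho)$ and $\Lambda(\rho)$ coincide (both equal $\sum_i \tr[\Lambda^{(i)}_B(\rho_B)]\proj{i}_C$), whence $H(C)_{\Lambda'(\rho)} = H(C)_{\Lambda(\rho)}$.

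The only step requiring genuine care is the identification $\omega = (\mathcal{C}\otimes\id_{B'})(\mu)$ together with the verification that $\mathcal{C}$ is trace preserving: this is exactly what licenses the data-processing step and replaces the controlled, \emph{non}-trace-preserving action of the $\Lambda^{(i)}_B$ by an honest CPTP map on $AC$. The remaining manipulations are routine once one invokes the chain rule, nonnegativity, and data processing stated in the preliminaries; the nonnegativity of $H(C|AB')$, valid only because $C$ is classical, is the single place where the structure of the register $C$ is essential.
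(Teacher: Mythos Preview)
Your proof is correct and follows essentially the same strategy as the paper's: data processing to pass from $A'$ to $AC$, the chain rule to split off the $C$-contribution, and the classicality of $C$ to control that contribution. The one technical difference is that the paper introduces a second copy $C'$ of the classical register and places it on the $B'$ side, which yields the \emph{equality} $I(C:B'C'|A)=H(C|A)$ (since conditioning on $C'$ determines $C$). You instead work directly with $I(C:B'|A)=H(C|A)-H(C|AB')$ and invoke $H(C|AB')\ge 0$ from the classical-mixture bound, obtaining only the inequality---which is all that is needed. Your route is slightly more economical; the paper's duplication trick avoids having to justify nonnegativity of a conditional entropy. Either way, the controlled channel $\mathcal{C}_{AC\to A'C}$ you isolate is exactly the map implicitly used in the paper's second data-processing step.
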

\begin{proof}
	Let $\rho_{AB}$ be a quantum state and define the following quantum states,  
	\begin{align}
	\sigma_{A'B'CC'}=&\sum_{i=1}^k\Lambda_{AB\to A'B'}^{(i)}(\rho_{AB})\otimes\proj i_C\otimes \proj i_{C'}\text{ and}\nonumber\\
	\sigma'_{AB'CC'}=&\sum_{i=1}^k\Lambda_{B\to B'}^{(i)}(\rho_{AB})\otimes\proj i_C\otimes \proj i_{C'},
	\end{align}
	i.e. $\sigma$ and $\sigma'$ are $\Lambda$ and $\Lambda'$ applied to $\rho$ with an extra copy of $C$.
	We bound
	\begin{align}
	I(A':B')_\sigma\le&I(A'C:B'C')_\sigma\nonumber\\
	\le&I(AC:B'C')_{\sigma'}\nonumber\\
	=&I(A:B'C')_{\sigma'}+I(C:B'C'|A)_{\sigma'}\nonumber\\
	=&I(A:B'C')_{\sigma'}+H(C|A)_{\sigma'}\nonumber\\
	\le&I(A:B)_\rho	+H(C|A)_{\sigma'}\nonumber\\
	\le&I(A:B)_\rho	+H(C)_{\sigma'}.
	\end{align}
	The first and second inequality are due to the data processing inequality of the quantum mutual information. The first equality is the chain rule for the quantum mutual information. The second equation is the fact that the mutual information of two copies of a classical system cannot be increased by adding systems on one side, relative to any conditioning system, and is equal to its entropy. The third inequality is due to the data processing inequality for the quantum mutual information again and the last inequality is the fact that conditioning can only decrease entropy. Using the definition of $\sigma$ and $\sigma'$ this implies the claim.\qed
\end{proof}

\begin{lem}[Fannes-Audenaert inequality, Alicki-Fannes inequality, \cite{fannes1973continuity,audenaert2007sharp,alicki2004continuity,wilde2013quantum}]\label{lem:fannes}
	\textcolor{white}{Fck LaTeX!}\\Let $\rho_{ABC}$ and $\rho'_{ABC}$ be tripartite quantum states such that
	\begin{equation}
	\|\rho_{ABC}-\rho'_{ABC}\|_1\le\varepsilon.
	\end{equation}
	Then the following continuity bounds hold for entropic quantities:
	\begin{align}
	|H(A)_\rho-H(A)_{\rho'}|\le &\frac \varepsilon 2 \log\left(|A|-1\right)+h\left(\frac \varepsilon 2\right)\le\varepsilon \log\left(|A|\right)+h(\varepsilon)\nonumber\\
	|H(A|B)_\rho-H(A|B)_{\rho'}|\le &4\varepsilon \log\left(|A|\right)+2h(\varepsilon)\nonumber\\
	|I(A:B)_\rho-I(A:B)_{\rho'}|\le &5\varepsilon\log\left(\min(|A|, |B|)\right)+3h(\varepsilon)\nonumber\\
	|I(A:B|C)_\rho-I(A:B|C)_{\rho'}|\le &8\varepsilon \log\left(\min(|A|, |B|)\right)+4h(\varepsilon).
	\end{align}
\end{lem}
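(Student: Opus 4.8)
The plan is to reduce all four continuity bounds to two building blocks: the sharp single-system estimate (the Fannes--Audenaert inequality proper) and the conditional-entropy estimate (the Alicki--Fannes inequality), and then to obtain the two mutual-information bounds by writing the relevant quantity as a difference of (conditional) entropies and applying the triangle inequality. Throughout I write $\varepsilon=\|\rho-\rho'\|_1$ and use the elementary facts that $-\log|A|\le H(A|B)_\omega\le\log|A|$ for every state $\omega$, that $h$ is increasing on $[0,1/2]$, and that $\log(|A|-1)\le\log|A|$; these let me pass from the sharp forms to the coarser forms stated on the right of each line.

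For the first line I would invoke Audenaert's sharp bound: for states at trace distance $T=\tfrac12\|\rho-\rho'\|_1$ on a space of dimension $d=|A|$, one has $|H(A)_\rho-H(A)_{\rho'}|\le T\log(d-1)+h(T)$. The content of this estimate is that, for fixed $T$, the entropy gap is extremized by an explicit pair of spectra; establishing this extremal characterization is the one genuinely delicate point, and I would simply cite \cite{audenaert2007sharp} rather than reprove it. Since $T\le\varepsilon/2$ and the function $T\mapsto T\log(d-1)+h(T)$ is increasing on the relevant range, monotonicity gives the stated $\tfrac\varepsilon2\log(|A|-1)+h(\varepsilon/2)$, and the coarser bound follows from the inequalities noted above.

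The heart of the argument is the conditional-entropy bound, which I would prove by the standard concavity argument. Writing $\rho-\rho'=\Delta_+-\Delta_-$ for the positive and negative parts, both of trace $\delta:=\varepsilon/2$, set $\tau_\pm=\Delta_\pm/\delta$ and form the common dominating operator $\Omega=\rho'+\Delta_+=\rho+\Delta_-$ with $\tr\Omega=1+\delta$ and normalization $\omega=\Omega/(1+\delta)$. Now $\omega$ is a two-term convex combination of $\rho$ and $\tau_-$ with weights $\{(1+\delta)^{-1},\delta(1+\delta)^{-1}\}$, and also of $\rho'$ and $\tau_+$ with the same weights. Concavity of $H(A|B)$ gives a lower bound on $H(A|B)_\omega$ from the first decomposition, while introducing a classical flag register recording which component of the second decomposition was chosen gives a matching upper bound up to $(1+\delta)\,h(\delta/(1+\delta))$, since the flag can correlate with $AB$ by at most its own entropy. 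Subtracting the two and bounding $H(A|B)_{\tau_+}-H(A|B)_{\tau_-}\le 2\log|A|$ yields $H(A|B)_\rho-H(A|B)_{\rho'}\le 2\delta\log|A|+(1+\delta)h(\delta/(1+\delta))$; swapping the roles of $\rho$ and $\rho'$ removes the sign, and elementary estimates on the binary-entropy correction absorb everything into the stated $4\varepsilon\log|A|+2h(\varepsilon)$. I expect pinning down the constants here---in particular simplifying the $(1+\delta)h(\delta/(1+\delta))$ term---to be the main bookkeeping obstacle; the conceptual step, concavity plus the flag-register comparison, is routine.

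Finally, the two mutual-information bounds follow by linear decomposition. Using $I(A:B)=H(A)-H(A|B)$, the triangle inequality bounds $|I(A:B)_\rho-I(A:B)_{\rho'}|$ by the sum of the first-line bound and the conditional-entropy bound, and since $I(A:B)=I(B:A)$ I am free to expand around whichever of $A,B$ is smaller, producing $\min(|A|,|B|)$ and the constants $5$ and $3$. Likewise $I(A:B|C)=H(A|C)-H(A|BC)$ expresses the conditional mutual information as a difference of two conditional entropies; applying the Alicki--Fannes bound to each term and again using symmetry to replace $|A|$ by $\min(|A|,|B|)$ gives the final $8\varepsilon\log(\min(|A|,|B|))+4h(\varepsilon)$.
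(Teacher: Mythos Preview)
The paper does not prove this lemma at all: it is stated as a collection of known continuity bounds with citations to \cite{fannes1973continuity,audenaert2007sharp,alicki2004continuity,wilde2013quantum}, and is used as a black box elsewhere. Your proposal is correct and is exactly the standard derivation found in those references (Audenaert's sharp single-system bound, the Alicki--Fannes concavity-plus-flag argument for conditional entropy, and then triangle-inequality combinations for the mutual-information quantities), so there is nothing to compare against beyond noting that you have supplied what the paper merely cites.
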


\begin{lem}[Pinskers inequality, \cite{pinsker1960information}]\label{lem:pinsker}
	For quantum states $\rho_{AB}$ and $\sigma_{AB}$,
	\begin{align}
	D(\rho_A||\sigma_A)\ge&\frac 1 2\|\rho_A-\sigma_A\|_1^2\nonumber\\
	I(A:B)_\rho\ge &\frac 1 2\|\rho_{AB}-\rho_A\otimes \rho_B\|_1^2.
	\end{align}
\end{lem}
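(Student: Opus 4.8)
The plan is to prove the general relative-entropy form $D(\omega\|\xi)\ge\frac12\|\omega-\xi\|_1^2$ for arbitrary states $\omega,\xi$, and then read off both displayed inequalities as instances: the first line is the case $\omega=\rho_A$, $\xi=\sigma_A$, while the second follows from the identity $I(A:B)_\rho=D(\rho_{AB}\|\rho_A\otimes\rho_B)$ applied with $\omega=\rho_{AB}$ and $\xi=\rho_A\otimes\rho_B$. For the general bound, the strategy I would use is to reduce the quantum statement to the classical two-point case by a single, well-chosen measurement, so that the only genuinely quantum ingredient is the data-processing (monotonicity) inequality for relative entropy.

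Concretely, I would take the Jordan decomposition $\omega-\xi=P-N$ into positive and negative parts with $P,N\ge 0$ on orthogonal supports, and let $\Pi$ be the projector onto the support of $P$. The pair $\{\Pi,\mathds 1-\Pi\}$ defines a two-outcome measurement channel $\mathcal M(X)=\tr[\Pi X]\,\proj 0+\tr[(\mathds 1-\Pi)X]\,\proj 1$, whose outputs are the two-point distributions $p=(\tr[\Pi\omega],\,1-\tr[\Pi\omega])$ and $q=(\tr[\Pi\xi],\,1-\tr[\Pi\xi])$. Since $\tr[\Pi(\omega-\xi)]=\tr[P]=\frac12\|\omega-\xi\|_1$, this measurement exactly witnesses the trace distance, giving $\|p-q\|_1=\|\omega-\xi\|_1$. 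Monotonicity of relative entropy under the CPTP map $\mathcal M$ (the relative-entropy analogue of the data-processing inequality already invoked for mutual information in the preliminaries, and equivalent to the Lieb--Ruskai strong subadditivity cited there) then yields $D(\omega\|\xi)\ge D(\mathcal M(\omega)\|\mathcal M(\xi))=D(p\|q)$.

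It remains to establish the classical two-point Pinsker inequality $D(p\|q)\ge\frac12\|p-q\|_1^2$. Writing $x=\tr[\Pi\omega]$ and $y=\tr[\Pi\xi]$, this is the one-variable estimate $x\log\frac{x}{y}+(1-x)\log\frac{1-x}{1-y}\ge 2(x-y)^2$ for $x,y\in[0,1]$, which I would dispatch by fixing $y$, noting equality at $x=y$, and checking the sign of the derivative in $x$ on either side of $y$. Chaining $D(\omega\|\xi)\ge D(p\|q)\ge\frac12\|p-q\|_1^2=\frac12\|\omega-\xi\|_1^2$ completes the general bound, from which both displayed lines follow by the substitutions above. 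The main obstacle is purely organizational rather than technical: one must verify that the projection onto the positive part of $\omega-\xi$ is precisely the measurement achieving the trace distance, and that the required monotonicity of relative entropy is available to us; once these are in place, the Jordan decomposition, the $\ell_1$ computation, and the scalar inequality are all routine.
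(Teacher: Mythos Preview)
Your approach is correct and constitutes a standard proof of the quantum Pinsker inequality via reduction to the classical two-point case through the optimal distinguishing measurement. However, there is nothing to compare against: the paper does not supply its own proof of this lemma. It is stated with a citation to \cite{pinsker1960information} and used as a black box; the next lemma follows immediately with no intervening \texttt{proof} environment. So your write-up would in fact be an addition rather than an alternative.
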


The next two lemmas concern the theory of unitary designs, and representation theory, respectively.

\begin{lem}\label{lem:channel-twirl-uubar-twirl} 
	Let $\dim\hi=d$, and let $\mathrm D\subset \mathrm{U}(\hi)$ be a finite set.
	\begin{itemize}
		\item If $\mathrm{D}$ is a $\delta$-approximate channel twirl, then it is also a $d\cdot\delta$-approximate $U$-$\overline{U}$ twirl design. 
		\item If $\mathrm{D}$ is a $\delta$-approximate $U$-$\overline{U}$ twirl design, then it is also a $d\cdot\delta$-approximate channel twirl design.
	\end{itemize}
\end{lem}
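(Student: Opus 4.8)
The plan is to exploit the fact that, \emph{by definition}, the channel twirl of any CPTP map $\Lambda$ has Choi--Jamio\l kowski state $\overline{\mathcal T}_{\mathrm D}(\eta_\Lambda)$. Consequently both design notions are really two different norms on the \emph{single} Hermiticity-preserving superoperator
$$
\Phi := \overline{\mathcal T}_{\mathrm D} - \overline{\mathcal T}_{\mathsf{Haar}}
$$
acting on $\opr(\hi\otimes\hi)$: the $U$-$\overline U$-twirl error is $\|\Phi\|_\diamond$, while the channel-twirl error is $\sup_\Lambda \|\Psi_\Lambda\|_\diamond$, where $\Psi_\Lambda = \mathcal T^{ch}_{\mathrm D}(\Lambda)-\mathcal T^{ch}_{\mathsf{Haar}}(\Lambda)$ is the map with Choi state $\Phi(\eta_\Lambda)$. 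I would translate between the two using only two standard facts about the CJ correspondence. First, since the Choi state of a channel is that channel applied to one half of a maximally entangled state, it is a single admissible input in the definition of the diamond norm; hence for CPTP maps $\Lambda_0,\Lambda_1$ one always has $\|\eta_{\Lambda_0}-\eta_{\Lambda_1}\|_1 \le \|\Lambda_0-\Lambda_1\|_\diamond$. Second, \expref{Lemma}{lem:closeCJ2diamond} provides the converse with an input-dimension penalty, $\|\Lambda_0-\Lambda_1\|_\diamond \le |A|\,\|\eta_{\Lambda_0}-\eta_{\Lambda_1}\|_1$.

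The second bullet (a $U$-$\overline U$ design is a channel-twirl design) is then immediate. For any CPTP $\Lambda$ the two channel twirls are maps on the $d$-dimensional space $\hi$, so \expref{Lemma}{lem:closeCJ2diamond} gives $\|\mathcal T^{ch}_{\mathrm D}(\Lambda)-\mathcal T^{ch}_{\mathsf{Haar}}(\Lambda)\|_\diamond \le d\,\|\Phi(\eta_\Lambda)\|_1$. Since $\eta_\Lambda$ is a normalized state, $\|\Phi(\eta_\Lambda)\|_1 \le \|\Phi\|_\diamond \le \delta$, and the bound $d\delta$ follows uniformly in $\Lambda$.

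For the first bullet (a channel-twirl design is a $U$-$\overline U$ design) I would run the same machinery in reverse, but now the dimension bookkeeping is the delicate point. Applying the easy inequality above to the channels $\mathcal T^{ch}_{\mathrm D}(\Lambda)$ and $\mathcal T^{ch}_{\mathsf{Haar}}(\Lambda)$ turns the hypothesis into $\|\Phi(\eta_\Lambda)\|_1 \le \delta$ for every CPTP $\Lambda$, i.e. $\Phi$ has trace norm at most $\delta$ on every channel Choi state (those $\eta$ with $\eta_{A'}=\tau_{A'}$). To recover $\|\Phi\|_\diamond$ I would view $\overline{\mathcal T}_{\mathrm D}$ and $\overline{\mathcal T}_{\mathsf{Haar}}$ as CPTP maps on the $d^2$-dimensional space $\hi\otimes\hi$ and invoke \expref{Lemma}{lem:closeCJ2diamond} once more, reducing the goal to the Choi-state estimate $\|\eta_{\overline{\mathcal T}_{\mathrm D}}-\eta_{\overline{\mathcal T}_{\mathsf{Haar}}}\|_1 \le \delta/d$; combined with the $d^2$ penalty this yields exactly $d\delta$.

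The crux, and the step I expect to be the main obstacle, is this last Choi-state estimate. Writing the maximally entangled state on $(\hi\otimes\hi)^{\otimes 2}$ in the natural pairing as $\phi^+_{BB'}\otimes\phi^+_{A'A''}$, the superoperator Choi state is $\eta_{\overline{\mathcal T}_{\mathrm D}}-\eta_{\overline{\mathcal T}_{\mathsf{Haar}}}=(\Phi\otimes\id)(\phi^+_{BB'}\otimes\phi^+_{A'A''})$, i.e. $\Phi$ applied to a state that is \emph{entangled with a reference}, whereas the hypothesis only controls $\Phi$ on the (reference-free) channel Choi states. Bridging this gap is precisely where the representation-theoretic structure of the $U$-$\overline U$ twirl must be used: both twirls fix the invariant vector $\ket{\phi^+}$, and by Schur's lemma $\overline{\mathcal T}_{\mathsf{Haar}}$ is the projection onto the two-dimensional commutant $\spa\{\phi^+,\Pi^-\}$, which tightly constrains $\Phi$ relative to the decomposition $\hi\otimes\hi=\C\ket{\phi^+}\oplus\supp\Pi^-$. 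Using this together with the transpose trick (\expref{Lemma}{lem:genmirr}) to move the reference-side action of $\Phi$ back onto a single system, I expect the entangled input to reduce to channel Choi states at the cost of a single factor of $d$ coming from the normalization constants of $\phi^+_{A'A''}$, which is exactly the $\delta/d$ saving needed. Verifying that this reduction is lossless, so that no further dimension factor creeps in, is the technical heart of the argument.
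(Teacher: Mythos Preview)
Your argument for the second bullet is correct and is essentially what the paper does: the paper carries out the H\"older-inequality estimate by hand (writing $\ket{\psi}_{AA'}=\sqrt{d}\,\psi_{A'}^{1/2}\ket{\phi^+}$ and pulling out $\|\psi_{A'}^{1/2}\|_\infty^2$), which is precisely the content of \expref{Lemma}{lem:closeCJ2diamond} in this situation.

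For the first bullet, however, your plan has a genuine gap and is more complicated than necessary. You correctly extract from the hypothesis that $\|\Phi(\eta)\|_1\le\delta$ for every state $\eta$ on $AA'B$ with $\eta_{A'}=\tau_{A'}$. But the step where you pass to the Choi state of $\Phi$ and invoke \expref{Lemma}{lem:closeCJ2diamond} on the $d^2$-dimensional space $AA'$ only yields $\|\Phi\|_\diamond\le d^2\,\|\eta_\Phi\|_1\le d^2\delta$, since $\phi^+_{(AA')(\tilde A\tilde A')}$ has $A'$-marginal $\tau_{A'}$ and therefore the hypothesis gives you no better than $\|\eta_\Phi\|_1\le\delta$. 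Your hoped-for extra saving of a factor $d$ via Schur's lemma and the transpose trick is not substantiated, and you yourself flag it as ``the technical heart'' without carrying it out. It is not clear this route gives exactly $\delta/d$.

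The paper's proof avoids all of this with a short, purely elementary embedding trick. Given an arbitrary state $\rho_{AA'B}$, set
\[
\sigma_{AA'B}=(\mathds 1-\rho_{A'})^{1/2}\rho_{A'}^{-1/2}\,\rho_{AA'B}\,\rho_{A'}^{-1/2}(\mathds 1-\rho_{A'})^{1/2},
\qquad
\eta=\tfrac{1}{d}\bigl(\rho\otimes\proj{0}_C+\sigma\otimes\proj{1}_C\bigr).
\]
Then $\eta_{A'}=\tau_{A'}$, so $\eta$ is the Choi state of a CPTP map $A\to ABC$ and the channel-twirl hypothesis applies; using that the diamond norm dominates the trace distance of Choi states gives $\|\Phi(\eta)\|_1\le\delta$. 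Since $\Phi$ acts only on $AA'$, it respects the classical flag $C$, so $\|\Phi(\eta)\|_1=\tfrac{1}{d}\bigl(\|\Phi(\rho)\|_1+\|\Phi(\sigma)\|_1\bigr)\ge\tfrac{1}{d}\|\Phi(\rho)\|_1$, and hence $\|\Phi(\rho)\|_1\le d\delta$ for every $\rho_{AA'B}$. No representation theory is needed; the factor $d$ enters transparently as the normalization required to force the $A'$-marginal to be maximally mixed.
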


\begin{proof}
	Let $\mathrm{D}$ be a channel twirl design and $\rho_{AA'B}$ be a quantum state with $A'\cong A$ and $B$ arbitrary. Let furthermore $$\sigma_{AA'B}=(\mathds 1-\rho_{A'})^{1/2}\rho_{A'}^{-1/2}\rho_{AA'B}\rho_{A'}^{-1/2}(\mathds 1-\rho_{A'})^{1/2}.$$ Then $$\eta_{AA'BC}=\frac{1}{d}\left(\rho_{AA'B}\otimes\proj 0_C+\sigma_{AA'B}\otimes \proj 1_C\right)$$ is positive semidefinite, has trace 1 and $\eta_{A'}=\tau_{A'}$, i.e. it is the CJ-state of a quantum channel $\Lambda_{A\to ABC}$. By assumption we have that $$\left\|\mathcal T^{ch}_{\mathrm D}(\Lambda)-T^{ch}_\mathsf{Haar}(\Lambda)\right\|_\diamond\le\delta.$$ This implies in particular that the CJ states of the two channels have trace norm distance at most $\delta$, i.e.
	\begin{align}
	\delta\ge &\left\|\overline{\mathcal T}_{\mathrm D}(\eta)-\overline{\mathcal T}_\mathsf{Haar}(\eta)\right\|_1\nonumber\\
	=& \frac 1 d \left(\left\|\overline{\mathcal T}_{\mathrm D}(\rho)-\overline{\mathcal T}_\mathsf{Haar}(\rho)\right\|_1+\left\|\overline{\mathcal T}_{\mathrm D}(\sigma)-\overline{\mathcal T}_\mathsf{Haar}(\sigma)\right\|_1\right)\nonumber\\
	\ge& \frac 1 d \left\|\overline{\mathcal T}_{\mathrm D}(\rho)-\overline{\mathcal T}_\mathsf{Haar}(\rho)\right\|_1.
	\end{align}
	As $\rho$ was chosen arbitrarily this implies
	\begin{equation}
	\left\|\overline{\mathcal T}_{\mathrm D}-\overline{\mathcal T}_\mathsf{Haar}\right\|_{\diamond}\le d\cdot\delta.
	\end{equation}
	\qed
	Conversely, let $\mathrm D$ be a $U$-$\overline{U}$-twirl design, and let $\Lambda_{A\to A}$ be a CPTP map and $\ket{\psi}_{AA'}\in \hi_A\otimes\hi_{A'}$ be an arbitrary state vector with $\hi_{A'}\cong\hi_A$. For calculating the diamond norm, we can choose $\ket{\psi}_{AA'}=\sqrt{d}\psi_{A'}\ket{\phi^+}_{AA'}$. We bound
	\begin{align*}
	&\left\|\left[\left(\mathcal T^{ch}_{\mathrm D}-\mathcal T^{ch}_{\mathsf{Haar}}\right)(\Lambda)\right]_{A\to A}\left(\proj{\psi}_{AA'}\right)\right\|_1\nonumber\\=&d \left\|\left[\left(\mathcal T^{ch}_{\mathrm D}-\mathcal T^{ch}_{\mathsf{Haar}}\right)(\Lambda)\right]_{A\to A}\left(\psi_{A'}^{\frac{1}{2}}\proj{\phi^+}_{AA'}\psi_{A'}^{\frac{1}{2}}\right)\right\|_1\nonumber\\
	=&d\left\|\psi_{A'}^{\frac{1}{2}}\left[\left(\overline{\mathcal T}_{\mathrm D}-\overline{\mathcal T}_{\mathsf{Haar}}\right)\left(\eta_\Lambda\right)_{AA'}\right]\psi_{A'}^{\frac{1}{2}}\right\|_1\nonumber\\
	\le&d \|\psi_{A'}\|_\infty \left\|\left(\overline{\mathcal T}_{\mathrm D}-\overline{\mathcal T}_{\mathsf{Haar}}\right)\left(\eta_\Lambda\right)_{AA'}\right\|_1\nonumber\\
	\le& d\cdot\delta.
	\end{align*}
	Here we used the mirror lemma in the second equality, the Hölder inequality twice in the first inequality, and the fact that $\|\rho\|_\infty\le 1$ for any quantum state $\rho$, and the assumption, in the last inequality. 
\end{proof}

\begin{lem}\label{lem:Usquared}
	Let $M_{A^2B}$ be a matrix on $\hi_A^{\otimes 2}\otimes \hi_B$. Then we have the following formula for integration with respect to the Haar measure:
	\begin{equation}\label{eq:schur}
	\int U_A^{\otimes 2}M_{A^2B}(U_A^{\otimes 2})^\dagger \D U=\mathds 1_{A^2}\otimes R^{\mathds 1}_{B}+F_A\otimes R^F_{B},
	\end{equation}
	with 
	\begin{align}\label{eq:formulas}
	R^{\mathds 1}_{B}=&\frac{1}{d(d^2-1)}\left(d\tr_{A^2} M-\tr_{A^2} FM\right),\nonumber\\
	R^{F}_{B}=&\frac{1}{d(d^2-1)}\left(d\tr_{A^2} FM-\tr_{A^2} M\right).
	\end{align}
	
\end{lem}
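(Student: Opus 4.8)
The plan is to reduce the statement to Schur--Weyl duality together with a small linear-algebra computation to fix the two coefficient operators. First I would note that the map $X\mapsto\int U_A^{\otimes 2}X(U_A^{\otimes 2})^\dagger\,\D U$ is the orthogonal projection onto the commutant of the representation $U\mapsto U^{\otimes 2}$ of $\mathrm U(d)$ on $\hi_A^{\otimes 2}$. By Schur--Weyl duality for the pair $(\mathrm U(d),S_2)$ (see, e.g., \cite{fulton1991representation}), this commutant is the algebra generated by the permutation action of $S_2$, namely $\spa_\C\{\mathds 1_{A^2},F_A\}$. Since the $B$ system is untouched by the twirl, linearity in the $B$-argument then forces the output to lie in $\spa_\C\{\mathds 1_{A^2},F_A\}\otimes\opr(\hi_B)$, i.e. there exist operators $R^{\mathds 1}_B,R^F_B$ with
\begin{equation*}
\int U_A^{\otimes 2}M_{A^2B}(U_A^{\otimes 2})^\dagger\,\D U=\mathds 1_{A^2}\otimes R^{\mathds 1}_B+F_A\otimes R^F_B.
\end{equation*}
This is exactly the claimed form \eqref{eq:schur}, so it remains only to identify $R^{\mathds 1}$ and $R^F$.

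To pin down the coefficients I would take two partial traces over the $A^2$ system. Applying $\tr_{A^2}$ to the left-hand side and using unitary invariance of the trace gives $\tr_{A^2}M$, while applying it to the right-hand side gives $d^2R^{\mathds 1}+dR^F$ (using $\tr\mathds 1_{A^2}=d^2$ and $\tr F_A=d$). Next I would apply $\tr_{A^2}[F_A\,\cdot\,]$: on the left, $F_A$ commutes with $U_A^{\otimes 2}$ and $F_A^2=\mathds 1$, so the result is $\tr_{A^2}[F_AM]$; on the right it is $dR^{\mathds 1}+d^2R^F$ (using $\tr F_A=d$ and $\tr F_A^2=d^2$). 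This yields the linear system
\begin{align*}
\tr_{A^2}M&=d^2R^{\mathds 1}+dR^F,\\
\tr_{A^2}F_AM&=dR^{\mathds 1}+d^2R^F,
\end{align*}
whose coefficient matrix has determinant $d^2(d^2-1)\neq 0$. Solving it gives precisely the formulas \eqref{eq:formulas}.

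The only nontrivial input is the representation-theoretic step identifying the commutant with $\spa_\C\{\mathds 1,F\}$; everything after that is routine. I expect the main point to watch is the bookkeeping with the spectator register $B$: the twirl acts only on $A^2$, so one must check that the commutant statement (which concerns operators on $\hi_A^{\otimes 2}$) lifts correctly to operators on $\hi_A^{\otimes 2}\otimes\hi_B$, and that the two partial traces $\tr_{A^2}$ and $\tr_{A^2}[F_A\,\cdot\,]$ genuinely commute with the Haar integral. Both are immediate from linearity and $[F_A,U_A^{\otimes 2}]=0$, so no real difficulty arises.
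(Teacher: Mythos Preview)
Your proposal is correct and essentially matches the paper's proof. The paper phrases the first step via Schur's lemma and the decomposition $\hi_A^{\otimes 2}=\bigvee^2\hi_A\oplus\bigwedge^2\hi_A$, obtaining $\Pi_{\bigvee}\otimes R^{\bigvee}+\Pi_{\bigwedge}\otimes R^{\bigwedge}$ and then rewriting $\Pi_{\bigvee},\Pi_{\bigwedge}$ as $(\mathds 1\pm F)/2$; you instead invoke Schur--Weyl duality to land directly on $\spa_\C\{\mathds 1,F\}$. These are two phrasings of the same fact, and the second step---applying $\tr_{A^2}$ and $\tr_{A^2}[F\,\cdot\,]$ and solving the resulting $2\times 2$ system---is identical in both arguments.
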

\begin{proof}
	By Schur's lemma,
	\begin{equation}
	\int U_A^{\otimes 2}M_{A^2B}(U_A^{\otimes 2})^\dagger \D U=\Pi_{\bigwedge}\otimes R^{\bigwedge}+\Pi_{\bigvee}\otimes R^{\bigvee}
	\end{equation}
	for some matrices $R^{\bigwedge}$ and $R^{\bigvee}$, where $\Pi_{\bigwedge}$ is the projector onto the antisymmetric subspace and $\Pi_{\bigvee}$ the projector onto the symmetric subspace. As $\Pi_{\bigwedge}=(\mathds 1-F)/2$ and $\Pi_{\bigvee}=(\mathds 1+F)/2$, this implies the correctness of Equation \eqref{eq:schur}. The formulas \eqref{eq:formulas} follow by applying $\tr_A$ and $\tr_A(F(\cdot))$ to both sides of Equation \eqref{eq:schur} and solving the resulting system of 2 equations for $R^{\mathds 1}_{B}$ and $R^{F}_{B}$.
	\qed
\end{proof}

The final lemma characterizes CPTP maps that are invertible on their image such that the inverse is CPTP as well.

\begin{lem}\label{lem:SKQES-char}
	Let $(\tau_K,E, D)$ be a \SKQES. Then the encryption maps have the structure
	\begin{equation}
	\left(E_k\right)_{A\to C}=\left(V_k\right)_{A\hat C\to C}\left((\cdot)\otimes\sigma^{(k)}_{\hat C}\right)\left(V_k\right)_{A\hat C\to C}^\dagger,
	\end{equation}
	and the decryption maps hence must have the form
	\begin{align}\label{eq:Dkchar}
	&\left(D_k\right)_{C\to A}=\tr_{\hat C}\left[\Pi_{\supp \sigma^{k}}\left(V_k\right)_{A\hat C\to C}^\dagger\left(\cdot\right)\left(V_k\right)_{A\hat C\to C}\right]\nonumber\\
	&+\left(\hat D_k\right)_{A\hat C\to A}\left[(\mathds 1_{\hat C}-\Pi_{\supp \sigma^{k}})\left(V_k\right)_{A\hat C\to C}^\dagger\left(\cdot\right)\left(V_k\right)_{A\hat C\to C}(\mathds 1_{\hat C}-\Pi_{\supp \sigma^{k}})\right]
	\end{align}
	for some quantum states $\sigma^{(k)}_{\hat C}$, isometries $(V_k)_{C\to A\hat C}$, and some CPTP map $\hat D_k$.
\end{lem}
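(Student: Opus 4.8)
\emph{Proof proposal.} The plan is to show that correctness forces each encryption map $E_k$ to be \emph{reversible}, and then to invoke the Knill--Laflamme condition to read off the claimed ``append-ancilla-then-isometry'' structure. First I would fix $k$ and recall that the defining property of a \SKQES~(\expref{Definition}{def:SKQES}) gives $D_k\circ E_k=\id_A$, so $E_k$ is perfectly reversible by the CPTP map $D_k$. The key analytic input is that a channel admits a CPTP left inverse on the \emph{entire} input space if and only if any set of Kraus operators $\{N_i\}$ for it satisfies $N_i^\dagger N_j=\lambda_{ij}\one_A$ for some positive semidefinite matrix $\lambda=(\lambda_{ij})$. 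I would establish this either directly, by expanding the recovery identity $D_k\circ E_k=\id_A$ against a spanning set of inputs, or, more in the spirit of the surrounding decoupling arguments, by dilating $E_k$ to an isometry $W_k\colon A\to CE$, feeding in one half of $\phi^+_{AA'}$, and observing that perfect recovery of the $A'$-correlations from $C$ forces $A'$ to decouple from the environment $E$; equivalently, the complementary channel $\tr_C[W_k(\cdot)W_k^\dagger]$ is a constant (replacement) channel, which is exactly the Knill--Laflamme condition in disguise.

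Given the relation $N_i^\dagger N_j=\lambda_{ij}\one_A$, the next step is purely algebraic: diagonalize the Gram matrix $\lambda$ by a unitary change of Kraus basis to obtain operators $M_a$ with $M_a^\dagger M_b=p_a\,\delta_{ab}\,\one_A$. For each $a$ with $p_a>0$ the rescaled operator $W_a:=p_a^{-1/2}M_a\colon A\to C$ is then an isometry, and distinct $W_a$ have mutually orthogonal ranges. Introducing an ancilla register $\hat C$ with orthonormal basis $\{\ket a\}$, I set $\sigma^{(k)}_{\hat C}:=\sum_a p_a\proj a$ and define $V_k\colon A\hat C\to C$ by $V_k(\ket\psi_A\ot\ket a_{\hat C}):=W_a\ket\psi_C$; orthogonality of the ranges makes $V_k$ an isometry, and a direct computation gives $E_k(X)=\sum_a p_a W_a X W_a^\dagger=V_k\bigl(X\ot\sigma^{(k)}_{\hat C}\bigr)V_k^\dagger$, which is the asserted form of the encryption map.

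Finally, for the decryption map I would use $V_k^\dagger V_k=\one_{A\hat C}$ to verify that $V_k^\dagger E_k(X)V_k=X\ot\sigma^{(k)}_{\hat C}$, so that applying $V_k^\dagger$, projecting the ancilla onto $\supp\sigma^{(k)}$, and tracing out $\hat C$ recovers $X$ exactly; this is the first term of \eqref{eq:Dkchar}. Since the image of $E_k$ is annihilated by the complementary projector $\one_{\hat C}-\Pi_{\supp\sigma^{(k)}}$, the behaviour of any correct $D_k$ on that orthogonal sector is unconstrained by correctness and may be an arbitrary CPTP map $\hat D_k$; this supplies the second term and shows the displayed form is both achievable and forced on the relevant subspace. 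I expect the main obstacle to lie in the first paragraph --- rigorously extracting the Knill--Laflamme condition from perfect reversibility via the decoupling/complementary-channel argument --- since the remaining steps are standard diagonalization and bookkeeping.
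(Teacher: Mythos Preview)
Your proposal is correct. The decoupling variant you sketch is essentially the paper's proof: the paper feeds one half of an arbitrary pure $\phi_{AA'}$ into $E_k$, uses data processing for mutual information to obtain $I(A':C)_{E_k(\phi)}=2H(A')_\phi$, and reads off from this saturation that the purification of $\phi_{A'}$ lives entirely in $C$---equivalently, the environment decouples---whence an isometry $U^{(k)}\colon C\to A\hat C$ with $U^{(k)}E_k(\phi)(U^{(k)})^\dagger=\phi\otimes\sigma^{(k)}$, and $V_k=(U^{(k)})^\dagger$. Your direct Knill--Laflamme route via the Kraus relation $N_i^\dagger N_j=\lambda_{ij}\one_A$ followed by diagonalisation of the Gram matrix is a genuinely different, purely algebraic path to the same structural conclusion; it bypasses entropic quantities entirely and makes the orthogonal-range isometries $W_a$ and the ancilla state $\sigma^{(k)}$ explicit, at the price of invoking the error-correction condition as an external input rather than deriving it in situ. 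For the decryption map, both your argument and the paper's reduce to a one-line appeal to correctness (the paper writes only ``then follows immediately by correctness''), which is adequate since the lemma's downstream use in \expref{Lemma}{lem:E-of-offd} only ever evaluates $D_k$ on inputs where the $(\one_{\hat C}-\Pi_{\supp\sigma^{(k)}})$ block is inactive.
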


\begin{proof}
	Let $\ket{\phi}_{AA'}$ be some bipartite pure state. By the correctness of the scheme and the data processing inequality of the mutual information (see \expref{Lemma}{lem:DP-CPTPtensCP}),  we have that 
	\begin{align}
	2H(A')_\phi=&2H(A')_{E_k(\phi)} \ge I(A':C)_{E_k(\phi)}\nonumber\\
	\ge&I(A':A)_{D_k(E_k(\phi))}=2H(A')_{E_k(\phi)}=2H(A')_{\phi},
	\end{align}
	i.e. $2H(A')_\phi = I(A':C)_\psi$. The first inequality is an easy-to-check elementary fact. It is easy to see that this only holds if the purification of $\psi_{A'}=\phi_{A'}$ lies entirely in $C$, i.e. there exists an isometry $U^{(k)}_{C\to A\hat C}$ such that
	\begin{equation}
	U^{(k)}_{C\to A\hat C}E_k(\phi)\left(U^{(k)}_{C\to A\hat C}\right)^\dagger=\phi_{AA'}\otimes \sigma^{(k)}.
	\end{equation}
	note that by the linearity of $E_k$ and $U^{(k)}$, $\sigma^{(k)}$ cannot depend on $\phi$. As the state $\phi$ was arbitrary, this implies that
	\begin{equation}
	E_k=\left(U^{(k)}_{C\to A\hat C}\right)^\dagger\left((\cdot)\otimes\sigma^{(k)}\right)U^{(k)}_{C\to A\hat C},
	\end{equation}
	i.e. $E_k$ has the claimed form with $V_k=\left(U^{(k)}\right)^\dagger$. The form of the decryption map then follows immediately by correctness.
	\qed
\end{proof}

\section{Proof of characterization theorem}\label{app:proofs}

This section is dedicated to proving the characterization theorem for non-malleable quantum encryption schemes, i.e., \expref{Theorem}{thm:eps-effective-char}. We begin with two preparatory lemmas.

\begin{lem}\label{lem:E-of-offd}
	For any $\SKQES$ $(\tau, E,D)$ the map $\mathcal{E}:=|K|^{-1}\sum_{k}D_k\otimes E_{k}^T$ satisfies
	$$\mathcal{E}\left(\proj{\phi^+}_{CC'}(X_C\otimes \id_{C'})\right)=\frac{|A|}{|C|}\proj{\phi^+}_{AA'}(E_K^{\dagger}(X)\otimes \id_{A'})$$
\end{lem}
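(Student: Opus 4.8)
The plan is to work with explicit Kraus decompositions and to move the transpose channel $E_k^T$ across the maximally entangled state using the mirror lemma (\expref{Lemma}{lem:genmirr}). Write $E_k(\cdot)=\sum_\mu a_\mu^{(k)}(\cdot)(a_\mu^{(k)})^\dagger$ with $a_\mu^{(k)}\colon\hi_A\to\hi_C$ and $D_k(\cdot)=\sum_\nu b_\nu^{(k)}(\cdot)(b_\nu^{(k)})^\dagger$ with $b_\nu^{(k)}\colon\hi_C\to\hi_A$, so that $E_k^T$ has Kraus operators $(a_\mu^{(k)})^T\colon\hi_{C'}\to\hi_{A'}$. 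It is convenient to pass to the unnormalized vector $\ket{\Omega}_{SS'}=\sqrt{|S|}\,\ket{\phi^+}_{SS'}$, for which the mirror identity is free of dimension factors. Then the input reads $\tfrac1{|C|}\ket{\Omega}\bra{\Omega}_{CC'}(X_C\otimes\id_{C'})$ and, since $\proj{\phi^+}_{AA'}=\tfrac1{|A|}\ket{\Omega}\bra{\Omega}_{AA'}$, the claimed output reads $\tfrac1{|C|}\ket{\Omega}\bra{\Omega}_{AA'}(E_K^\dagger(X)\otimes\id_{A'})$; by linearity of $\mathcal E$ it therefore suffices to prove the unnormalized identity $\mathcal E(\ket{\Omega}\bra{\Omega}_{CC'}(X\otimes\id))=\ket{\Omega}\bra{\Omega}_{AA'}(E_K^\dagger(X)\otimes\id)$.

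First I would establish the transfer identity $(b_\nu\otimes a_\mu^T)\ket{\Omega}_{CC'}=(b_\nu a_\mu\otimes\id_{A'})\ket{\Omega}_{AA'}$, which is just the ricochet property $(\id_C\otimes a_\mu^T)\ket{\Omega}_{CC'}=(a_\mu\otimes\id_{A'})\ket{\Omega}_{AA'}$ (verified directly on matrix entries) followed by the action of $b_\nu$ on the $C$ register. The analogous computation for the bra side, carrying the extra factor $X_C$, gives $\bra{\Omega}_{CC'}(X\otimes\id)(b_\nu^\dagger\otimes(a_\mu^T)^\dagger)=\bra{\Omega}_{AA'}(a_\mu^\dagger X b_\nu^\dagger\otimes\id)$. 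Applying $D_k\otimes E_k^T$ termwise and summing, the $k$-th key contributes exactly $\sum_{\nu,\mu}(b_\nu a_\mu\otimes\id)\ket{\Omega}\bra{\Omega}_{AA'}(a_\mu^\dagger X b_\nu^\dagger\otimes\id)$.

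The decisive step is to invoke correctness (\expref{Definition}{def:SKQES}). Since $D_k\circ E_k=\id_A$ and the Choi matrix of the identity channel is the rank-one operator proportional to $\proj{\Omega}$, every Kraus operator of the composite must be a scalar multiple of the identity: $b_\nu^{(k)}a_\mu^{(k)}=c^{(k)}_{\nu\mu}\one_A$ with $\sum_{\nu\mu}|c^{(k)}_{\nu\mu}|^2=1$. Hence $(b_\nu a_\mu\otimes\id)\ket{\Omega}=c_{\nu\mu}\ket{\Omega}$, and the $k$-th contribution collapses to $\ket{\Omega}\bra{\Omega}\bigl(\sum_{\nu\mu}c_{\nu\mu}a_\mu^\dagger X b_\nu^\dagger\otimes\id\bigr)$. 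Trace preservation of $D_k$ gives $\sum_\nu b_\nu^\dagger b_\nu=\one_C$, and combining this with $b_\nu a_\mu=c_{\nu\mu}\one_A$ yields $\sum_\nu c_{\nu\mu}b_\nu^\dagger=a_\mu$. Substituting, the inner operator simplifies to $\sum_\mu a_\mu^\dagger X a_\mu=E_k^\dagger(X)$; averaging over $k$ produces $E_K^\dagger(X)$, and reinstating the $1/|C|$ and $\ket{\Omega}=\sqrt{|A|}\,\ket{\phi^+}$ normalizations gives precisely $\tfrac{|A|}{|C|}\proj{\phi^+}_{AA'}(E_K^\dagger(X)\otimes\id_{A'})$.

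The main obstacle I anticipate is the careful bookkeeping of dimensions, transposes, and the reject symbol. The transfer identity must be checked at the level of matrix entries to pin down the placement of the transpose and to confirm that passing to $\ket{\Omega}$ removes the $\sqrt{|A|/|C|}$ factor of \expref{Lemma}{lem:genmirr}; and because $D_k$ genuinely maps into $\overline A=A\oplus\C\ket{\bot}$, one must verify that the $\ket{\bot}$ component is inert. This is harmless: correctness forces the $\bot$-component of $b_\nu a_\mu$ to vanish, so $b_\nu a_\mu=c_{\nu\mu}\one_A$ holds with output in $A$, and only the $A$-part of each $b_\nu^\dagger$ survives the pairing against $\ket{\Omega}_{AA'}$. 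Everything else is a routine, if somewhat tedious, index manipulation.
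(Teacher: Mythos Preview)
Your argument is correct and takes a genuinely different route from the paper. The paper invokes the structural \expref{Lemma}{lem:SKQES-char} to write $E_k=V_k((\cdot)\otimes\sigma^{(k)})V_k^\dagger$ and a matching two-branch form for $D_k$, then pushes $V_k^\dagger\otimes V_k^T$ through $\ket{\phi^+}_{CC'}$ via the mirror lemma and kills the ``reject'' branch using $(\one-\Pi_k)\sigma^{(k)}=0$. You bypass all of this: working with arbitrary Kraus decompositions, you extract from correctness the scalar relation $b_\nu^{(k)}a_\mu^{(k)}=c_{\nu\mu}^{(k)}\one_A$, and then combine it with trace preservation of $D_k$ to obtain the neat identity $\sum_\nu c_{\nu\mu}^{(k)}\,b_\nu^{(k)\dagger}\big|_A=a_\mu^{(k)}$, which collapses the bra side to $E_k^\dagger(X)$. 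This is more elementary and does not require the ancilla/isometry normal form; the paper's approach, on the other hand, makes the vanishing of the second (reject) branch completely transparent once that normal form is assumed.

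One point deserves more care. Your closing remark that ``only the $A$-part of each $b_\nu^\dagger$ survives the pairing against $\ket{\Omega}_{AA'}$'' handles the ket side (where $b_\nu a_\mu$ indeed lands in $A$) but not the bra side: the operator $a_\mu^\dagger X b_\nu^\dagger$ genuinely acts on all of $\bar A$, and your trace-preservation trick only determines $\sum_\nu c_{\nu\mu}\,b_\nu^{(k)\dagger}$ on $\hi_A$, not on $\ket{\bot}$. In fact, for a generic correct $D_k$ (one that does not first measure ``in $\supp\sigma^{(k)}$ vs.\ complement'') the output of $\mathcal E$ can carry a nonzero $\ket{\phi^+}_{AA'}\bra{\bot}$ component. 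The paper avoids this because \expref{Lemma}{lem:SKQES-char} implicitly fixes $D_k$ to the projected form, for which the reject branch annihilates the maximally entangled ket; strictly speaking, that is a choice of $D_k$ rather than a consequence of correctness alone. If you read the identity as an equality on the $A\otimes A'$ block (which is how it is used downstream), your proof is complete; if you want it literally in $\opr(\bar A\otimes A')$, you should either adopt the paper's canonical $D_k$ or add a short argument for the $\bot$-column.
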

\begin{proof}
	 Using  \expref{Lemma}{lem:SKQES-char} we derive an expression for $E_k^T$,
	\begin{align}
	\overline{\tr\left[E_k^T(\overline Y)\overline X\right]}=&\tr\left[E_k^\dagger(Y)X\right]\nonumber\\
	=&\tr\left[Y_CE_k(X_A)\right]\nonumber\\
	=&\tr\left[Y_CV_k(X_A\otimes\sigma^{(k)}_{\hat C})V_k^\dagger\right]\nonumber\\
	=&\tr\left[\tr_{\hat C}\left(\sigma^{(k)}_{\hat C}V_k^\dagger YV_k\right)X\right]\nonumber\\
	=&\overline{\tr\left[\tr_{\hat C}\left(\overline\sigma^{(k)}_{\hat C}V_k^T \overline Y\overline V_k\right)\overline X\right]}.
	\end{align}
	Here we use the definition of the adjoint in the second equality and the cyclicity of the trace in the third equality. Hence
	\begin{equation}\label{eq:E-transpose}
	E_k^T=\tr_{\hat C}\left(\overline\sigma^{(k)}_{\hat C}V_k^T (\cdot)\overline V_k\right)
	\end{equation}
	Define $\Pi_k=\Pi_{\supp\sigma^{(k)}}$ to be the projector onto the support of $\sigma_k$. In the following we omit the subscripts of CP maps and isometries to save space. 
	We start with one summand in the sum defining $\mathcal E$ and omit the second summand from the expression for $D_k$ in \expref{Equation}{eq:Dkchar}
	\begin{align}\label{eq:summand1}
	&\tr_{\hat C\hat C'}\left[\left(\Pi_k\right)_{\hat C}\otimes\overline\sigma^{(k)}_{\hat C'}\right]\left[V_k^\dagger\otimes V_k^T\right]\phi^+_{CC'}X_C\left[V_k\otimes\overline V_k\right]\nonumber\\
	=&\frac{|A||\hat C|}{|C|}\tr_{\hat C\hat C'}\left[\left(\Pi_k\right)_{\hat C}\otimes\overline\sigma^{(k)}_{\hat C'}\right]\left[\phi^+_{AA'}\otimes \phi^+_{\hat C\hat C'}\right]V_k^\dagger X_CV_k\nonumber\\
	=&\frac{|A||\hat C|}{|C|}\tr_{\hat C\hat C'}\phi^+_{AA'}\otimes \phi^+_{\hat C\hat C'}V_k^\dagger X_CV_k\sigma^{(k)}_{\hat C}\nonumber\\
	=&\frac{|A||\hat C|}{|C|}\phi^+_{AA'} \bra{\phi^+}_{\hat C\hat C'}V_k^\dagger X_CV_k\sigma^{(k)}_{\hat C}\ket{\phi^+}_{\hat C\hat C'}\nonumber\\
	=&\frac{|A|}{|C|}\phi^+_{AA'}\tr_{\hat C}V_k^\dagger X_CV_k\sigma^{(k)}_{\hat C}\nonumber\\
	=&\frac{|A|}{|C|}\phi^+_{AA'}\left(E_k\right)_{C\to A}^{\dagger}(X_C).
	\end{align}
	Here we have used \expref{Lemma}{lem:genmirr} for the first  and the second equality, and in the fourth equality is due to the elementary fact that
	\begin{equation}
	\bra{\phi^+}_{\hat C\hat C'}Y_{\hat C}\ket{\phi^+}_{\hat C\hat C'}=\frac{1}{|\hat C|}\tr Y.
	\end{equation}
	Finally we have used the complex conjugate of \expref{Equation}{eq:E-transpose} in the last equation.
	Now we look at the same expression but only taking the second summand from \expref{Equation}{eq:Dkchar} into account.
	\begin{align}\label{eq:summand2}
	&\left\{\hat D_k\otimes \tr_{\hat C'}\right\}\left[\left(\mathds 1-\Pi_k\right)_{\hat C}\otimes\overline\sigma^{(k)}_{\hat C'}\right]\left[V_k^\dagger\otimes V_k^T\right]\phi^+_{CC'}X_C\left[V_k\otimes\overline V_k\right]\nonumber\\
	=&\frac{|A||\hat C|}{|C|}\left(\hat D_k\right)_{A\hat C\to A}\otimes \tr_{\hat C'}\left(\mathds 1-\Pi_k\right)_{\hat C}\otimes\overline\sigma^{(k)}_{\hat C'}\phi^+_{AA'}\otimes \phi^+_{\hat C\hat C'}V_k^\dagger X_CV_k\nonumber\\
	=&0
	\end{align}
	where the steps are the same as above and in the last equality we used that $\left(\mathds 1-\Pi_k\right)_{\hat C}\sigma_{\hat{C}}=0$. Adding Equations \eqref{eq:summand1} and \eqref{eq:summand2}, summing over $k$ and normalizing finishes the proof.
\end{proof}

\begin{lem}\label{lem:eps-like2des}
	Suppose $(\tau_K,E, D)$ satisfies \expref{Definition}{def:eps-qNM} for trivial $B$.
	Then $\mathcal{E}:=|K|^{-1}\sum_{k}D_k\otimes E_{k}^T$ satisfies
		\begin{align}
	&\Bigg\|\mathcal{E}(X)-\frac{|A|}{|C|}\bigg[\bra{\phi^+}X\ket{\phi^+}\proj{\phi^+}+\tr\left(\Pi^-X\right)\frac{1}{|C|^2-1}\left(|C|^2D_K(\tau_C)_A\otimes\tau_{A'}-\phi^+_{AA'}\right)\bigg]\Bigg\|_\diamond\nonumber\\
	&\le 2\sqrt{2\varepsilon}|A|\left(2\sqrt{|A|}|C|+1\right).
	\end{align}	
\end{lem}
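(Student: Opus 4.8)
The plan is to pass to the Choi--Jamio\l kowski picture, where the claim becomes a statement that the linear map $\mathcal E=|K|^{-1}\sum_k D_k\otimes E_k^T$ on $\opr(\hi_C^{\otimes 2})$ is close in diamond norm to the twirl-like map on the right-hand side. I would decompose the input space relative to $\ket{\phi^+}_{CC'}$ as
\[
\opr(\hi_C^{\otimes 2})=\C\proj{\phi^+}\oplus\{\ketbra{\phi^+}{v}\}\oplus\{\ketbra{v}{\phi^+}\}\oplus\Pi^-\opr(\hi_C^{\otimes 2})\Pi^-,\qquad \braket{\phi^+}{v}=0,
\]
and bound $\mathcal E-\mathcal E^{\mathrm{ideal}}$ block by block. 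The three ``$\phi^+$-touching'' blocks are all handled by \expref{Lemma}{lem:E-of-offd}: writing $\ket v=(M_C\otimes\id_{C'})\ket{\phi^+}$ with $\tr M=0$, the lemma gives $\mathcal E(\proj{\phi^+})=\tfrac{|A|}{|C|}\phi^+_{AA'}$ exactly (using $E_K^\dagger(\mathds 1_C)=\mathds 1_A$), which matches $\mathcal E^{\mathrm{ideal}}$ with no error, and $\mathcal E(\ketbra{\phi^+}{v})\propto\phi^+_{AA'}(E_K^\dagger(M^\dagger)\otimes\id_{A'})$, together with its Hermitian conjugate for the column block. Since $\mathcal E^{\mathrm{ideal}}$ vanishes on both cross blocks, their whole contribution is governed by the size of $E_K^\dagger$ on traceless operators.

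For the fourth, doubly-orthogonal block I would use the entropic hypothesis directly. Any CP map whose CJ state is supported in $\Pi^-\opr\Pi^-$ has $p_=(\Lambda,\rho)=0$, so \expref{Definition}{def:eps-qNM} with trivial $B$ and a reference $R$ maximally entangled with $A$ forces $I(AR:\tilde B)\le\varepsilon$; Pinsker's inequality (\expref{Lemma}{lem:pinsker}) then turns this into a trace-norm statement that the effective channel is close to a replacement channel of the form $\langle\rho_0\rangle_A\otimes\Lambda''_B$, i.e. that $\mathcal E$ on this block is close to $Y\mapsto\tfrac{|A|}{|C|}\tr(Y)\tfrac{1}{|C|^2-1}(|C|^2 D_K(\tau_C)_A\otimes\tau_{A'}-\phi^+_{AA'})$. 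This is where the dominant error term arises, as both Pinsker and the subsequent conversion to diamond norm cost dimension factors.

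It remains to control $E_K^\dagger$ on traceless operators, equivalently to show $\|E_K-\langle\tau_C\rangle\|_\diamond=O(|A|\sqrt\varepsilon)$. Closeness to \emph{some} constant channel $\langle\sigma^0\rangle$ is immediate from secrecy: \ITNM\ for trivial $B$ already implies $2\varepsilon$-\ITS\ (\expref{Proposition}{thm:eps-ITNMtoITS}, whose proof uses only trivial-$B$ adversaries), so Pinsker applied to the encryption of $\phi^+_{AR}$ bounds the CJ state of $E_K$ within $2\sqrt\varepsilon$ of $\sigma^0_C\otimes\tau_{A'}$, and \expref{Lemma}{lem:closeCJ2diamond} upgrades this to a diamond bound with the factor $|A|$. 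The subtle point --- and what I expect to be the conceptual heart --- is that only $\langle\tau_C\rangle^\dagger$ (not a general $\langle\sigma^0\rangle^\dagger$) annihilates traceless operators, so I must show $\sigma^0\approx\tau_C$. I would extract this by a bootstrapping argument: the exact identity $\tr_A\mathcal E(\sigma_C\otimes(\cdot)_{C'})=E_K^T$ (a direct consequence of \expref{Lemma}{lem:E-of-offd}) combined with the already-established near-ideal behaviour on the $\phi^+$ and $\Pi^-$ blocks, evaluated on rank-deficient inputs so that the not-yet-controlled cross blocks drop out, pins $E_K^T$ --- and hence $E_K$ --- to $\langle\tau_C\rangle$.

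Finally I would assemble the three block estimates by the triangle inequality, re-expressing each as a bound on the action of $\mathcal E-\mathcal E^{\mathrm{ideal}}$ on a worst-case input and converting via \expref{Lemma}{lem:closeCJ2diamond} together with the elementary identity $\|\phi^+_{AA'}(N\otimes\id)\|_1=|A|^{-1/2}\|N\|_2$; tracking the normalizations $\tfrac{|A|}{|C|}$, the cross-block factor, and the input dimension $|C|^2$ is what produces the stated $2\sqrt{2\varepsilon}\,|A|(2\sqrt{|A|}|C|+1)$. I expect the main obstacle to be twofold and intertwined: the bootstrapping that forces the constant output to be \emph{maximally mixed} (so the cross blocks truly vanish), and the book-keeping that keeps the dimension penalties from Pinsker and the CJ-to-diamond conversion from degrading the bound beyond the claimed polynomial factors.
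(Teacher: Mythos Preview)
Your proposal is correct and follows essentially the same route as the paper: block decomposition relative to $\ket{\phi^+}$, exact match on the $\phi^+$ block via \expref{Lemma}{lem:E-of-offd}, the $\Pi^-$ block handled by the entropic hypothesis plus Pinsker (with a padding trick to promote arbitrary $\Pi^-$-supported states to genuine CJ states), and then the cross blocks controlled by bootstrapping $E_K^\dagger$ on traceless operators from the $\Pi^-$ result via rank-deficient product inputs. Two small remarks: the secrecy detour you sketch is a red herring---the paper never invokes \expref{Proposition}{thm:eps-ITNMtoITS} here but goes straight to the bootstrapping you describe---and the identity $\tr_A\mathcal E(\sigma_C\otimes(\cdot)_{C'})=E_K^T$ is not a consequence of \expref{Lemma}{lem:E-of-offd} but simply of the trace-preservation of $D_k$.
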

\begin{proof}
	It follows directly from the fact that $(\tau_K,E, D)$ is a \SKQES~together with \expref{Lemma}{lem:genmirr} that
	\begin{equation}\label{eq:like2design1-in-eps-proof}
	\mathcal E(\phi^+_{CC'})=\frac{|A|}{|C|}\phi^+_{AA'}.
	\end{equation}
	
	Let $\Lambda^{(i)}_{C\to C\tilde B_1}$, $i=0,1$ be two attack maps such that $\eta_{\Lambda^{(i)}}\ket{\phi^+}=0$ for $i=0,1$ and define $$\Lambda_{C\to C\tilde B_1\tilde B_2}=\frac{1}{2}\sum_{i=0,1}\proj i_{\tilde B_2}\otimes \Lambda^{(i)}.$$ The the $\varepsilon$-$\ITNM$ property implies
	$$I(AA':\tilde B_1\tilde B_2)_{\eta_{\tilde\Lambda}}\le \varepsilon,$$
	 and therefore, using Pinsker's inequality, \expref{Lemma}{lem:pinsker},
	\begin{align}
		&\Bigg\|	\frac{1}{2}\sum_{i=0,1}\proj i_{\tilde B}\otimes \left(\eta_{\tilde\Lambda^{(i)}}\right)_{CC'\tilde B_1}\nonumber\\
		&-\frac 1 4\left(\sum_{i=0,1}\proj i_{\tilde B}\otimes \left(\eta_{\tilde\Lambda^{(i)}}\right)_{\tilde B_1}\right)\otimes\left(\sum_{i=0,1} \left(\eta_{\tilde\Lambda^{(i)}}\right)_{CC'}\right)\Bigg\|_1\le\sqrt{2\varepsilon}.
	\end{align}
	Observe that
	\begin{align}\label{eq:tildecj}
		\eta_{\tilde \Lambda}=&\frac{1}{|K|}\sum_kD_k\circ\Lambda\circ E_k(\phi^+_{AA'})\nonumber\\
		=&\frac{|C|}{|A|}\frac{1}{|K|}\sum_k\left(D_k\otimes E_k^T\right)\circ\Lambda(\phi^+_{CC'})\nonumber\\
		=&\frac{|C|}{|A|}\mathcal E\circ\Lambda(\phi^+_{CC'}).
	\end{align}
	Setting $\left(\eta_{\Lambda^{(0)}}\right)_{CC'\tilde B_1}=\tau^-_{CC'}\otimes\left(\eta_{\Lambda^{(1)}}\right)_{\tilde B_1} $, 
	we get
	\begin{align}\label{eq:etanull}
		\eta_{\tilde\Lambda^{(0)}}=&\frac{|C|}{|A|}\mathcal E(\tau^-)\otimes\left(\eta_{\Lambda^{(1)}}\right)_{\tilde B_1} \nonumber\\
		=&\frac{|C|}{|A|}\frac{1}{|C|^2-1}\left(|C|^2 \mathcal{E}(\tau_{CC'})-\mathcal{E}(\phi^+_{CC'})\right)\otimes\left(\eta_{\Lambda^{(1)}}\right)_{\tilde B_1} \nonumber\\
		=&\frac{1}{|C|^2-1}\left(|C|^2 D_K(\tau_C)\otimes\tau_A -\phi^+_{AA'}\right)\otimes\left(\eta_{\Lambda^{(1)}}\right)_{\tilde B_1} .
	\end{align}
	and therefore
	\begin{align}\label{eq:eps-like2des1}
	&\Bigg\|\frac{1}{|C|^2-1}\left(|C|^2 D_K(\tau_C)\otimes\tau_A -\phi^+_{AA'}\right)\otimes\left(\eta_{\Lambda^{(1)}}\right)_{\tilde B_1}-\frac{|C|}{|A|}\mathcal{E}\left(\left(\eta_{\Lambda^{(1)}}\right)_{CC'\tilde B_1}\right)\Bigg\|_1\le 2\sqrt{2\varepsilon}
	\end{align}
	for all $\Lambda^{(1)}$.
	For any state $\rho_{CC'\tilde B_1}$ with $\rho_{CC'\tilde B}\ket{\phi^+}_{CC'}=0$, we define the state
	$$\rho'_{CC'\tilde B_1\tilde B_2}=\frac{1}{C}\left(\proj 0_{\tilde B_2}\otimes\rho_{CC'\tilde B_1}+\proj{1}_{\tilde B_2}\otimes \left[\left((\mathds{1}_C-\rho_C)\otimes V_{C'}\right)\phi^+\left((\mathds{1}_C-\rho_C)\otimes V_{C'}\right)\right]\otimes\rho_{\tilde B_2}\right).$$
	Here, $V$ is a unitary such that $\tr(\mathds{1}_C-\rho_C)V_C^T=0$. It is easy to see that such a unitary always exists, the existence is equivalent to the fact that any $|C|$-tuple of real numbers is the ordered list of side lengths of a polygon in the complex plain. Note that $\rho'_{CC'\tilde B_1\tilde B_2}\ket{\phi^+}_{CC'}=0$, and $\rho'_{C'}=\tau_{C'}$.
	Together with the triangle inequality, equation \eqref{eq:eps-like2des1} implies therefore that
	\begin{eqnarray}
		\frac{1}{|C|}&\bigg\|&\frac{|C|}{|A|}\mathcal{E}(\rho)-\frac{1}{|C|^2-1}\left(|C|^2 D_K(\tau_C)\otimes\tau_A -\phi^+_{AA'}\right)\otimes\rho_{\tilde B_1}\bigg\|_1\nonumber\\
		+&\bigg\|&\frac{|C|}{|A|}\mathcal{E}\left[\left((\mathds{1}_C-\rho_C)\otimes V_{C'}\right)\phi^+\left((\mathds{1}_C-\rho_C)\otimes V_{C'}\right)\right]\nonumber\\
		&&-\frac{|C|-1}{|C|}\frac{1}{|C|^2-1}\left(|C|^2 D_K(\tau_C)\otimes\tau_A -\phi^+_{AA'}\right)\bigg\|_1\le 2\sqrt{2\varepsilon},\nonumber
	\end{eqnarray}
	i.e. in particular
	\begin{equation}
		\bigg\|\frac{|C|}{|A|}\mathcal{E}(\rho)-\frac{1}{|C|^2-1}\left(|C|^2 D_K(\tau_C)\otimes\tau_A -\phi^+_{AA'}\right)\otimes\rho_{\tilde B_1}\bigg\|_1\le 2\sqrt{2\varepsilon}|C|.\nonumber
	\end{equation}
	As $\rho$ was arbitrary we have proven that
	\begin{equation}\label{eq:diamond-E}
	\bigg\|\frac{|C|}{|A|}\mathcal{E}-\left\langle\frac{1}{|C|^2-1}\left(|C|^2 D_K(\tau_C)\otimes\tau_A -\phi^+_{AA'}\right)\right\rangle\bigg\|_\diamond\le 2\sqrt{2\varepsilon}|C|.
	\end{equation}

	The only fact that is left to show is, that $\|\mathcal{E}(\ketbra{\phi^+}{v})\|_1$ is small for all normalized $\ket{v}$ such that $\braket{\phi^+}{v}=0$.
	To this end, observe that $\tr_{A}\circ \mathcal E(\sigma_C\otimes(\cdot)_{C'})=E_K^T$ for all  quantum states $\sigma_C$. Let $\rho_C$ be any quantum state that does not have full rank, note that such states span all of $\opr(\hi_C)$, and for hermitian operators there exists a decomposition into such operators that saturates the triangle inequality. Taking a quantum state $\sigma_C$ such that $\bra{\phi^+}\rho\otimes\sigma\ket{\phi^+}=\frac{1}{|C|}\tr\rho_C\sigma_C^T=0$ (the first equality is the mirror lemma \ref{lem:genmirr}), we have 
	$$\left\|\mathcal{E}(\rho\otimes\sigma)-\frac{|A|}{|C|}\frac{1}{|C|^2-1}\left(|C|^2 D_K(\tau_C)\otimes\tau_A -\phi^+_{AA'}\right)\right\|_1\le 2\sqrt{2\varepsilon}|A|$$
	 according to what we have already proven. 
		Using inequality \eqref{eq:diamond-E} we arrive at
	\begin{equation}
	\left\|E_K^\dagger(X)-\frac{|A|}{|C|}\tau_A\tr(X)\right\|_{1}\le 2\sqrt{2\varepsilon}|A|\|X\|_1
	\end{equation}
	For Hermitian matrices $X$ and therefore
	\begin{equation}\label{eq:EK-bound}
	\left\|E_K^\dagger(X)-\frac{|A|}{|C|}\tau_A\tr(X)\right\|_{1}\le 4\sqrt{2\varepsilon}|A|\|X\|_1
	\end{equation}
	For arbitrary $X$.
	 We can write $\ket{v}_{CC'}=X_C\ket{\phi^+}_{CC'}$ for some traceless matrix $X_C$. Now we calculate
	\begin{align}
		\left\|\mathcal E(\ketbra{\phi^+}{v}_{CC'})\right\|_1=&\left\|\frac{|A|}{|C|}\proj{\phi^+}_{AA'}\left(E^\dagger_K(X^\dagger)\right)_A\right\|_1\nonumber\\
		=&\frac{|A|}{|C|}\left\|\left(E^\dagger_K(X)\right)_A\ket{\phi^+}_{AA'}\right\|_2\nonumber\\
		=&\frac{\sqrt{|A|}}{|C|}\left\|E^\dagger_K(X)\right\|_2\nonumber\\
		\le&\frac{\sqrt{|A|}}{|C|}\left\|E^\dagger_K(X)\right\|_1\nonumber\\
		\le&\frac{|A|^{3/2}}{|C|}4\sqrt{2\varepsilon}\|X\|_1\nonumber\\
		\le&4\sqrt{2\varepsilon}|A|^{3/2}.
	\end{align}
	The first equation is \expref{Lemma}{lem:E-of-offd}, the sencond and third equations are easily verified, the first inequality is a standard norm inequality, the second inequality is Equation \eqref{eq:EK-bound}, and the last inequality follows from the normalization of $\ket v$.
	By the Schmidt decomposition, we get a stabilized version of this inequality,
	\begin{align}
	\left\|\mathcal E(\ket{\phi^+}_{CC'}\ket{\alpha}_{\tilde B_1}\bra{v}_{CC'\tilde B_1})\right\|_1\le&2\sqrt{2\varepsilon}|A|^{3/2},
	\end{align}
	for all $\ket\alpha_{\tilde B_1}$ and all $\ket{v}_{CC'\tilde B}$ such that $\braket{\phi^+}{v}=0$
	Combining everything we arrive at
	\begin{align}
	&\Bigg\|\mathcal{E}(X)-\frac{|A|}{|C|}\bigg[\bra{\phi^+}X\ket{\phi^+}\proj{\phi^+}\nonumber\\
	&+\tr\left(\Pi^-X\right)\frac{1}{|C|^2-1}\left(|C|^2D_K(\tau_C)_A\otimes\tau_{A'}-\phi^+_{AA'}\right)\bigg]\Bigg\|_\diamond\le 2\sqrt{2\varepsilon}|A|\left(4\sqrt{|A|}+1\right).
	\end{align}	
	\qed
\end{proof}

We are now ready to prove the characterization theorem (i.e., \expref{Theorem}{thm:eps-effective-char}) in the $\varepsilon$-approximate setting. We remark that the exact setting (stated and sketched as \expref{Theorem}{thm:effective-char} in the main text) is simply the case where $\varepsilon = 0$.
\begin{thm}[Precise version of \expref{Theorem}{thm:eps-effective-char}]\label{thm:eps-effective-char-app}
	Let $\Pi=(\tau, E,D)$ be a \SKQES.
	\begin{enumerate}
		\item If $\Pi$ is $\varepsilon$-\ITNM, then any attack map $\Lambda_{CB\to C\tilde B}$ results in an effective map $\tilde\Lambda_{AB \to A{\tilde B}}$ fulfilling
		\begin{equation}\label{eq:eps-effective-map}
			\left\|\tilde\Lambda_{AB \to A{\tilde B}}-\tilde \Lambda^{\mathrm{exact}}_{AB\to A\tilde B}\right\|_\diamond\le 2\sqrt{2\varepsilon}|A|^4|C|\left(4\sqrt{|A|}+1\right),
		\end{equation}
		where
		\begin{equation*}
		\tilde \Lambda^{\mathrm{exact}}_{AB\to A\tilde B}=\id_A\otimes \Lambda'_{B\to\tilde B}+\frac{1}{|C|^2-1}\left(|C|^2\left\langle D_K(\tau)\right\rangle-\id\right)_A\otimes \Lambda''_{B\to\tilde B},
		\end{equation*}
		with $\Lambda' =\tr_{CC'}[\phi^+_{CC'}\Lambda(\phi^+_{CC'}\otimes (\cdot))]$ and $\Lambda'' =\tr_{CC'}[\Pi^-_{CC'}\Lambda(\phi^+_{CC'}\otimes (\cdot))].$
		\item Conversely, if for a scheme all effective maps fulfil Equation \eqref{eq:eps-effective-map} with the right hand side replaced by $\varepsilon$, then it is $5\varepsilon(\log(|A|)+r)+3h(\varepsilon)$-\ITNM, where $r$ is a bound on the size of the honest user's side information.
	\end{enumerate}
\end{thm}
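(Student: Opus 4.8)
The plan is to prove the two implications separately, in both cases reducing everything to the characterization of the ``doubled'' map $\mathcal{E}=|K|^{-1}\sum_k D_k\otimes E_k^T$ already supplied by \expref{Lemma}{lem:eps-like2des}. \textbf{Forward direction.} First I would relate the effective map to $\mathcal{E}$ at the level of Choi--Jamio\l kowski (CJ) states. Feeding the product reference $\phi^+_{AA'}\otimes\phi^+_{BB'}$ into $\tilde\Lambda_{AB\to A\tilde B}$ and pushing the encryption across the maximally entangled state with the mirror lemma (\expref{Lemma}{lem:genmirr}), exactly as in the derivation of \expref{Equation}{eq:tildecj}, one obtains $\eta_{\tilde\Lambda}=\tfrac{|C|}{|A|}\,\mathcal{E}_{CC'\to AA'}\bigl(\Lambda_{CB\to C\tilde B}(\phi^+_{CC'}\otimes\phi^+_{BB'})\bigr)$. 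I would then invoke \expref{Lemma}{lem:eps-like2des}, which requires only $\varepsilon$-\ITNM~against trivial side information (the strengthening announced in the main text), to replace $\tfrac{|C|}{|A|}\mathcal{E}$ by its ideal form in diamond norm; since the diamond norm already accounts for an arbitrary passive reference, this bounds the CJ-state trace distance regardless of what sits on $\tilde B B'$. The key recognition step is that applying the ideal map to $\Lambda(\phi^+_{CC'}\otimes\phi^+_{BB'})$ reproduces $\eta_{\tilde\Lambda^{\mathrm{exact}}}$: the $\bra{\phi^+}(\cdot)\ket{\phi^+}_{CC'}$ component is the CJ state of $\Lambda'=\tr_{CC'}[\phi^+_{CC'}\Lambda(\phi^+_{CC'}\otimes(\cdot))]$, and the $\tr(\Pi^-_{CC'}\,\cdot\,)$ component is the CJ state of $\Lambda''$, which are the two maps named in the statement.

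The remaining task in the forward direction is to pass from closeness of CJ states to closeness of the channels in diamond norm, for which the naive application of \expref{Lemma}{lem:closeCJ2diamond} to an $AB$-input would cost an unwanted factor of $|B|$. The delicate point, which I expect to be the main obstacle, is to keep the estimate independent of the adversary's memory: I would write $\tilde\Lambda$ and $\tilde\Lambda^{\mathrm{exact}}$ as one and the same \emph{fixed} pre- and post-processing of $\mathcal{E}$, respectively its ideal form --- the encryption-transpose encoding of $A$ into $CC'$ together with $\Lambda$ on the adversarial side, followed by the $A$-register CJ reconstruction --- while the register $B$ is merely carried through passively into $\tilde B$. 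Submultiplicativity of the diamond norm then transfers the bound of \expref{Lemma}{lem:eps-like2des} with dimension factors depending only on $|A|$ and $|C|$ (giving the stated $2\sqrt{2\varepsilon}|A|^4|C|(4\sqrt{|A|}+1)$), and not on $|B|$.

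\textbf{Converse direction.} Here the hypothesis gives $\|\tilde\Lambda-\tilde\Lambda^{\mathrm{exact}}\|_\diamond\le\varepsilon$, hence $\|\tilde\Lambda(\rho)-\tilde\Lambda^{\mathrm{exact}}(\rho)\|_1\le\varepsilon$ for every input $\rho_{ABR}$. I would first show that the \emph{exact} map obeys the \ITNM~inequality with no slack. Write $\tilde\Lambda^{\mathrm{exact}}=\id_A\otimes\Lambda'+M_A\otimes\Lambda''$ with $M=\tfrac{1}{|C|^2-1}(|C|^2\langle D_K(\tau)\rangle-\id)$. The crucial fact is that $M$ is a genuine channel: its CJ operator is proportional to $\mathcal{E}(\Pi^-_{CC'})$, which is positive semidefinite because $\mathcal{E}$ is completely positive and $\Pi^-\ge 0$, while $\tr M(\,\cdot\,)=\tr(\,\cdot\,)$ follows from $\tr D_K(\tau)=1$; thus $M$ is CPTP. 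Since $\id_A$ and $M$ are CPTP, $\Lambda'$ and $\Lambda''$ are CP, and $\Lambda'+\Lambda''=\tr_{CC'}\Lambda(\phi^+_{CC'}\otimes(\cdot))$ is TP, I can apply \expref{Lemma}{lem:DP-CPTPtensCP} with a binary classical flag recording the branch (adjoining the passive reference $R$ to the plaintext side), obtaining $I(AR:\tilde B)_{\tilde\Lambda^{\mathrm{exact}}(\rho)}\le I(AR:B)_\rho+H(\mathrm{flag})$. The branch weights are $\tr\Lambda'(\rho_B)=p_{=}(\Lambda,\rho)$ and $\tr\Lambda''(\rho_B)=1-p_{=}(\Lambda,\rho)$, so $H(\mathrm{flag})=h(p_{=}(\Lambda,\rho))$, which is precisely the exact bound.

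Finally I would transfer this to the real map by continuity. Applying the Alicki--Fannes estimate for mutual information (\expref{Lemma}{lem:fannes}) to the $\varepsilon$-close outputs $\tilde\Lambda(\rho)$ and $\tilde\Lambda^{\mathrm{exact}}(\rho)$ gives $I(AR:\tilde B)_{\tilde\Lambda(\rho)}\le I(AR:\tilde B)_{\tilde\Lambda^{\mathrm{exact}}(\rho)}+5\varepsilon\log(\min(|AR|,|\tilde B|))+3h(\varepsilon)$, and bounding $\log|AR|\le\log|A|+r$ through the assumed memory bound $r$ on $\log|R|$ yields the advertised error $5\varepsilon(\log|A|+r)+3h(\varepsilon)$. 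Besides the $|B|$-independence of the forward bound, the other point that must be handled with care is exactly the positivity of $M$, since that is what legitimizes the use of \expref{Lemma}{lem:DP-CPTPtensCP} and is not apparent from the algebraic expression for $M$.
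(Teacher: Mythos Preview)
Your converse direction matches the paper's argument, and your explicit verification that $M=\frac{1}{|C|^2-1}(|C|^2\langle D_K(\tau)\rangle-\id)$ is CPTP via $\eta_M\propto\mathcal{E}(\Pi^-_{CC'})\ge 0$ is a point the paper leaves implicit; this is indeed what legitimizes the use of \expref{Lemma}{lem:DP-CPTPtensCP}, after which Alicki--Fannes gives the stated error exactly as you say.

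In the forward direction you correctly isolate the CJ relation $\eta_{\tilde\Lambda}=(|C|/|A|)\,\mathcal{E}\circ\Lambda(\phi^+_{CC'}\otimes\phi^+_{BB'})$ and the main obstacle (the unwanted $|B|$ factor from a naive CJ-to-diamond conversion). However, the proposed resolution by ``submultiplicativity'' is a genuine gap. There is no evident way to realise $\tilde\Lambda$ as a \emph{fixed} composition $\mathrm{Post}\circ(\mathcal{E}\otimes\id)\circ\mathrm{Pre}$ with $\|\mathrm{Pre}\|_\diamond,\|\mathrm{Post}\|_\diamond$ independent of $|B|$: the ``$A$-register CJ reconstruction'' you invoke is the map $\rho_A\mapsto|A|\tr_{A'}[\rho_{A'}^T(\cdot)]$, which depends on the input and is not a post-processing channel; the ``encryption-transpose encoding of $A$ into $CC'$'' is left undefined; and $B$ is not carried passively, since it feeds into $\Lambda$.

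The paper's device is different and concrete. For an arbitrary pure test state, take the Schmidt decomposition across $AA'$ versus $BB'$,
\[
|\psi\rangle_{AA'BB'}=\sum_{i=0}^{|A|^2-1}\sqrt{p_i}\,|\alpha_i\rangle_{AA'}|\beta_i\rangle_{BB'},\qquad |\alpha_i\rangle_{AA'}=X^{(i)}_{A'}|\phi^+\rangle_{AA'}.
\]
Because neither $\tilde\Lambda$ nor $\tilde\Lambda^{\mathrm{exact}}$ acts on $A'$, the factors $X^{(i)}_{A'},(X^{(j)}_{A'})^\dagger$ pull outside, leaving $(\tilde\Lambda-\tilde\Lambda^{\mathrm{exact}})(\phi^+_{AA'}\otimes|\beta_i\rangle\langle\beta_j|_{BB'})$; and \emph{this} expression, with the $A$-input fixed to $\phi^+$, is exactly $(|C|/|A|)(\mathcal{E}-\mathcal{E}^{\mathrm{ideal}})\bigl(\Lambda(\phi^+_{CC'}\otimes|\beta_i\rangle\langle\beta_j|_{BB'})\bigr)$. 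Now the diamond-norm bound of \expref{Lemma}{lem:eps-like2des} applies term by term (the $\tilde BB'$ content sits in the reference slot of $\mathcal{E}$), H\"older's inequality controls the conjugation by $X^{(i)}$ via $\|X^{(i)}\|_\infty$, and the triangle-inequality sum costs only $\sum_{i,j}\sqrt{p_ip_j}\le|A|^2$ because the Schmidt rank across $AA'|BB'$ is at most $|A|^2$. This Schmidt-plus-H\"older step is how the $|B|$-independence is actually obtained, and it is what should replace your submultiplicativity sketch.
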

\begin{proof}
	We start with \textit{1.} We want to bound the diamond norm distance between the effective map $\tilde\Lambda$ resulting from an attack $\Lambda$ and the idealized effective map $\tilde\Lambda^{\mathrm{exact}}$. Let 
	$$\ket\psi_{AA'BB'}=\sum_{i=0}^{|A|^2-1}\sqrt{p_i}\ket{\alpha_i}_{AA'}\otimes\ket{\beta_i}_{BB'}$$
	 be an arbitrary pure state given in its Schmidt decomposition across the bipartition $AA'$ vs. $BB'$. We can Write $\ket{\alpha_i}_{AA'}=X^{(i)}_{A'}\ket{\phi^+}$ for some matrices $X^{(i)}$ satisfying $\|X^{(i)}\|_\infty\le|A|$. We calculate the action of $\tilde \Lambda$ on $\ketbra{\alpha_i}{\alpha_j}_{AA'}\otimes\ketbra{\beta_i}{\beta_j}_{BB'}$,
	 \begin{align}
	 &\tilde\Lambda^{\mathrm{exact}}_{AB \to A{\tilde B}}(\ketbra{\alpha_i}{\alpha_j}_{AA'}\otimes\ketbra{\beta_i}{\beta_j}_{BB'})=X^{(i)}_{A'}\bigg(\proj{\phi^+}_{AA'}\otimes\Lambda'_{B\to \tilde B}(\ketbra{\beta_i}{\beta_j}_{BB'})\nonumber\\
	 &+\frac{1}{|C|^2-1}\left(|C|^2 D_K(\tau)_A\otimes\tau_{A'}-\proj{\phi^+}_{AA'}\right)\otimes\Lambda''_{B\to \tilde B}(\ketbra{\beta_i}{\beta_j}_{BB'})\bigg)X^{(j)}_{A'}.
	 \end{align}
	In a similar way we get
	\begin{align}
	\tilde\Lambda_{AB \to A{\tilde B}}(\ketbra{\alpha_i}{\alpha_j}_{AA'}\otimes\ketbra{\beta_i}{\beta_j}_{BB'})&=X^{(i)}_{A'}\tilde\Lambda_{AB \to A{\tilde B}}(\proj{\phi^+}_{AA'}\otimes\ketbra{\beta_i}{\beta_j}_{BB'})X^{(i)}_{A'}\nonumber\\
	&=\frac{|C|}{|A|}X^{(i)}_{A'}\mathcal{E}_{CC'\to AA'}\circ\Lambda_{CB \to C{\tilde B}}(\proj{\phi^+}_{CC'}\otimes\ketbra{\beta_i}{\beta_j}_{BB'})X^{(i)}_{A'}.
	\end{align}
	Using \expref{Lemma}{lem:eps-like2des} we bound
	\begin{align}
		&\left\|\left(\tilde\Lambda_{AB \to A{\tilde B}}-\tilde\Lambda^{\mathrm{exact}}_{AB \to A{\tilde B}}\right)(\ketbra{\alpha_i}{\alpha_j}_{AA'}\otimes\ketbra{\beta_i}{\beta_j}_{BB'})\right\|_1\nonumber\\
		=&\left\|X^{(i)}_{A'}\left(\tilde\Lambda_{AB \to A{\tilde B}}-\tilde\Lambda^{\mathrm{exact}}_{AB \to A{\tilde B}}\right)(\proj{\phi^+}_{AA'}\otimes\ketbra{\beta_i}{\beta_j}_{BB'})X^{(j)}\right\|_1\nonumber\\
		\le&\left\|X^{(i)}\right\|_\infty\left\|X^{(j)}\right\|_\infty\left\|\left(\tilde\Lambda_{AB \to A{\tilde B}}-\tilde\Lambda^{\mathrm{exact}}_{AB \to A{\tilde B}}\right)(\proj{\phi^+}_{AA'}\otimes\ketbra{\beta_i}{\beta_j}_{BB'})\right\|_1\nonumber\\
		=&\left\|X^{(i)}\right\|_\infty\left\|X^{(j)}\right\|_\infty\bigg\|\frac{|C|}{|A|}\mathcal E_{CC'\to AA'}\circ\Lambda_{CB \to C{\tilde B}}(\proj{\phi^+}_{CC'}\otimes\ketbra{\beta_i}{\beta_j}_{BB'})\nonumber\\
		&~~~~~~~~~~~~~~~~~~~~~~~~~~~-\tilde\Lambda^{\mathrm{exact}}_{AB \to A{\tilde B}}(\proj{\phi^+}_{AA'}\otimes\ketbra{\beta_i}{\beta_j}_{BB'})\bigg\|_1\nonumber\\
		\le&2\sqrt{2\varepsilon}|A|^2|C|\left(4\sqrt{|A|}+1\right).
	\end{align}
	The inequalities result from applying H\"older's inequality twice, and  \expref{Lemma}{lem:eps-like2des}, respectively.	Using the triangle inequality we get
	\begin{align}
		\left\|\left(\tilde\Lambda_{AB \to A{\tilde B}}-\tilde\Lambda^{\mathrm{exact}}_{AB \to A{\tilde B}}\right)(\proj{\psi}_{AA'BB'})\right\|_1\le&2\sqrt{2\varepsilon}|A|^2|C|\left(4\sqrt{|A|}+1\right)\sum_{i,j=0}^{|A|^2-1}\sqrt{p_ip_j}\nonumber\\
		\le&2\sqrt{2\varepsilon}|A|^4|C|\left(4\sqrt{|A|}+1\right).
	\end{align}
	As $\ket{\psi}$ was arbitrary, we have proven
	\begin{align}
	\left\|\tilde\Lambda_{AB \to A{\tilde B}}-\tilde\Lambda^{\mathrm{exact}}_{AB \to A{\tilde B}}\right\|_\diamond\le&2\sqrt{2\varepsilon}|A|^4|C|\left(4\sqrt{|A|}+1\right).
	\end{align}

	Now let us prove $2.$ Let $\Lambda_{CB\to C\tilde B}$ again be an arbitrary attack map, and assume that the resulting effective map is $\varepsilon$-close to $\tilde\Lambda^{\mathrm{exact}}_{AB\to A\tilde B}$. Observe that $p^{=}(\Lambda,\rho)=\tr\Lambda'(\rho_B)$.
	
	By \expref{Lemma}{lem:fannes} and \expref{Lemma}{lem:DP-CPTPtensCP}, this implies
	\begin{equation}
	I(AR:\tilde B)_{\tilde\Lambda(\rho)}\le I(AR:B)_\rho+h(p^=(\Lambda,\rho))+5\varepsilon\log(|A||R|)+3h(\varepsilon)
	\end{equation}
	with the help of \expref{Lemma}{lem:DP-CPTPtensCP} (given in the appendix of the main article).
	\qed
\end{proof}

\end{document}